\documentclass[aps,pra,reprint,twocolumn,floatfix,superscriptaddress,notitlepage,usenames,dvipsnames,svgnames,table,footinbib,longbibliography]{revtex4-1}
\usepackage[utf8]{inputenc}
\usepackage[T1]{fontenc}
\usepackage{graphicx}
\usepackage{grffile}
\usepackage{wrapfig}
\usepackage{rotating}
\usepackage[normalem]{ulem}
\usepackage{amsmath}
\usepackage{textcomp}
\usepackage{amssymb}
\usepackage{capt-of}

\usepackage{parskip}
\usepackage{mathrsfs}
\usepackage[margin= 0.75in]{geometry}
\usepackage[braket, qm]{qcircuit}

\usepackage[english]{babel}
\usepackage{letltxmacro}
\LetLtxMacro{\ORIGselectlanguage}{\selectlanguage}
\makeatletter
\DeclareRobustCommand{\selectlanguage}[1]{%
  \@ifundefined{alias@\string#1}
    {\ORIGselectlanguage{#1}}
    {\begingroup\edef\x{\endgroup
      \noexpand\ORIGselectlanguage{\@nameuse{alias@#1}}}\x}%
}
\newcommand{\definelanguagealias}[2]{%
  \@namedef{alias@#1}{#2}%
}
\makeatother
\definelanguagealias{en}{english}

\usepackage{times}
\usepackage{bbm}
\usepackage{etoolbox}
\makeatletter
\makeatother
\usepackage{tcolorbox} 
\usepackage{fleqn} 
\usepackage{tikz}
\usepackage{amsthm}
\usepackage{amssymb}
\usetikzlibrary{arrows.meta}
\usepackage{mathtools}



\usepackage{bm}
\usepackage{dsfont}
\usepackage[font=small,labelfont=bf,justification=justified,format=plain]{caption}
\usepackage{subcaption}

\usepackage{thmtools}
\usepackage{thm-restate}
\newtheorem{thm}{Theorem}
\newtheorem{prop}[thm]{Proposition}


\definecolor{blue-violet}{rgb}{0.54, 0.17, 0.89}
\usepackage{hyperref}
\hypersetup{
 pdfauthor={The quantum coherent people},
 pdftitle={Open OTOC},
 pdflang={English},colorlinks=true,linkcolor=RubineRed,citecolor=blue-violet,urlcolor=Cerulean} 
\usepackage[capitalise]{cleveref}

\begin{document}
\title{Information Scrambling and Chaos in Open Quantum Systems}

\author{Paolo Zanardi}
\email [e-mail: ]{zanardi@usc.edu}

\author{Namit Anand}
\email [e-mail: ]{namitana@usc.edu}

\affiliation{Department of Physics and Astronomy, and Center for Quantum Information Science and Technology, University of Southern California, Los Angeles, California 90089-0484, USA}

\date{\today}

\begin{abstract}
Out-of-time-ordered correlators (OTOCs) have been extensively used over the last few years to study information scrambling and quantum chaos in many-body systems. In this paper, we extend the formalism of the averaged bipartite OTOC of Styliaris \textit{et al} \href{https://journals.aps.org/prl/abstract/10.1103/PhysRevLett.126.030601}{[Phys. Rev. Lett. \textbf{126}, 030601 (2021)]} to the case of open quantum systems. The dynamics is no longer unitary but it is described by more general quantum channels (trace preserving, completely positive maps). This ``open bipartite OTOC'' can be treated in an exact analytical fashion and is shown to amount to a   distance between two quantum channels. Moreover, our analytical form unveils competing entropic contributions from information scrambling and environmental decoherence such that the latter can obfuscate the former. To elucidate this subtle interplay we analytically study special classes of quantum channels, namely, dephasing channels, entanglement-breaking channels, and others.  Finally, as a physical application we numerically study  dissipative many-body spin-chains and show how the competing entropic effects  can be used to differentiate  between integrable and chaotic regimes.

\end{abstract}
\maketitle

\section{Introduction}
\label{sec:introduction}
Many-body quantum chaos has witnessed a renaissance in recent years, spearheaded by the study of the out-of-time-ordered correlator (OTOC) and its interplay with information scrambling~\cite{larkin_quasiclassical_1969,kitaev_simple_2015,MaldacenaChaos2016,PhysRevLett.115.131603,PolchinskiSYK2016,MezeiChaos2017,Roberts2017Chaos}. The precise role that the OTOC plays in characterizing quantum chaos, via its short-time exponential growth, is well-understood in systems with either (i) a semiclassical limit, or (ii) with a large number of local degrees of freedom~\cite{kitaev_simple_2015,MaldacenaChaos2016}. 

However, its role in finite systems, such as quantum spin-chains is still under close examination~\cite{PhysRevB.98.134303,PhysRevLett.123.160401,luitz2017information,PhysRevE.101.010202,PhysRevLett.124.140602,hashimoto2020exponential}; see also Ref.~\cite{wang2020quantum} debating some of these results. OTOCs have also been applied to study a variety of many-body phenomena, ranging from quantum phase transitions~\cite{PhysRevLett.123.140602} all the way to many-body localization~\cite{Huang_2016_mbl,2017FanOTOCMBL,chen2016universal,Chen_2016,He_2017,Swingle_2017}. Recently, a connection between OTOCs, coherence-generating power, and geometry was unveiled in Ref.~\cite{anand2020quantum}. This further qualifies the intuition that the OTOC measures incompatibility between observables~\cite{Yunger_Halpern_2019}. Moreover, in Refs.~\cite{leone2020isospectral,oliviero2020random} various quantifiers of chaos were unified under the framework of \textit{isospectral twirling}. The OTOCs' theoretical investigations have also been complemented with several state-of-the-art experiments, where dynamical features of the OTOC were studied using superconducting qubits ~\cite{mi2021information,braumuller2021probing}, nuclear magnetic resonance~\cite{Wei_2018,PhysRevX.7.031011,nie2019detecting,Nie_2020}, ion-trap quantum simulators~\cite{G_rttner_2017,Joshi_2020}, among others~\cite{Meier_2019,Chen_2020}.

In recent works it was noted that, for various finite-dimensional many-body systems with spatial locality, the \textit{equilibration value} of OTOCs can diagnose the chaotic-vs-integrable nature of dynamics~\cite{PhysRevLett.121.210601,PhysRevE.100.042201,styliaris_information_2020}. In particular, this emphasis on locality was essential in establishing the connection~\cite{yan2020information} between OTOCs and Loschmidt Echo~\cite{peres1984stability,PhysRevLett.86.2490,Goussev:2012,Gorin2006LEreview}, a well-established signature of quantum chaos. Many qualitative features of the OTOC are insensitive to the specific choice of operators, as long as their locality is fixed. Therefore, it constitutes a meaningful simplification to focus on OTOCs averaged over (suitably distributed) random operators.

Given a bipartition of the system Hilbert space, one can analytically perform the uniform average over pairs of random unitary operators, supported over either side of the bipartition~\cite{styliaris_information_2020}. This averaged bipartite OTOC has a two-fold operational significance: (i) it quantifies the operator entanglement of the dynamics~\cite{zanardi2001entanglement,wang2002entanglement}, and (ii) it  quantifies average entropy production as well the  scrambling of information at the level of quantum channels. 

Moreover, the equilibration value of the OTOCs was shown to be sensitive to the amount of structure in the spectrum (for e.g., quasi-free versus nonintegrable models have degenerate versus generic spectrum, respectively). This induces a hierarchy of constraints that can be utilized to bound the OTOC's equilibration value. Remarkably, the equilibration value of the OTOC also contains information about the
entanglement of the \emph{full} system of Hamiltonian eigenstates~\cite{styliaris_information_2020}. 
Note that, averaging the OTOC over local, random operators, supported on a bipartition was also studied in Refs.~\cite{hosur2016chaos,fan2017out}.

All the above provides compelling evidence that the averaged bipartite OTOC is a powerful tool to investigate information scrambling and chaos in many-body quantum systems.  In this paper, we will extend this formalism to open quantum systems, i.e., systems coupled to an environment, which undergo a \textit{non-unitary} time evolution. In fact, these are the systems that are directly relevant to experimental situations~\cite{Landsman_2019,blok2021quantum} and to current, as well as future-technologies for quantum information processing~\cite{PhysRevX.7.031011,Landsman_2019,G_rttner_2017}.

We note that open-system effects in information scrambling have also been reported  before in Refs.~\cite{PhysRevA.97.062113,PhysRevB.99.014303,PhysRevX.9.011006,PhysRevLett.122.040404,dominguez2020decoherence,xu2020thermofield,PhysRevLett.122.014103,touil2020information,PhysRevB.97.161114}. 
However, our focus is on the open-system version of the \textit{bipartite} averaged OTOC, which, as mentioned before, has a clear operational content~\cite{styliaris_information_2020}.

The paper is structured as follows. In~\cref{sec:Preliminaries}, we discuss the general results extending to the open system domain, those of Ref.~\cite{styliaris_information_2020}. In~\cref{sec:special-channels}, we analyze a few relevant examples of quantum channels amenable of full analytical treatment, e.g., random dephasing. In~\cref{sec:spin-chains}, we discuss, with the help of numerical means, the application of our formalism to paradigmatic dissipative quantum spin-chains featuring regular and chaotic behavior. In~\cref{sec:conclusions}, we conclude with a brief discussion of our results. The detailed proofs of our main propositions are collected in the~\cref{sec:appendix}.

\section{General  results}
\label{sec:Preliminaries}

Let \(\mathcal{H} \cong \mathbb{C}^{d}\) be the Hilbert space corresponding to a \(d\)-dimensional quantum system with \(\mathcal{L}(\mathcal{H})\) denoting the space of  linear operators on \(\mathcal{H}\). Quantum states are represented by \(\rho \in \mathcal{L}(\mathcal{H})\), such that \(\rho \geq 0\) and \(\operatorname{Tr} \rho =1\).  The  space \(\mathcal{L}(\mathcal{H})\) can be endowed with a Hilbert-Schmidt inner product \(\left\langle X,Y \right\rangle := \operatorname{Tr}\left[ X^{\dagger} Y \right]\), transforming it into a Hilbert space. 

\subsection{Preliminaries}
The evolution of quantum states is described via \textit{quantum channels}, linear superoperators \(\mathcal{E}: \mathcal{L}(\mathcal{H}) \rightarrow \mathcal{L}(\mathcal{K})\) that are completely positive and trace preserving (CPTP). The time evolution of observables is via the adjoint channel, \(\mathcal{E}^{\dagger}\) which is defined as,
\begin{align}
\left\langle X, \mathcal{E}(Y) \right\rangle = \left\langle \mathcal{E}^{\dagger}(X),Y \right\rangle ~~\forall X \in \mathcal{L}(\mathcal{H}), Y \in \mathcal{L}(\mathcal{K}) .
\end{align}

For closed quantum systems, the dynamics is described by a family of unitary channels, \(\mathcal{U}_{t}(X) := U^{\dagger}_{t} X U_{t}\), where $U_t\in\mathcal{U}(\mathcal{H})$ ($=$ unitary group over the Hilbert space \(\mathcal{H}\)) $\forall t.$

Given a unitary dynamics $\{U_t\}_{t\geq 0}$ over $\mathcal{H}$, the fundamental quantity that we will use  to quantify information scrambling is given by the ``the square of the commutator'' between an operator $W$ and a time-evolved one \(V(t) := U_{t}^{\dagger} V U_{t}\),
\begin{align}
C_{V, W}(t):=\frac{1}{2d} \|[V(t), W]\|_2^2,
\end{align}
where $\|X\|_2:=\sqrt{\langle X, X\rangle}$. If we choose \(V,W\) to be unitary, then the commutator \(C_{V,W}(t)\) is related to the four-point correlation function,
\begin{align}
F_{V,W}(t) := \frac{1}{d} \operatorname{Tr}\left(V^{\dagger}(t) W^{\dagger} V(t) W \right),
\end{align}
as 
\begin{align}
C_{V,W}(t) = 1 - \frac{1}{d}\mathrm{Re} F_{V,W}(t).
\end{align}
The four-point function $F_{V,W}(t)$ with unusual time-ordering is the so-called called the ``out-of-time-ordered correlator'' (OTOC). Note that, we will be working with the infinite-temperature case throughout this paper, hence the factor of $1/d$ in the OTOC (and the associated squared commutator).

Following~\cite{styliaris_information_2020}, we will from now on consider a bipartite Hilbert space, \(\mathcal{H}_{AB}=\mathcal{H}_{A} \otimes \mathcal{H}_{B} \cong \mathbb{C}^{d_{A}} \otimes \mathbb{C}^{d_{B}}\) and define the \emph{averaged bipartite  OTOCs} by
\begin{align}\label{eq:def-botoc}
  G(\mathcal{U}_{t}):= \mathbb{E}_{V_A,W_B} \left[C_{V_A, W_B}(t)\right],
\end{align}
where, \(V_{A} = V \otimes I_{B}, W_{B} = I_{A} \otimes W\), with \( V\in\mathcal{U}(\mathcal{H}_A),\,W\in\mathcal{U}(\mathcal{H}_B),\) and $\mathbb{E}_{V,W}\left[\bullet\right]:= \int_{\mathrm{Haar}} dV\, dW \left[\bullet\right] $ denotes Haar-averaging over the standard uniform measure over $\mathcal{U}(\mathcal{H}_{A(B)})$. We emphasize that, in this work (and Ref.~\cite{styliaris_information_2020}), the Haar-averages are performed over the operators \(V,W\) in the OTOC but \textit{not} over the dynamical unitary \(U_{t}\), which is left as an input to this correlation function.~\cref{eq:def-botoc} defines the key quantity of this paper. In Ref.~\cite{styliaris_information_2020} we showed that the double-average in~\cref{eq:def-botoc} can be performed analytically and for unitary dynamics, the averaged bipartite OTOC takes the following form. Throughout this paper, we will use primed subsystems $A'$ to refer to a replica of a subsystem $A$, i.e., $\mathcal{H}_{A} \cong \mathcal{H}_{A'}$, $\mathcal{H}_{B} \cong \mathcal{H}_{B'}$, and so on.\\
\begin{prop}
\label{th:botoc-op-ent-closed}\cite{styliaris_information_2020}
Let \(S_{A A^{\prime}}\) be the operator over \(\mathcal{H}_{AB} \otimes \mathcal{H}_{A'B'}\) that swaps \(A\) with its replica \(A^{\prime}\), one has 
\begin{align}
\label{eq:botoc-op-ent-closed}
  G(\mathcal{U}_{t})=1-\frac{1}{d^{2}} \operatorname{Tr}\left(S_{A A^{\prime}} U_{t}^{\otimes 2} S_{A A^{\prime}} U_{t}^{\dagger \otimes 2}\right).
\end{align}
\end{prop}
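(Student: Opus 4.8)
The plan is to start from the definition~\eqref{eq:def-botoc}, reduce the Haar-averaged squared commutator to the Haar average of the four-point function $F_{V_A,W_B}(t)$, and then carry out the two \emph{decoupled} first-moment integrals over $\mathcal{U}(\mathcal{H}_A)$ and $\mathcal{U}(\mathcal{H}_B)$ by lifting the single-copy trace to the two-copy space $\mathcal{H}_{AB}\otimes\mathcal{H}_{A'B'}$. First, since $V_A(t)=U_t^\dagger V_A U_t$ and $W_B$ are unitary, expanding $\tfrac{1}{2d}\|[V_A(t),W_B]\|_2^2$ and cancelling the terms $V_A^\dagger(t)V_A(t)=I$ and $W_B^\dagger W_B=I$ yields $C_{V_A,W_B}(t)=1-\mathrm{Re}\,F_{V_A,W_B}(t)$, so that $G(\mathcal{U}_t)=1-\mathrm{Re}\,\mathbb{E}_{V_A,W_B}[F_{V_A,W_B}(t)]$ and the claim reduces to evaluating the averaged OTOC.

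Next I would convert the single trace $\operatorname{Tr}\!\big(V_A^\dagger(t)\,W_B^\dagger\,V_A(t)\,W_B\big)$ into a two-copy trace via the elementary identity $\operatorname{Tr}(XYZW)=\operatorname{Tr}\big[(X\otimes Z)(Y\otimes W)\,S\big]$, where $S=S_{AA'}S_{BB'}$ is the full swap on the doubled Hilbert space. Taking $X=V_A^\dagger(t)$, $Y=W_B^\dagger$, $Z=V_A(t)$, $W=W_B$ and factoring out the dynamics through $V_A^\dagger(t)\otimes V_A(t)=U_t^{\dagger\otimes2}\,(V_A^\dagger\otimes V_A)\,U_t^{\otimes2}$ leaves two independent first moments to evaluate. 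The key inputs are $\mathbb{E}_V\!\left[V_A^\dagger\otimes V_A\right]=\tfrac{1}{d_A}S_{AA'}$ and $\mathbb{E}_W\!\left[W_B^\dagger\otimes W_B\right]=\tfrac{1}{d_B}S_{BB'}$, which are nothing but the statement that the Haar average of a unitary channel is the completely depolarizing channel, written in ``doubled'' form (equivalently $\int dU\,\bar U_{ij}U_{k\ell}=\tfrac1d\delta_{ik}\delta_{j\ell}$, so that the dagger sitting on the first tensor factor produces a swap rather than an unnormalized maximally entangled projector). Substituting, using $S\,S_{BB'}=S_{AA'}$ (because $S_{AA'}$ and $S_{BB'}$ commute and square to the identity), pulling out the $1/d$ from the definition of $F$, and rearranging cyclically gives $\mathbb{E}_{V,W}[F_{V_A,W_B}(t)]=\tfrac{1}{d^2}\operatorname{Tr}\!\big(S_{AA'}U_t^{\otimes2}S_{AA'}U_t^{\dagger\otimes2}\big)$. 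This number is manifestly real --- it equals its own conjugate by Hermiticity of $S_{AA'}$, unitarity of $U_t^{\otimes2}$, and cyclicity of the trace --- so the $\mathrm{Re}$ may be dropped and~\eqref{eq:botoc-op-ent-closed} follows.

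Because each of $V,V^\dagger,W,W^\dagger$ occurs exactly once, only first Haar moments enter and no Weingarten calculus is needed; the computation is therefore short, and what I expect to be the only real care-points are bookkeeping: (i) converting the four-factor single-copy trace into the two-copy form with the swaps on the correct tensor legs, (ii) tracking which factor carries the dagger so the moments come out as partial swaps, and (iii) the reality check that legitimizes discarding $\mathrm{Re}$. The one step that is more than bookkeeping is noticing the collapse $S\,S_{BB'}=S_{AA'}$: it is what removes the $B$-swap entirely and leaves an expression supported only on the $A$-replica swap, which is also precisely what makes the subsequent reinterpretation of $G(\mathcal{U}_t)$ as a (squared) Hilbert--Schmidt distance between $S_{AA'}$ and $U_t^{\otimes2}S_{AA'}U_t^{\dagger\otimes2}$ transparent.
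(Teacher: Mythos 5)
Your proposal is correct and follows essentially the same route the paper takes (there, in the proof of the open-case generalization, Proposition~\ref{th:botoc-op-ent-open}, of which this is the unitary specialization): expand the squared commutator, lift the four-point trace to the doubled space with the swap identity, and use the first Haar moments $\mathbb{E}_V[V^\dagger\otimes V]=S_{AA'}/d_A$, $\mathbb{E}_W[W^\dagger\otimes W]=S_{BB'}/d_B$ together with $S\,S_{BB'}=S_{AA'}$. The bookkeeping, including the reality of the resulting trace, checks out.
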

This simple formula --- which, quite surprisingly, \emph{coincides}  with the operator entanglement of $U_t$ as originally defined in Ref.~\cite{zanardi2001entanglement} --- provides the starting point of
the analysis in~\cite{styliaris_information_2020}. It allows one to connect the averaged bipartite OTOC to a variety of physical and information-theoretic quantities e.g., entropy production, channel distinguishability, among others. For completeness, we review some of these ideas in~\cref{sec:operator-entanglement-intro}.

We are now ready to discuss the generalization of the bipartite OTOC formalism to open quantum systems, where, unitary transformations are replaced by more general quantum operations.

\subsection{Open OTOC}
Assuming that standard Markovian properties hold, the system dynamics in the Schr\"odinger picture is then described by a trace-preserving, completely positive (CP) map, also known as a quantum channel ${\mathcal E}^\dagger$~\cite{breuerTheoryOpenQuantum2002}. It follows that in the Heisenberg picture (i.e., the one adopted throughout this paper), the observable dynamics is described by the \emph{unital} CP map $\mathcal E$. Recall that a quantum channel \(\mathcal{E}\) is called unital if and only if \(\mathcal{E}(\frac{\mathbb{I}}{d}) = \frac{\mathbb{I}}{d}\), where \(\frac{\mathbb{I}}{d}\) is the maximally mixed state (or the Gibbs state at infinite temperature). Namely, such a map has the maximally mixed state as a fixed point. Several important physical operations that one can perform on a quantum system are unital, for example, unitary evolution, projective measurements without post-selection, dephasing channels, among others. A quantum channel $\mathcal{E}$ is trace preserving if and only if ${\mathcal E}^\dagger$ is itself unital. While many of the results and ideas which follow do not rely on this  assumption, for the sake of simplicity, we will assume
that ${\mathcal E}^\dagger$ is indeed unital ($\Rightarrow \mathcal{E}$ is a quantum channel).


We define the open (averaged) bipartite OTOC by, 
\begin{align}
  G(\mathcal{E}) := \frac{1}{2  d }\, \mathbb{E}_{V_A, W_B} \left\Vert \left[ \mathcal{E}(V_A),W_B \right]  \right\Vert_{2}^{2}, 
\end{align}
where $V_A$, $W_B$ and the average are as defined in~\cref{eq:def-botoc}. The first step is to  generalize~\cref{eq:botoc-op-ent-closed} to the open case.\\
\begin{restatable}{prop}{haaravgopenotoc}
\label{th:botoc-op-ent-open}
Let \(S \equiv S_{AA'BB'}\) be the swap operator over \(\mathcal{H}_{AB} \otimes \mathcal{H}_{A'B'}\), then for a quantum channel $\mathcal{E}: \mathcal{L}(\mathcal{H}_{AB}) \rightarrow \mathcal{L}(\mathcal{H}_{AB})$, the open bipartite OTOC takes the following form,
\begin{align}
\label{eq:botoc-op-ent-open}
  G(\mathcal{E}) = \frac{1}{d^{2}} \operatorname{Tr}\left( \left( d_{B} S - S_{AA'} \right) \mathcal{E}^{\otimes 2} (S_{AA'}) \right).
\end{align}
\end{restatable}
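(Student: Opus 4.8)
The plan is to turn the Haar-averaged squared commutator into Hilbert--Schmidt traces and then eliminate the averages with first-moment identities. First I would set $X:=\mathcal{E}(V_A)$, $Y:=W_B$ and expand $\|[X,Y]\|_2^2=\operatorname{Tr}\big([X,Y]^\dagger[X,Y]\big)$ into its four terms. Since $Y=W_B$ is unitary ($YY^\dagger=Y^\dagger Y=\mathbb{I}$), two of those terms each reduce to $\operatorname{Tr}(X^\dagger X)$ and the remaining two are complex conjugates of one another, so $\|[X,Y]\|_2^2=2\operatorname{Tr}(X^\dagger X)-2\,\mathrm{Re}\operatorname{Tr}(X^\dagger Y^\dagger XY)$. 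Using that a quantum channel (indeed any Hermiticity-preserving map) satisfies $\mathcal{E}(V_A)^\dagger=\mathcal{E}(V_A^\dagger)$, taking the Haar average yields $G(\mathcal{E})=\tfrac1d\,\mathbb{E}_{V_A}\operatorname{Tr}\!\big(\mathcal{E}(V_A^\dagger)\mathcal{E}(V_A)\big)-\tfrac1d\,\mathrm{Re}\,\mathbb{E}_{V_A,W_B}\operatorname{Tr}\!\big(\mathcal{E}(V_A^\dagger)\,W_B^\dagger\,\mathcal{E}(V_A)\,W_B\big)$.

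Next I would lift both traces to the replica space $\mathcal{H}_{AB}\otimes\mathcal{H}_{A'B'}$ via the swap identities $\operatorname{Tr}(M_1M_2)=\operatorname{Tr}\big[(M_1\otimes M_2)S\big]$ and $\operatorname{Tr}(ABCD)=\operatorname{Tr}\big[(A\otimes C)\,S\,(D\otimes B)\big]$, where $S\equiv S_{AA'BB'}$. Applied to the four-operator term with $A=\mathcal{E}(V_A^\dagger)$, $C=\mathcal{E}(V_A)$, $B=W_B^\dagger$, $D=W_B$, and using $\mathcal{E}(V_A^\dagger)\otimes\mathcal{E}(V_A)=\mathcal{E}^{\otimes2}(V_A^\dagger\otimes V_A)$, the map $\mathcal{E}^{\otimes2}$ factors out in front and the averages act only on the ``bare'' operators: I am left with $\mathbb{E}_V[V^\dagger\otimes V]$ over $\mathcal{H}_A\otimes\mathcal{H}_{A'}$ and $\mathbb{E}_W[W\otimes W^\dagger]$ over $\mathcal{H}_B\otimes\mathcal{H}_{B'}$ (the first term needs only the former). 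Then I would substitute the first-moment Haar integrals $\mathbb{E}_{V\sim\mathrm{Haar}\,\mathcal{U}(d_A)}[V^\dagger\otimes V]=\tfrac1{d_A}S_{AA'}$ and $\mathbb{E}_{W\sim\mathrm{Haar}\,\mathcal{U}(d_B)}[W\otimes W^\dagger]=\tfrac1{d_B}S_{BB'}$, use linearity of $\mathcal{E}^{\otimes2}$, and simplify with $S=S_{AA'}S_{BB'}$ and $S_{BB'}^2=\mathbb{I}$. The first contribution becomes $\tfrac1{d\,d_A}\operatorname{Tr}\big[S\,\mathcal{E}^{\otimes2}(S_{AA'})\big]=\tfrac{d_B}{d^2}\operatorname{Tr}\big[S\,\mathcal{E}^{\otimes2}(S_{AA'})\big]$; the cross contribution, using $S\,S_{BB'}=S_{AA'}$, collapses to $\tfrac1{d^2}\operatorname{Tr}\big[S_{AA'}\,\mathcal{E}^{\otimes2}(S_{AA'})\big]$, which is automatically real (both factors are Hermitian, $\mathcal{E}^{\otimes2}$ being Hermiticity-preserving), so the $\mathrm{Re}$ may be dropped. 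Subtracting and using $d=d_Ad_B$ gives $G(\mathcal{E})=\tfrac1{d^2}\operatorname{Tr}\big[(d_BS-S_{AA'})\,\mathcal{E}^{\otimes2}(S_{AA'})\big]$.

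The computation is largely bookkeeping; the steps I would watch most carefully are the conjugation pattern in the Haar identities --- it is $V^\dagger\otimes V$ that integrates to a swap, whereas $V\otimes\bar V$ would give an unnormalized maximally entangled projector $\tfrac1{d_A}|\Omega\rangle\langle\Omega|$ --- and keeping track of which subsystems each swap acts on when collapsing $S\,S_{BB'}\to S_{AA'}$. As a consistency check, for a unitary channel $\mathcal{E}=\mathcal{U}_t$ one has $\mathcal{E}^{\otimes2}(S_{AA'})=U_t^{\dagger\otimes2}S_{AA'}U_t^{\otimes2}$, and since $S$ commutes with $U_t^{\otimes2}$ the first term reduces to $\tfrac{d_B}{d^2}\operatorname{Tr}[S\,S_{AA'}]=\tfrac{d_B}{d^2}\operatorname{Tr}[S_{BB'}]=1$, recovering \cref{th:botoc-op-ent-closed}; for a general channel the shortfall of that first term below $1$ (equivalently, the average purity loss $\tfrac1d\mathbb{E}_{V_A}\operatorname{Tr}(\mathcal{E}(V_A^\dagger)\mathcal{E}(V_A))$) is precisely the extra, decoherence-induced entropic contribution advertised in the abstract.
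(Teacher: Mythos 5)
Your proposal is correct and follows essentially the same route as the paper's proof: expand the squared commutator (using unitarity of $W_B$ and Hermiticity preservation of $\mathcal{E}$), lift the traces to the replica space with swap-operator identities so that $\mathcal{E}^{\otimes 2}$ factors out, apply the first-moment Haar identities $\mathbb{E}_V[V^\dagger\otimes V]=S_{AA'}/d_A$, $\mathbb{E}_W[W\otimes W^\dagger]=S_{BB'}/d_B$, and simplify via $S=S_{AA'}S_{BB'}$. Your unitary-case consistency check likewise mirrors the remark at the end of the paper's proof.
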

%
A few  remarks are in order:  

(a) If $L_S(X):=SX$ \footnote{Here, \(L_{S}\) is a superoperator whose action is to left multiply with the swap operator \(S\), that is, \(L_{S}(X):= SX\). The commutator is at the level of superoperators, namely, \(\left[ \mathcal{E}^{\otimes 2},L_{S} \right] = \mathcal{E}^{\otimes 2} \circ L_{S} - L_{S} \circ \mathcal{E}^{\otimes 2}\), where we have emphasized the superoperator composition via the \(\circ\) symbol. This commutator can be understood by its action on an operator \(X\) as \(\left[ \mathcal{E}^{\otimes 2},L_{S} \right](X) = \mathcal{E}^{\otimes 2} L_{S}(X) - L_{S} \mathcal{E}^{\otimes 2}(X)\).} one has that $\left[ \mathcal{E}^{\otimes 2}, L_S \right] = 0 $, if and only if $\mathcal{E}$ is unitary (see the~\cref{sec:appendix} for a proof). In this case  the first term in~\cref{eq:botoc-op-ent-open} becomes equal to one, giving back ~\cref{eq:botoc-op-ent-closed}.   

(b) From $\left[ \mathcal{E}^{\otimes 2}, L_S \right] = 0$ and $S_{BB'}=S_{AA'} S=S S_{AA'}$, one sees that the second term in~\cref{eq:botoc-op-ent-open}
can be written $S_{BB'}\mathcal{E}^{\otimes 2}(S_{BB'}).$ This means that in the unitary case there a symmetry between the subsystems $A$ and $B$ which is lost in the general open case.

(c) Since, for  unitary dynamics,~\cref{eq:botoc-op-ent-closed} coincides with operator entanglement~\cite{zanardi2001entanglement} of $U$, one has that
\begin{align}
 G(\mathcal{U}) = 0 \iff {U} = {U}_{A} \otimes {U}_{B}.
\end{align}
However, for non-unitary dynamics, \(\mathcal{E} = \mathcal{E}_{A} \otimes \mathcal{E}_{B} \implies G(\mathcal{E}) = 0\), but the converse is not true. Namely, 
one  can have zero \(G(\mathcal{E})\) even for \(\mathcal{E} \neq \mathcal{E}_{A} \otimes \mathcal{E}_{B}\).  Later, we will illustrate this phenomenon by an example
of a dephasing channel.


(d) Let us remind that given the quantum channel $\mathcal{E}\colon \mathcal{L}(\mathcal{H})\rightarrow \mathcal{L}(\mathcal{K})$, one defines the \emph{Choi state} associated to it by,
\begin{align}
\rho_{\mathcal{E}}:=(\mathcal{E}\otimes I)(|\Phi^+\rangle\langle\Phi^+|)\in \mathcal{L}(\mathcal{K})\otimes \mathcal{L}(\mathcal{H}),
\end{align}
where $|\Phi^+\rangle= d^{-1/2}\sum_{i=1}^d |i\rangle^{\otimes 2}\in \mathcal{L}(\mathcal{H})^{\otimes 2} ,\,(d=\mathrm{dim} \mathcal{H}).$

Notice that in the unitary case, $\mathcal{E}=\mathcal {U}= U\bullet U^\dagger,$~\cref{eq:botoc-op-ent-closed} can be written as~\cite{zanardi2001entanglement}
\begin{align}\label{eq:Choi-closed}
G(\mathcal{U}_{t})=1-\|\mathrm{tr}_{BB'} \rho_{\mathcal{U}_t}\|_2^2=S_L( \mathrm{tr}_{BB'}\rho_{\mathcal{U}_t}),
\end{align}
 where $S_L$  is the so-called linear entropy i.e., $S_L(\rho):=1 -\mathrm{Tr}(\rho^2).$
This shows why the averaged bipartite OTOC corresponds to a measure of operator entanglement for $U_t$ across the $A:B$ bipartition~\cite{zanardi2001entanglement}.

The following result can be seen as an extension of~\cref{eq:Choi-closed} to  general quantum channels.\\
\begin{restatable}{prop}{choiopen}
\label{th:Choi-open}
%
\begin{align}
\label{eq:Choi-open-0}
(i)\quad G(\mathcal{E}) = d_{B} \left\Vert \operatorname{Tr}_{B'}\rho_\mathcal{E} \right\Vert_{2}^{2} - \left\Vert \operatorname{Tr}_{BB'} \rho_\mathcal{E} \right\Vert_{2}^{2} .
\end{align}
\begin{align}
\label{eq:Choi-open}
(ii)\quad  G(\mathcal{E}) = d_{B} \left\Vert \rho_{\widetilde{\mathcal{E}}} - \rho_{\mathcal{T} \circ \widetilde{\mathcal{E}}} \right\Vert_{2}^{2}=d_{B} \left( \left\Vert \rho_{\widetilde{\mathcal{E}}} \right\Vert_{2}^{2} - \left\Vert \rho_{\mathcal{T} \circ \widetilde{\mathcal{E}}} \right\Vert_{2}^{2}   \right), 
\end{align}
where  $ \widetilde{\mathcal{E}}: \mathcal{L}(\mathcal{H}_A)\rightarrow \mathcal{L}(\mathcal{H}_{AB}): X\rightarrow \mathcal{E}(X\otimes\frac{\mathbb{I}}{d_{B}})$ and
\(\mathcal{T}: \mathcal{L}(\mathcal{H}_{AB}) \rightarrow \mathcal{L}(\mathcal{H}_{AB}): X{\mapsto} \operatorname{Tr}_{B}\left( X \right)  \otimes \frac{\mathbb{I}}{d_{B}}\).
\end{restatable}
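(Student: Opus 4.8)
The plan is to derive both identities from \cref{th:botoc-op-ent-open} by expanding the operators in a product basis of $\mathcal{H}_{AB}$ and recognizing the resulting scalars as Hilbert--Schmidt norms of (reduced) Choi states. The only nontrivial ingredients are the two ``swap-trick'' identities $\operatorname{Tr}\!\left(S\,(M\otimes N)\right)=\operatorname{Tr}(MN)$ and $\operatorname{Tr}\!\left(S_{AA'}\,(M\otimes N)\right)=\operatorname{Tr}_A\!\left[(\operatorname{Tr}_B M)(\operatorname{Tr}_B N)\right]$, valid for $M,N\in\mathcal{L}(\mathcal{H}_{AB})$ and both obtained by a one-line expansion in a product basis. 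I will also use the elementary ``Choi dictionary'': writing $\rho_{\mathcal{E}}=\tfrac{1}{d}\sum_{m,n}\mathcal{E}(|m\rangle\langle n|)\otimes|m\rangle\langle n|$ in a product basis $|m\rangle=|a\rangle_A|b\rangle_B$ and using $\sum_b|a,b\rangle\langle c,b|=|a\rangle\langle c|_A\otimes I_B=d_B\big(|a\rangle\langle c|_A\otimes\tfrac{\mathbb{I}}{d_B}\big)$, one reads off $\operatorname{Tr}_{B'}\rho_{\mathcal{E}}=\rho_{\widetilde{\mathcal{E}}}$ (hence $\operatorname{Tr}_{BB'}\rho_{\mathcal{E}}=\operatorname{Tr}_{B}\rho_{\widetilde{\mathcal{E}}}$), and $\rho_{\mathcal{T}\circ\widetilde{\mathcal{E}}}=\mathcal{T}^{(AB)}(\rho_{\widetilde{\mathcal{E}}})=(\operatorname{Tr}_B\rho_{\widetilde{\mathcal{E}}})\otimes\tfrac{\mathbb{I}}{d_B}$, where $\mathcal{T}^{(AB)}$ denotes $\mathcal{T}$ applied to the output leg $\mathcal{H}_{AB}$ of the Choi state.

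For part (i), I would first expand $\mathcal{E}^{\otimes2}(S_{AA'})$: writing $S_{AA'}=\sum_{x,y}\big(|y\rangle\langle x|_A\otimes I_B\big)\otimes\big(|x\rangle\langle y|_{A'}\otimes I_{B'}\big)$ and pulling out the $d_B$ factors via $|y\rangle\langle x|_A\otimes I_B=d_B\big(|y\rangle\langle x|\otimes\tfrac{\mathbb{I}}{d_B}\big)$, the definition of $\widetilde{\mathcal{E}}$ gives $\mathcal{E}^{\otimes2}(S_{AA'})=d_B^{2}\sum_{x,y}\widetilde{\mathcal{E}}(|y\rangle\langle x|)\otimes\widetilde{\mathcal{E}}(|x\rangle\langle y|)$. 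Applying the first swap identity to the $d_BS$ term and the second to the $-S_{AA'}$ term of \cref{th:botoc-op-ent-open} produces $d_B^{3}\sum_{x,y}\operatorname{Tr}\!\big(\widetilde{\mathcal{E}}(|y\rangle\langle x|)\widetilde{\mathcal{E}}(|x\rangle\langle y|)\big)$ and $-d_B^{2}\sum_{x,y}\operatorname{Tr}_A\!\big[\operatorname{Tr}_B\widetilde{\mathcal{E}}(|y\rangle\langle x|)\cdot\operatorname{Tr}_B\widetilde{\mathcal{E}}(|x\rangle\langle y|)\big]$, respectively. On the other hand, expanding $\rho_{\widetilde{\mathcal{E}}}=\tfrac{1}{d_A}\sum_{x,y}\widetilde{\mathcal{E}}(|x\rangle\langle y|)\otimes|x\rangle\langle y|$ and computing the purity gives $\operatorname{Tr}\rho_{\widetilde{\mathcal{E}}}^{\,2}=\tfrac{1}{d_A^{2}}\sum_{x,y}\operatorname{Tr}\!\big(\widetilde{\mathcal{E}}(|x\rangle\langle y|)\widetilde{\mathcal{E}}(|y\rangle\langle x|)\big)$ and $\operatorname{Tr}(\operatorname{Tr}_B\rho_{\widetilde{\mathcal{E}}})^{2}=\tfrac{1}{d_A^{2}}\sum_{x,y}\operatorname{Tr}_A\!\big[\operatorname{Tr}_B\widetilde{\mathcal{E}}(|x\rangle\langle y|)\cdot\operatorname{Tr}_B\widetilde{\mathcal{E}}(|y\rangle\langle x|)\big]$ — exactly the two sums above, up to relabelling $x\leftrightarrow y$ and the prefactor $d_A^{2}$. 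Collecting the powers and dividing by $d^{2}=d_A^{2}d_B^{2}$ then yields $G(\mathcal{E})=d_B\|\rho_{\widetilde{\mathcal{E}}}\|_2^{2}-\|\operatorname{Tr}_B\rho_{\widetilde{\mathcal{E}}}\|_2^{2}$, which is the claim of (i) once $\rho_{\widetilde{\mathcal{E}}}=\operatorname{Tr}_{B'}\rho_{\mathcal{E}}$ and $\operatorname{Tr}_B\rho_{\widetilde{\mathcal{E}}}=\operatorname{Tr}_{BB'}\rho_{\mathcal{E}}$ are invoked.

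For part (ii), I would work from (i) together with the dictionary. Since $\rho_{\mathcal{T}\circ\widetilde{\mathcal{E}}}=(\operatorname{Tr}_B\rho_{\widetilde{\mathcal{E}}})\otimes\tfrac{\mathbb{I}}{d_B}$, one has $\|\rho_{\mathcal{T}\circ\widetilde{\mathcal{E}}}\|_2^{2}=\|\operatorname{Tr}_B\rho_{\widetilde{\mathcal{E}}}\|_2^{2}\,\|\tfrac{\mathbb{I}}{d_B}\|_2^{2}=\tfrac{1}{d_B}\|\operatorname{Tr}_B\rho_{\widetilde{\mathcal{E}}}\|_2^{2}$, i.e. $\|\operatorname{Tr}_B\rho_{\widetilde{\mathcal{E}}}\|_2^{2}=d_B\|\rho_{\mathcal{T}\circ\widetilde{\mathcal{E}}}\|_2^{2}$, so that $G(\mathcal{E})=d_B\big(\|\rho_{\widetilde{\mathcal{E}}}\|_2^{2}-\|\rho_{\mathcal{T}\circ\widetilde{\mathcal{E}}}\|_2^{2}\big)$, which is the second equality in (ii). The first equality then reduces to the Pythagorean relation $\|\rho_{\widetilde{\mathcal{E}}}\|_2^{2}=\|\rho_{\widetilde{\mathcal{E}}}-\rho_{\mathcal{T}\circ\widetilde{\mathcal{E}}}\|_2^{2}+\|\rho_{\mathcal{T}\circ\widetilde{\mathcal{E}}}\|_2^{2}$, which holds because $\mathcal{T}$ is a self-adjoint idempotent superoperator ($\mathcal{T}^{\dagger}=\mathcal{T}=\mathcal{T}^{2}$): hence $\mathcal{T}^{(AB)}$ (extended by the identity on the replica leg) is an orthogonal projection on the Hilbert--Schmidt space, and $\rho_{\widetilde{\mathcal{E}}}-\rho_{\mathcal{T}\circ\widetilde{\mathcal{E}}}=(\mathrm{id}-\mathcal{T}^{(AB)})(\rho_{\widetilde{\mathcal{E}}})$ is orthogonal to $\mathcal{T}^{(AB)}(\rho_{\widetilde{\mathcal{E}}})=\rho_{\mathcal{T}\circ\widetilde{\mathcal{E}}}$.

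I expect the main obstacle to be purely organizational: keeping straight the tensor factors $A,B,A',B'$ (plus the replica $A''$ that appears in the Choi states), distinguishing the full swap $S=S_{AA'BB'}$ from the partial swap $S_{AA'}$, tracking which leg each $\widetilde{\mathcal{E}}(\cdot)$ occupies after $\mathcal{E}^{\otimes2}$ has acted, and arranging that the powers of $d_B$ generated by replacing $I_B$ with $d_B\tfrac{\mathbb{I}}{d_B}$ cancel correctly against the prefactor $1/d^{2}$. None of the individual steps is hard; the risk is an index or normalization slip. It is also worth noting explicitly that only linearity and complete positivity of $\mathcal{E}$ are used above — unitality is not needed here, it entered only in setting up \cref{th:botoc-op-ent-open}.
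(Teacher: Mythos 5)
Your proposal is correct and follows essentially the same route as the paper's proof: expand in a product basis, use the swap-trick identities to turn the two terms of \cref{th:botoc-op-ent-open} into purities of \(\rho_{\widetilde{\mathcal{E}}}\) and of its \(B\)-reduction, and identify these with the reduced Choi matrices \(\operatorname{Tr}_{B'}\rho_{\mathcal{E}}\) and \(\operatorname{Tr}_{BB'}\rho_{\mathcal{E}}\) and with \(\rho_{\mathcal{T}\circ\widetilde{\mathcal{E}}}\). The only (welcome) addition beyond the paper's argument is your explicit justification of the first equality in (ii) via the Pythagorean relation for the self-adjoint idempotent \(\mathcal{T}\), a step the paper leaves implicit.
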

In words: the averaged  bipartite  OTOC (\ref{eq:botoc-op-ent-open})  for a channel \(\mathcal{E}\) can be expressed as a difference of purities of (reduced) Choi matrices
of $\mathcal{E}$ or as a (squared) distance  between the Choi matrices of channels $\widetilde{\mathcal{E}}$ and ${\mathcal{T}}\circ \widetilde{\mathcal{E}}$. More precisely, since the map between channels and the corresponding Choi state is injective, the RHS of~\cref{eq:Choi-open} measures the distance  between the channels $ \widetilde{\mathcal{E}}$ and ${\mathcal{T}}\circ \widetilde{\mathcal{E}}.$
Hence, we see that $G(\mathcal{E}) = 0$ if and only if $\mathcal{\tilde{E}} = \mathcal{T} \circ \mathcal{\tilde{E}}.$ Namely, $ \forall X \in \mathcal{L}(\mathcal{H}_{A}),$ 
  \begin{align}
\mathcal{E} ( X \otimes \frac{\mathbb{I}}{d_{B}} ) = \operatorname{Tr}_{B} \mathcal{E} ( X \otimes \frac{\mathbb{I}}{d_{B}} ) \otimes \frac{\mathbb{I}}{d_{B}}.
\end{align}
In passing, we observe that the map $\mathcal{T}$ is a (super) projection that can be realized as a group average $\mathcal{T}(X)=\mathbb{E}_U \left[  (\mathbb{I}_A\otimes U)X (\mathbb{I}_A\otimes U^\dagger)\right],$ with $U\in\mathcal{U}({\cal H}_B)$.

 
From the physical point of view one of the main findings in~\cite{styliaris_information_2020}  was to  show that the bipartite OTOC $G(\mathcal{U}_{t})$ is nothing but a measure of the average entropy production by \(\operatorname{Tr}_{B}[\widetilde{\mathcal{E}}]\) over pure states. Operationally, one prepares pure states in the $A$-subsystem tensorized with the totally mixed one in the $B$-subsystem and lets the joint system evolve according to the channel $\mathcal{E}.$ The entropy that is then observed in the $A$-subsystem alone is the result, in the unitary case, of the information loss due to leaking into the $B$-subsystem induced by the evolution  i.e., quantum information scrambling.

One can  extend  that key result to the  open system case.\\
\begin{restatable}{prop}{entropyopen}
\label{th:entropy-production-open}
We denote by $\psi:=|\psi\rangle\langle\psi|$ with $|\psi\rangle\in\mathcal{H}_A$. Then,
\begin{align}
\label{eq:entropy-production-open}
  G(\mathcal{E}) = N_{A} \mathbb{E}_{\psi} \left[ S_{L} ( \operatorname{Tr}_{B} \widetilde{\mathcal{E}} \left( \psi \right) )  \right.   
   \left.   - d_{B} (S_{L} ( \widetilde{\mathcal{E}}(\psi)) - S_{L}^{\mathrm{min}} )\right],
\end{align}
where $ \mathbb{E}_{\psi} $   is the the Haar average over $\mathcal{H}_A,$ $N_{A}:=\frac{d_{A}+1}{d_{A}}$, and $S_{L}^{\mathrm{min}} := 1 - \frac{1}{d_{B}}.$
\end{restatable}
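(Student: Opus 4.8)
The plan is to reduce both linear entropies on the right-hand side to traces of $\mathcal{E}^{\otimes 2}$ against swap operators via the standard ``swap trick'', average over $\psi$ using the second-moment identity, and then recognize the resulting expression as $G(\mathcal{E})$ in the form of \cref{th:botoc-op-ent-open}.

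First I would write, for any state $\sigma$ on $\mathcal{H}_{AB}$, $S_L(\sigma)=1-\operatorname{Tr}[S\,\sigma^{\otimes 2}]$ and $S_L(\operatorname{Tr}_B\sigma)=1-\operatorname{Tr}[(S_{AA'}\otimes\mathbb{I}_{BB'})\,\sigma^{\otimes 2}]$, using $\operatorname{Tr}[X^2]=\operatorname{Tr}[S_{12}(X\otimes X)]$ together with the fact that partial traces can be absorbed into the operator against which one traces. Applied with $\sigma=\widetilde{\mathcal{E}}(\psi)$ and using $\widetilde{\mathcal{E}}(\psi)^{\otimes 2}=\mathcal{E}^{\otimes 2}\!\big(\psi\otimes\psi\otimes\tfrac{\mathbb{I}_{BB'}}{d_B^{2}}\big)$ (after regrouping the $A,A'$ and $B,B'$ tensor factors), I move the Haar average inside $\mathcal{E}^{\otimes 2}$ and invoke $\mathbb{E}_\psi[\psi^{\otimes 2}]=\tfrac{\mathbb{I}_{AA'}+S_{AA'}}{d_A(d_A+1)}$. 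Unitality of $\mathcal{E}$ (i.e.\ $\mathcal{E}(\mathbb{I})=\mathbb{I}$) collapses the $\mathbb{I}_{AA'}$ piece into the identity operator, yielding
\[
  \mathbb{E}_\psi\big[\widetilde{\mathcal{E}}(\psi)^{\otimes 2}\big]=\frac{\mathbb{I}+\mathcal{E}^{\otimes 2}(S_{AA'})}{d_A(d_A+1)d_B^{2}},
\]
with $S_{AA'}$ meaning $S_{AA'}\otimes\mathbb{I}_{BB'}$ exactly as in \cref{th:botoc-op-ent-open}.

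Plugging this into the two entropy expressions and using $\operatorname{Tr}[S_{AA'}\otimes\mathbb{I}_{BB'}]=d_Ad_B^{2}$ and $\operatorname{Tr}[S]=d_Ad_B$, I obtain $\mathbb{E}_\psi[S_L(\operatorname{Tr}_B\widetilde{\mathcal{E}}(\psi))]=\tfrac{d_A}{d_A+1}-\tfrac{Q}{d_A(d_A+1)d_B^{2}}$ and $d_B\big(\mathbb{E}_\psi[S_L(\widetilde{\mathcal{E}}(\psi))]-S_L^{\mathrm{min}}\big)=\tfrac{d_A}{d_A+1}-\tfrac{R}{d_A(d_A+1)d_B}$, where $Q:=\operatorname{Tr}[(S_{AA'}\otimes\mathbb{I}_{BB'})\mathcal{E}^{\otimes 2}(S_{AA'})]$ and $R:=\operatorname{Tr}[S\,\mathcal{E}^{\otimes 2}(S_{AA'})]$, and where the constant $S_L^{\mathrm{min}}=1-1/d_B$ is precisely what cancels the $O(1)$ part of the second expression. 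Subtracting, the $\tfrac{d_A}{d_A+1}$ terms cancel, and multiplying by $N_A=\tfrac{d_A+1}{d_A}$ gives $\tfrac{1}{d_A^{2}d_B^{2}}(d_BR-Q)=\tfrac{1}{d^{2}}\operatorname{Tr}[(d_BS-S_{AA'})\mathcal{E}^{\otimes 2}(S_{AA'})]$, which is $G(\mathcal{E})$ by \cref{th:botoc-op-ent-open}.

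The main obstacle is purely bookkeeping: tracking which replica of which factor each swap acts on when passing between $\widetilde{\mathcal{E}}^{\otimes 2}$ and $\mathcal{E}^{\otimes 2}$, and checking that the partial-trace manipulations in the first step are legitimate. The one genuinely substantive input is unitality, used at the right moment so that $\mathcal{E}^{\otimes 2}$ of the identity stays the identity; this is what makes the $O(1)$ pieces assemble into $S_L^{\mathrm{min}}$ and the replica-symmetric term into the clean $\tfrac{1}{d^{2}}$-prefactored form.
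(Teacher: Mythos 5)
Your proposal is correct and follows essentially the same route as the paper's proof: the second-moment identity $\mathbb{E}_\psi[\psi^{\otimes 2}]=\frac{\mathbb{I}_{AA'}+S_{AA'}}{d_A(d_A+1)}$, the swap-trick expression of the linear entropies, unitality of $\mathcal{E}$ to absorb the identity piece, and matching against the trace formula of \cref{th:botoc-op-ent-open}. The only difference is cosmetic — you work from the entropy expression back to the trace formula, whereas the paper substitutes the Haar identity into \cref{eq:botoc-op-ent-open} directly — and your bookkeeping of the constant terms (which is what makes $S_L^{\mathrm{min}}$ appear) checks out.
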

We note that for \(\mathcal{E} = \mathcal{U}\), that is, closed system dynamics, the second term in~\cref{eq:entropy-production-open} is zero. 
In general, since $\mathcal{E}$ is unital, one has that $ \widetilde{\mathcal{E}}(\psi)) \ge S_{L}^{\mathrm{m}}\,(\forall\psi).$ Hence, 
\begin{align}\label{eq:upper-bound}
G(\mathcal{E})\le G^{\mathrm{scra}}(\mathcal{E})\le G^{\mathrm{max}}:= 1-\frac{1}{d_A^2},
\end{align}
where, the ``scrambling entropy'' production $ G^{\mathrm{scra}}$ is given by the first term in~\cref{eq:entropy-production-open}.
%

Crucially, 
\cref{eq:entropy-production-open} shows that in the open system case in \(G(\mathcal{E})\), there is a competition between the entropy production, quantified by  the first term $G^{\mathrm{scra}}$ due to scrambling, and the second one due to decoherence.
%
For example, if $\widetilde{\mathcal{E}} \left( \psi \right) =\frac{\mathbb{I}}{d},\,\forall\psi$, then, the scrambling term attains its maximum value $G_{\mathrm{max}}$, but this is {\em{exactly canceled}}  by the decoherence contribution. This situation, as shown in the next section, can be physically realized by a dephasing channel in the maximally entangled basis.

%
%

We stress that to obtain a satisfactory estimate of the average in the RHS of Eq.~\eqref{eq:entropy-production-open}, one does not, in practice, need to sample over the full Haar ensemble. An adequate estimate can be obtained with a rapidly decreasing number of necessary samples, as the dimension $d_{\chi}$ grows. For example, in the unitary case,  if  $\tilde P(\epsilon)$ is the probability of the entropy  $S_{\mathrm{lin}} \big[ {\cal E}_t  \big( \ket{\psi}\! \bra{\psi} \big) \big]$ deviating from $\frac{d_{A} }{d_{A}+1} G(\mathcal{U}_{t})$ more than $\epsilon$ for an instance of a random state, one has that ~\cite{styliaris_information_2020}:
\begin{align} \label{eq:concentration_linear_entropy}
\tilde P(\epsilon) \le \exp \left(  - \frac{d_A \epsilon^2}{64} \right).
\end{align}
It is also important to notice that the two terms in~\cref{eq:entropy-production-open} can be, in principle, measured independently and therefore
have a well-defined operational meaning in their own right see \cref{sec:experimental-openotoc} for a detailed discussion.
\vskip .5truecm
\section{Some special channels}
\label{sec:special-channels}
\begin{figure*}[!t]
\begin{center}
\includegraphics[width=0.9\columnwidth]{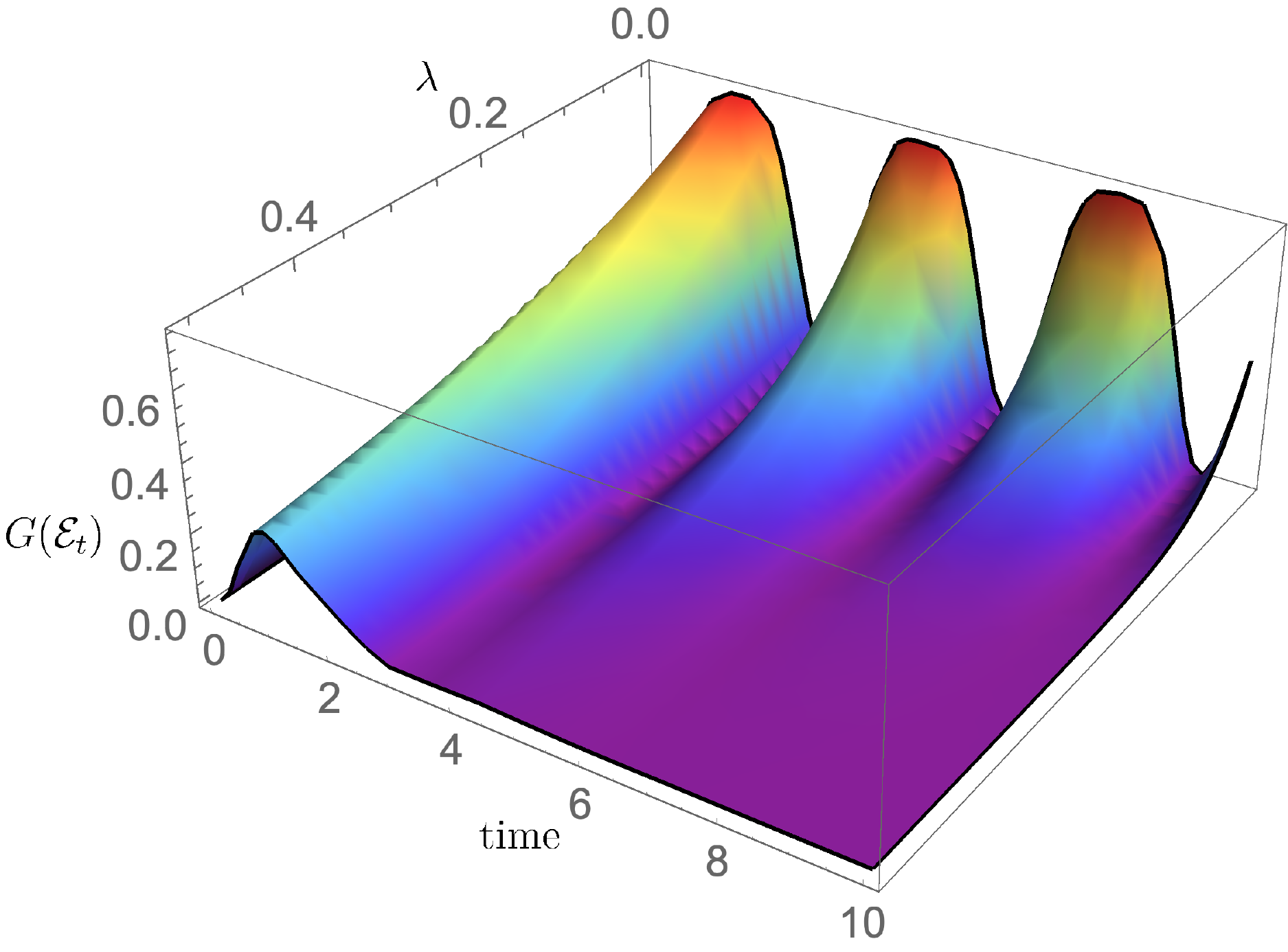}
\end{center}
\caption{\label{fig:swap-w-dissi} Non-unitary OTOC $G(e^{{\cal L}t})$ with \(\mathcal{L} = i \mathrm{ad} {S} + \lambda \left( \mathcal{D}_{\mathbb{B}} - I \right)\) where $S$ is the swap operator, \({\mathbb{B}}\)  is  the Bell basis ($d_A=d_B=2$). The different curves correspond to different choices of the dephasing parameter $\lambda.$ Over the time scale $\lambda^{-1}$ on which dephasing becomes relevant the ``scrambling entropy'' (first term in~\cref{eq:entropy-production-open}) is balanced, and eventually overwhelmed, by the  decoherence-induced entropy production (second term in~\cref{eq:entropy-production-open}). For any fixed time $t$ the OTOC suppression is exponential in the dephasing strength $\lambda.$ 
Moreover, in sharp contrast with the unitary case, for any $\lambda\neq 0,$ the infinite time limit of the OTOC is vanishing.}
\end{figure*}

For concreteness, let us now consider a family of  maps which includes several ones of physical interest  and for which~\cref{eq:botoc-op-ent-open} takes a particularly interesting form. Let us start with dephasing channels \emph{stricto sensu}.
\subsection{Dephasing channels}
\begin{restatable}{prop}{dephasingchannel}
\label{th:dephasing-channel}
Consider the dephasing channel, \(\mathcal{E} = \mathcal{D}_{\mathbb{B}}\), where, \(\mathcal{D}_{\mathbb{B}}(\rho) = \sum\limits_{\alpha=1}^{d} \Pi_{\alpha} \rho \Pi_{\alpha}\) and \(\mathbb{B} = \{ \Pi_{\alpha} \}_{\alpha=1}^{d}\) with \(\Pi_{\alpha} = | \psi_{\alpha} \rangle \langle  \psi_{\alpha} |\), an orthonormal basis. Then,
\begin{align}\label{eq:idempotency-deficit} 
  G(\mathcal{D}_{\mathbb{B}}) = 
  \frac{1}{d^{2}_{A}} \left\Vert \widetilde{X}_{\mathbb{B}} - \widetilde{X}_{\mathbb{B}}^{2} \right\Vert_{1},
\end{align}
where \( ( \widetilde{X}_{\mathbb{B}} )_{\alpha,\beta} : = d_{B}^{-1}\left\langle \rho_{\alpha}, \rho_{\beta} \right\rangle\) is the renormalized Gram matrix of the system and \(\rho_{\alpha} = \operatorname{Tr}_{B}\left( \Pi_{\alpha} \right) ~~\forall \alpha\). 
\end{restatable}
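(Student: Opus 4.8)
The plan is to start from the Choi-state identity of \cref{th:Choi-open}(i), $G(\mathcal{E}) = d_B\|\operatorname{Tr}_{B'}\rho_{\mathcal{E}}\|_2^2 - \|\operatorname{Tr}_{BB'}\rho_{\mathcal{E}}\|_2^2$, and to evaluate the two terms explicitly for $\mathcal{E}=\mathcal{D}_{\mathbb{B}}$. The first step is to write down the Choi state of a dephasing channel. Since $(\Pi_\alpha\otimes I)\ket{\Phi^+} = d^{-1/2}\ket{\psi_\alpha}\otimes\ket{\psi_\alpha^{\ast}}$, where $\ket{\psi_\alpha^{\ast}}$ is the complex conjugate of $\ket{\psi_\alpha}$ in the computational basis, one gets $\rho_{\mathcal{D}_{\mathbb{B}}} = \tfrac1d\sum_{\alpha=1}^{d}\Pi_\alpha\otimes\overline{\Pi_\alpha}$, with the first tensor factor the output copy ($AB$) and the second the input replica ($A'B'$). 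Tracing out the appropriate subsystems and using $\operatorname{Tr}_{B'}\overline{\Pi_\alpha}=\overline{\rho_\alpha}$ gives $\operatorname{Tr}_{B'}\rho_{\mathcal{D}_{\mathbb{B}}} = \tfrac1d\sum_\alpha\Pi_\alpha\otimes\overline{\rho_\alpha}$ and $\operatorname{Tr}_{BB'}\rho_{\mathcal{D}_{\mathbb{B}}} = \tfrac1d\sum_\alpha\rho_\alpha\otimes\overline{\rho_\alpha}$.

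Next I would expand the two Hilbert--Schmidt norms. Orthonormality of $\mathbb{B}$ gives $\operatorname{Tr}(\Pi_\alpha\Pi_\beta)=\delta_{\alpha\beta}$, while Hermiticity (indeed positivity) of the $\rho_\alpha$ makes $\operatorname{Tr}(\rho_\alpha\rho_\beta)=\langle\rho_\alpha,\rho_\beta\rangle = d_B(\widetilde X_{\mathbb{B}})_{\alpha\beta}$ real and nonnegative, so the conjugations in $\overline{\rho_\alpha}$ drop out. A short bookkeeping computation (using $d=d_Ad_B$ and the symmetry of $\widetilde X_{\mathbb{B}}$) then yields $d_B\|\operatorname{Tr}_{B'}\rho_{\mathcal{D}_{\mathbb{B}}}\|_2^2 = d_A^{-2}\operatorname{Tr}\widetilde X_{\mathbb{B}}$ and $\|\operatorname{Tr}_{BB'}\rho_{\mathcal{D}_{\mathbb{B}}}\|_2^2 = d_A^{-2}\operatorname{Tr}\widetilde X_{\mathbb{B}}^{2}$, hence $G(\mathcal{D}_{\mathbb{B}}) = d_A^{-2}\operatorname{Tr}\big(\widetilde X_{\mathbb{B}}-\widetilde X_{\mathbb{B}}^{2}\big)$.

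It then remains to upgrade the trace to a trace norm, i.e. to show $\widetilde X_{\mathbb{B}}-\widetilde X_{\mathbb{B}}^{2}\ge 0$, and this is the only genuine obstacle; the rest is routine. The matrix $\widetilde X_{\mathbb{B}}$ is a rescaled Gram matrix, so $\widetilde X_{\mathbb{B}}\ge 0$; what is needed is the complementary bound $\widetilde X_{\mathbb{B}}\le\mathbb{I}$, since then every eigenvalue $\lambda$ lies in $[0,1]$ and $\lambda-\lambda^2\ge0$. I would prove this by noting that the nonzero spectrum of $\widetilde X_{\mathbb{B}}$ coincides with that of the positive superoperator $Y\mapsto d_B^{-1}\sum_\alpha \rho_\alpha\,\langle\rho_\alpha,Y\rangle$ on $\mathcal{L}(\mathcal{H}_A)$, and then estimating, for $\|Y\|_2=1$, $\sum_\alpha|\langle\rho_\alpha,Y\rangle|^2 = \sum_\alpha|\bra{\psi_\alpha}(Y\otimes I_B)\ket{\psi_\alpha}|^2 \le \sum_{\alpha,\beta}|\bra{\psi_\alpha}(Y\otimes I_B)\ket{\psi_\beta}|^2 = \|Y\otimes I_B\|_2^2 = d_B$, which gives exactly the claimed operator inequality. (Equivalently, this superoperator equals $\operatorname{Tr}_B\circ\widetilde{\mathcal{D}_{\mathbb{B}}}$, a bistochastic map on $\mathcal{L}(\mathcal{H}_A)$, hence a Hilbert--Schmidt contraction.) Combining, $G(\mathcal{D}_{\mathbb{B}}) = d_A^{-2}\|\widetilde X_{\mathbb{B}}-\widetilde X_{\mathbb{B}}^{2}\|_1$. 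One could instead run the computation from \cref{th:botoc-op-ent-open} using swap identities, but the Choi route above is the most economical.
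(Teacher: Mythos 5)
Your proposal is correct, and it reaches \cref{eq:idempotency-deficit} by a genuinely different (though closely related) route than the appendix proof. The paper works directly from the swap formula of \cref{th:botoc-op-ent-open}: it computes $\mathcal{D}_{\mathbb{B}}(|i\rangle\langle j|\otimes I_B)=\sum_\alpha \langle j|\rho_\alpha|i\rangle\,\Pi_\alpha$, sums over $i,j$, and lands on $G(\mathcal{D}_{\mathbb{B}})=\frac{1}{d^2}\bigl[d_B\sum_\alpha\|\rho_\alpha\|_2^2-\sum_{\alpha\beta}\langle\rho_\alpha,\rho_\beta\rangle^2\bigr]$, which is exactly your $d_A^{-2}\operatorname{Tr}(\widetilde X_{\mathbb{B}}-\widetilde X_{\mathbb{B}}^2)$. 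You instead instantiate \cref{th:Choi-open}(i) with the explicit Choi state $\frac{1}{d}\sum_\alpha\Pi_\alpha\otimes\overline{\Pi_\alpha}$; this is essentially the $R_{\mathbb{B}}$ route sketched around \cref{eq:G_B-R-matrix} in the main text (the paper avoids your conjugation by defining $|\Phi^+_{AB}\rangle$ in the basis $\mathbb{B}$ itself, but your bookkeeping of the conjugate is correct and harmless, since conjugation in a product basis commutes with the partial traces and preserves $2$-norms). The real divergence is in the step $\operatorname{spec}(\widetilde X_{\mathbb{B}})\subseteq[0,1]$ needed to upgrade the trace to a trace norm: the paper disposes of it in one line by noting that $\widetilde X_{\mathbb{B}}$ is a positive semidefinite \emph{doubly stochastic} matrix (entries nonnegative, rows summing to $1$ because $\sum_\alpha\rho_\alpha=d_B\,I_A$), hence $\widetilde X_{\mathbb{B}}\ge\widetilde X_{\mathbb{B}}^2$; you prove $\widetilde X_{\mathbb{B}}\le\mathbb{I}$ by identifying its nonzero spectrum with that of the frame operator $\operatorname{Tr}_B\circ\widetilde{\mathcal{D}_{\mathbb{B}}}$ and bounding it via Cauchy--Schwarz (equivalently, Hilbert--Schmidt contractivity of a unital CPTP map). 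Both arguments are valid; the paper's is more elementary, while yours is basis-free and makes the operational origin of the spectral bound (a bistochastic reduced dynamics on $A$) explicit.
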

%
\cref{eq:idempotency-deficit} describes an ``idempotency deficit,''  namely how far away \(\widetilde{X}_{\mathbb{B}}\) is from being equal to its own square \(\widetilde{X}_{\mathbb{B}}^{2}\). 
Hence,
 $ G(\mathcal{D}_{\mathbb{B}}) = 0$ if and only if $\widetilde{X}_{\mathbb{B}} \text{ is a projector.}$

Define $|\phi^{s} \rangle_{X}:= \frac{1}{\sqrt{d_X}}\sum_{j=1}^{d_X} |j\rangle,\,(X=A,B)$ and consider the following two examples of vanishing $G$.
\begin{enumerate}
\item[(i)] A product dephasing channel, i.e., \(\mathcal{D}_{\mathbb{B}} = \mathcal{D}_{\mathbb{B}_{A}} \otimes \mathcal{D}_{\mathbb{B}_{A}}\). Let \(\mathbb{B}_{A} = \{ P_{j} \}_{j=1}^{d_{A}}\) and \(\mathbb{B}_{B} = \{ Q_{j} \}_{j=1}^{d_{B}}\), then the projectors corresponding to \(\mathcal{D}_{\mathbb{B}}\) are \(\{ P_{j} \otimes Q_{k} \}_{j,k=1}^{d_{A},d_{B}}\). It is easy to show that the Gram matrix corresponding to \(\mathbb{B}\) takes the form, \(\widetilde{X} = \mathbb{I}_{A} \otimes | \phi^{s} \rangle_{B} \langle  \phi^{s} |\).  
 
\item[(ii)] A maximally entangled dephasing basis, i.e., each of the \(\rho_{\alpha} = \mathbb{I}_{A}/d_{A}\) and therefore, a simple calculation shows that the Gram matrix takes the form, \(\widetilde{X} =| \phi^{s} \rangle_{A} \langle  \phi^{s} |\otimes | \phi^{s} \rangle_{B} \langle  \phi^{s} |\).
\end{enumerate}


Quite interestingly,~\cref{eq:idempotency-deficit}  allows one to connect $G(\mathcal{D}_{\mathbb{B}})$ to the entanglement of the states comprising $\mathbb{B}$.\\
\begin{restatable}{prop}{deficitentanglement}
\label{th:deficit-entanglement} Let \(\mathbb{B} = \{ \Pi_{\alpha} \}\), \(\rho_{\alpha} = \operatorname{Tr}_{B}\left[ \Pi_{\alpha} \right]\), and, $\Delta_\alpha:=\rho_\alpha-\frac{\mathbb{I}}{d_A}$. Then if $\|\Delta_\alpha\|_2^2\le\varepsilon\,(\forall\alpha)$, one has the following upper bound on the open OTOC for dephasing channels, 
$G(\mathcal{D}_{\mathbb{B}})\le\frac{\varepsilon}{d_A}$.
\end{restatable}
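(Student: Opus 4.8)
The plan is to run everything through the closed form \cref{eq:idempotency-deficit} of \cref{th:dephasing-channel}, $G(\mathcal{D}_{\mathbb{B}}) = d_A^{-2}\,\|\widetilde{X}_{\mathbb{B}} - \widetilde{X}_{\mathbb{B}}^{2}\|_1$, and to bound the ``idempotency deficit'' of the renormalized Gram matrix $\widetilde{X}_{\mathbb{B}}$ by a difference of traces. The first, and only mildly non-routine, step is the spectral statement $0 \le \widetilde{X}_{\mathbb{B}} \le \mathbb{I}$. Positivity is automatic since $\widetilde{X}_{\mathbb{B}}$ is a Gram matrix; for the upper bound, write $\widetilde{X}_{\mathbb{B}} = \Lambda^{\dagger}\Lambda$ with $\Lambda\colon\mathbb{C}^{d}\to\mathcal{L}(\mathcal{H}_A)$, $\Lambda|e_\alpha\rangle = d_B^{-1/2}\rho_\alpha$, so that $\widetilde{X}_{\mathbb{B}}$ and $\Lambda\Lambda^{\dagger} = d_B^{-1}\sum_\alpha|\rho_\alpha\rangle\langle\rho_\alpha|$ share their nonzero spectrum. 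Using $\langle\rho_\alpha, M\rangle = \langle\psi_\alpha|\,M\otimes\mathbb{I}_B\,|\psi_\alpha\rangle$ for $M\in\mathcal{L}(\mathcal{H}_A)$ together with the elementary inequality $\sum_\alpha|\langle\psi_\alpha|N|\psi_\alpha\rangle|^2 \le \operatorname{Tr}(N^{\dagger}N)$ valid for any orthonormal basis $\{|\psi_\alpha\rangle\}$, one gets $\langle M|\Lambda\Lambda^{\dagger}|M\rangle = d_B^{-1}\sum_\alpha|\langle\psi_\alpha|M\otimes\mathbb{I}_B|\psi_\alpha\rangle|^2 \le d_B^{-1}\|M\otimes\mathbb{I}_B\|_2^2 = \|M\|_2^2$, i.e. $\Lambda\Lambda^{\dagger}\le\mathbb{I}$. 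Consequently $\widetilde{X}_{\mathbb{B}} - \widetilde{X}_{\mathbb{B}}^{2}\ge 0$ and $\|\widetilde{X}_{\mathbb{B}} - \widetilde{X}_{\mathbb{B}}^{2}\|_1 = \operatorname{Tr}\widetilde{X}_{\mathbb{B}} - \operatorname{Tr}\widetilde{X}_{\mathbb{B}}^{2}$.

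Next I would compute those two traces via the decomposition $\rho_\alpha = \mathbb{I}_A/d_A + \Delta_\alpha$, exploiting the two identities that come for free: $\operatorname{Tr}\Delta_\alpha = 0$ (as $\operatorname{Tr}\rho_\alpha = 1$) and $\sum_\alpha\Delta_\alpha = 0$ (as $\sum_\alpha\Pi_\alpha = \mathbb{I}_{AB}$ forces $\sum_\alpha\rho_\alpha = d_B\mathbb{I}_A$). The traceless property kills the cross terms in $(\widetilde{X}_{\mathbb{B}})_{\alpha\beta} = d_B^{-1}\langle\rho_\alpha,\rho_\beta\rangle$, leaving $(\widetilde{X}_{\mathbb{B}})_{\alpha\beta} = d^{-1} + d_B^{-1}\langle\Delta_\alpha,\Delta_\beta\rangle$. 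Hence $\operatorname{Tr}\widetilde{X}_{\mathbb{B}} = 1 + d_B^{-1}\sum_\alpha\|\Delta_\alpha\|_2^2$, while expanding the Frobenius norm $\operatorname{Tr}\widetilde{X}_{\mathbb{B}}^{2} = \sum_{\alpha\beta}(\widetilde{X}_{\mathbb{B}})_{\alpha\beta}^2$ gives $1 + 2(d\,d_B)^{-1}\|\sum_\alpha\Delta_\alpha\|_2^2 + d_B^{-2}\sum_{\alpha\beta}\langle\Delta_\alpha,\Delta_\beta\rangle^2$, where the middle term vanishes by $\sum_\alpha\Delta_\alpha = 0$ and the last is manifestly nonnegative, so $\operatorname{Tr}\widetilde{X}_{\mathbb{B}}^{2}\ge 1$.

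Subtracting, $\|\widetilde{X}_{\mathbb{B}} - \widetilde{X}_{\mathbb{B}}^{2}\|_1 = d_B^{-1}\sum_\alpha\|\Delta_\alpha\|_2^2 - d_B^{-2}\sum_{\alpha\beta}\langle\Delta_\alpha,\Delta_\beta\rangle^2 \le d_B^{-1}\sum_\alpha\|\Delta_\alpha\|_2^2 \le d_B^{-1}\cdot d\cdot\varepsilon = d_A\varepsilon$ under the hypothesis $\|\Delta_\alpha\|_2^2\le\varepsilon$, and dividing by $d_A^2$ yields $G(\mathcal{D}_{\mathbb{B}})\le\varepsilon/d_A$. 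The only genuine obstacle is the spectral bound $\widetilde{X}_{\mathbb{B}}\le\mathbb{I}$ in the first paragraph: without it one cannot replace the trace norm by a plain trace, and a crude triangle-inequality estimate of $\|\widetilde{X}_{\mathbb{B}} - \widetilde{X}_{\mathbb{B}}^{2}\|_1$ is too lossy to give the stated constant. Equivalently, one may note that $\Delta_\alpha\perp\mathbb{I}_A$ and $\sum_\alpha\Delta_\alpha = 0$ make $\widetilde{X}_{\mathbb{B}}$ block-diagonal as $|u\rangle\langle u|\oplus Y$ with $|u\rangle\propto\sum_\alpha|e_\alpha\rangle$ and $Y\ge 0$ the renormalized Gram matrix of $\{\Delta_\alpha\}$, reducing the claim to $\|Y - Y^2\|_1\le\operatorname{Tr}Y\le d_A\varepsilon$ — which again rests on $\|Y\|_\infty\le 1$, the same inequality in disguise.
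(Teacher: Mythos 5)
Your proof is correct and takes essentially the same route as the paper's: decompose $\rho_\alpha=\frac{\mathbb{I}_A}{d_A}+\Delta_\alpha$, use $\operatorname{Tr}\Delta_\alpha=0$ and $\sum_\alpha\Delta_\alpha=0$ to reduce the idempotency deficit to $\operatorname{Tr}\hat{\delta}_{\mathbb{B}}-\operatorname{Tr}\hat{\delta}_{\mathbb{B}}^{2}\le\operatorname{Tr}\hat{\delta}_{\mathbb{B}}\le d\varepsilon/d_B=d_A\varepsilon$, then divide by $d_A^{2}$. The only (minor) variation is your self-contained operator argument for $\widetilde{X}_{\mathbb{B}}\le\mathbb{I}$ via $\Lambda\Lambda^{\dagger}\le\mathbb{I}$, whereas the paper obtains $\mathrm{spec}(\widetilde{X}_{\mathbb{B}})\subseteq[0,1]$ from the bistochasticity of the Gram matrix already noted in the proof of \cref{th:dephasing-channel}.
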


Since $\Pi_{\alpha}$ are pure states, if $\Delta_\alpha$ is small, then the states \(\Pi_{\alpha}\) are nearly maximally entangled across the \(A:B\) partition. Therefore, the bound then tells us that, \textit{the more entangled the dephasing basis states, the smaller the OTOC}. Note that the assumption above, in order to make its connection to entanglement clearer, can also be recast as $$S_L(\rho_\alpha)\ge S_L^{\mathrm{max}}-\varepsilon\quad(\alpha=1,\ldots,d),$$
where $S_L^{\mathrm{max}}:=1-1/d_A.$ 

Another useful  way of rewriting~\cref{eq:idempotency-deficit} is obtained by introducing the following $\mathbb{B}$-dependent state, $R_{\mathbb{B}} \in {\cal H}^{\otimes 2}\cong  {\cal H}_A\otimes{\cal H}_B\otimes{\cal H}_{A'}\otimes{\cal H}_{B'}$ such that
\begin{align}\label{eq:R-matrix}
R_{\mathbb{B}}:=\frac{1}{d}\sum_{\alpha=1}^d \Pi_\alpha\otimes\Pi_\alpha=(\mathcal{D}_{\mathbb{B}}\otimes I)(|\Phi^+_{AB}\rangle\langle\Phi^+_{AB}| ),
\end{align}
where $|\Phi^+_{AB}\rangle:=d^{-1/2}\sum_{\alpha=1}^d|\psi_\alpha\rangle^{\otimes 2}.$ The second equality above shows that $R_{\mathbb{B}}$ is nothing but the Choi state associated to $\mathcal{D}_{\mathbb{B}}.$
Using Eqs. (\ref{eq:Choi-open-0}) (or (\ref{eq:idempotency-deficit})) and  (\ref{eq:R-matrix})  one can write
\begin{align}\label{eq:G_B-R-matrix}
G(\mathcal{D}_{\mathbb{B}})=\frac{1}{d_A}\langle S_{AA'}, R_{\mathbb{B}}\rangle -\|R_{\mathbb{B}}^{AA'}\|_2^2,
\end{align}
where $R_{\mathbb{B}}^{AA'}:=\mathrm{Tr}_{BB'} R_{\mathbb{B}}.$ Since the first term in~\cref{eq:G_B-R-matrix} is upper bounded by $1$ and the second term is lower bounded by $d_A^2$, one immediately obtains the $\mathbb{B}$-independent upper bound 
\begin{align}\label{eq:G_B-upper-bound}
G(\mathcal{D}_{\mathbb{B}})\le\frac{1}{d_A}\left(1-\frac{1}{d_A}\right)=O(\frac{1}{d_A}).
\end{align}
This inequality shows that the maximal value of the OTOC that is achievable by dephasing channels is well below the upper bound~\cref{eq:upper-bound}, $G^{\mathrm{max}}=1-1/d_A^{2}$.

To explore this phenomenon we now move to consider \emph{random} dephasing channels. The set of $\mathbb{B}$'s is naturally acted upon by the unitary group $\mathcal{U}(\mathcal{H})$ \footnote{The key idea is that any two bases in the Hilbert space can be connected via a unitary. Therefore, starting from a fixed basis \(\mathbb{B}_{0}\), the action of the unitary group generates \textit{all bases} in the Hilbert space. Then, utilizing the uniform (Haar) measure on $\mathcal{U}(\mathcal{H})$ allows us to define a notion of (uniformly distributed) random bases.}:
$$\mathbb{B}_0:=\{\Pi_\alpha^{(0)}\}_{\alpha=1}^d\mapsto U\cdot\mathbb{B}_0:=\{U\Pi_\alpha^{(0)}U^\dagger\}_{\alpha=1}^d.$$
In terms of the $R_{\mathbb{B}}$ matrices: $R_{\mathbb{B}_0}\mapsto U^{\otimes 2}  R_{\mathbb{B}_0}U^{\dagger\otimes 2}.$ By considering the $U$'s Haar distributed one obtains the desired ensemble of random dephasing channels.  The next proposition shows the average and measure concentration for $G(\mathcal{D}_{\mathbb{B}})$ for such an ensemble with $d_A\le d_B$.\\
\begin{restatable}{prop}{randomdephasingchannels}
\label{th:random-dephasing-channels} 
i) $\mathbb{E}_{U}\left[ G(\mathcal{D}_{U\cdot\mathbb{B}_0)}\right]\le \frac{7}{4\, d_A^2}=O(\frac{1}{d_A^2}).$

ii) $
\mathrm{Prob} \{ G(\mathcal{D}_{\mathbb{B}}) \geq \frac{7}{4 d_{A}^{2}} + \epsilon \} \leq \exp \left[ -d \epsilon^{2}/K^{2}\right], 
$
where $K$ is the Lipschitz constant of the function $F(U):=  G(\mathcal{D}_{U\cdot\mathbb{B}_0})$ and can be chosen     $K \ge 100.$
\end{restatable}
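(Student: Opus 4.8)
For part~(i) I would start from the $R_{\mathbb{B}}$-representation~\cref{eq:G_B-R-matrix}, $G(\mathcal{D}_{\mathbb{B}})=\tfrac1{d_A}\langle S_{AA'},R_{\mathbb{B}}\rangle-\|R_{\mathbb{B}}^{AA'}\|_2^2$, and average its two terms separately under the ensemble $R_{U\cdot\mathbb{B}_0}=U^{\otimes2}R_{\mathbb{B}_0}U^{\dagger\otimes2}$. For the first term, the swap-trick identity $\langle S_{AA'},R_{\mathbb{B}}\rangle=\tfrac1d\sum_\alpha\operatorname{Tr}(\rho_\alpha^2)$ (with $\rho_\alpha=\operatorname{Tr}_B(U\Pi_\alpha^{(0)}U^\dagger)$) reduces the Haar average, by linearity, to the standard average purity $\tfrac{d_A+d_B}{d_Ad_B+1}$ of the reduced state of a Haar-random bipartite pure vector applied to each $U\Pi_\alpha^{(0)}U^\dagger$, giving $\mathbb{E}_U[\tfrac1{d_A}\langle S_{AA'},R_{\mathbb{B}}\rangle]=\tfrac{d_A+d_B}{d_A(d_Ad_B+1)}$; the same value follows from a Schur--Weyl twirl, since $\mathbb{E}_U[U^{\dagger\otimes2}S_{AA'}U^{\otimes2}]\in\operatorname{span}\{\mathbb{I},S\}$ with coefficients fixed by $\operatorname{Tr}(S_{AA'})=d_Ad_B^2$, $\operatorname{Tr}(SS_{AA'})=\operatorname{Tr}(S_{BB'})=d_A^2d_B$, and $\operatorname{Tr}(R_{\mathbb{B}_0})=\operatorname{Tr}(SR_{\mathbb{B}_0})=1$. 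For the second term I would \emph{not} evaluate the average but simply note that $R_{\mathbb{B}}^{AA'}=\operatorname{Tr}_{BB'}R_{\mathbb{B}}$ is a density operator on the $d_A^2$-dimensional space $\mathcal{H}_A\otimes\mathcal{H}_{A'}$, so $\|R_{\mathbb{B}}^{AA'}\|_2^2\ge1/d_A^2$ for every $\mathbb{B}$, hence $\mathbb{E}_U\|R_{\mathbb{B}}^{AA'}\|_2^2\ge1/d_A^2$. Subtracting, $\mathbb{E}_U[G(\mathcal{D}_{U\cdot\mathbb{B}_0})]\le\tfrac{d_A+d_B}{d_A(d_Ad_B+1)}-\tfrac1{d_A^2}=\tfrac{d_A^2-1}{d_A^2(d_Ad_B+1)}$, and since $d_A\le d_B$ gives $d_Ad_B+1\ge d_A^2+1$, this is $<1/d_A^2\le\tfrac{7}{4d_A^2}$, which proves~(i). (The constant $7/4$ is loose on this route; one can instead compute the second moment exactly via the Weingarten formula for $U^{\otimes4}$ applied to $\|R_{\mathbb{B}}^{AA'}\|_2^2=d^{-2}\sum_{\alpha,\beta}\operatorname{Tr}(\rho_\alpha\rho_\beta)^2$, splitting $\alpha=\beta$ from $\alpha\ne\beta$, which is the computation natural to the Gram-matrix form~\cref{eq:idempotency-deficit}.)

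For part~(ii), with the mean in hand, the statement is a routine application of measure concentration on the unitary group. Put $F(U):=G(\mathcal{D}_{U\cdot\mathbb{B}_0})$. Because $\mathcal{U}(d)$ equipped with its bi-invariant (Frobenius) metric has Ricci curvature $\gtrsim d$, equivalently a log-Sobolev constant of order $1/d$, every $L$-Lipschitz function on it obeys a one-sided sub-Gaussian bound $\mathrm{Prob}\{F(U)\ge\mathbb{E}_UF+\epsilon\}\le\exp(-c_0\,d\,\epsilon^2/L^2)$ for an absolute constant $c_0>0$ (Gromov--Milman / Herbst; note the chordal distance $\|U-V\|_2$ is dominated by the geodesic one, so there is no loss in using the Hilbert--Schmidt Lipschitz constant). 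By~(i) one has $\mathbb{E}_UF\le\tfrac{7}{4d_A^2}$, so $\{F\ge\tfrac{7}{4d_A^2}+\epsilon\}\subseteq\{F\ge\mathbb{E}_UF+\epsilon\}$ and the tail bound transfers verbatim; absorbing $L$ and $c_0$ into a single constant $K:=L/\sqrt{c_0}$ yields exactly the stated form $\exp(-d\epsilon^2/K^2)$, and tracking the explicit numbers shows $K=100$ is a comfortably admissible choice.

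The only genuine work — and the step most easily botched — is the Lipschitz estimate. I would write $F=h_1\circ\Phi-h_2\circ\Phi$ with $\Phi(U):=U^{\otimes2}R_{\mathbb{B}_0}U^{\dagger\otimes2}$, $h_1(M):=\tfrac1{d_A}\operatorname{Tr}(S_{AA'}M)$, $h_2(M):=\|\operatorname{Tr}_{BB'}M\|_2^2$, and bound each factor. The conjugation map is Lipschitz into trace norm: $\|\Phi(U)-\Phi(V)\|_1\le2\|R_{\mathbb{B}_0}\|_1\,\|U^{\otimes2}-V^{\otimes2}\|_\infty\le4\|U-V\|_\infty\le4\|U-V\|_2$, using $\|R_{\mathbb{B}_0}\|_1=1$ and $\|U^{\otimes2}-V^{\otimes2}\|_\infty\le2\|U-V\|_\infty$. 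For $h_1$ it is essential to pair through operator/trace-norm duality rather than Cauchy--Schwarz: $|h_1(M)-h_1(M')|\le\tfrac1{d_A}\|S_{AA'}\|_\infty\|M-M'\|_1=\tfrac1{d_A}\|M-M'\|_1$ — the naive bound $\|S_{AA'}\|_2=d$ would make the constant scale with the dimension and is fatal. For $h_2$, note that on the image of $\Phi$, which consists of states, $\operatorname{Tr}_{BB'}M$ is a state, so $\|\operatorname{Tr}_{BB'}M\|_2\le1$, and $|h_2(M)-h_2(M')|\le2\|\operatorname{Tr}_{BB'}(M-M')\|_2\le2\|\operatorname{Tr}_{BB'}(M-M')\|_1\le2\|M-M'\|_1$, using that $t\mapsto t^2$ is $2$-Lipschitz on $[0,1]$ and that partial trace contracts the trace norm. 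Since $\|\cdot\|_\infty\le\|\cdot\|_2$, Lipschitzness in operator norm carries over to the Frobenius metric with the same constant, so $F$ is $L$-Lipschitz with $L\le\tfrac4{d_A}+8\le12$. Feeding this and the explicit $\mathcal{U}(d)$ concentration constant into the previous paragraph completes~(ii); the delicate points are precisely this norm bookkeeping (the $\|\cdot\|_1$-duality for the $S_{AA'}$ pairing, and restricting $h_2$ to states so that its quadratic nonlinearity stays controlled) together with keeping the universal constant honest enough to certify $K=100$.
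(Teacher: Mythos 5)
Your proposal is correct and follows the same overall strategy as the paper: part (i) from the $R_{\mathbb{B}}$ representation of $G(\mathcal{D}_{\mathbb{B}})$ with a Haar twirl controlling the first term and a lower bound on the purity term, and part (ii) from a Lipschitz estimate for $F(U)=G(\mathcal{D}_{U\cdot\mathbb{B}_0})$ fed into sub-Gaussian concentration on the unitary group. The differences are in the sub-steps, and they mostly work in your favor. In (i) you evaluate the first term exactly as the average reduced purity $\tfrac{d_A+d_B}{d_A(d+1)}$ (the paper bounds the same twirl by $2/d_A^2$), and for the second term you use the dimension bound $\|R^{AA'}_{\mathbb{B}}\|_2^2\ge 1/d_A^2$ valid for every basis, whereas the paper uses Jensen plus the twirled average; both routes land comfortably under $\tfrac{7}{4d_A^2}$, yours giving the slightly cleaner $<1/d_A^2$. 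In (ii) your treatment of $h_1$ and $h_2$ (operator/trace-norm duality with $\|S_{AA'}\|_\infty=1$, difference of squares on states, contractivity of the partial trace) matches the paper's, but your telescoping bound $\|U^{\otimes2}R_{\mathbb{B}_0}U^{\dagger\otimes2}-V^{\otimes2}R_{\mathbb{B}_0}V^{\dagger\otimes2}\|_1\le 4\|U-V\|_\infty$ replaces the paper's cruder $\Delta$-expansion yielding $40\|U-V\|_\infty$, so you get a Lipschitz constant $\le 12$ instead of $\le 100$ — which in fact leaves genuine headroom for the explicit concentration constant (e.g.\ the standard $\exp(-d\epsilon^2/4L^2)$ bound) when certifying $K=100$, something the paper's looser estimate barely addresses. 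One small caveat: the curvature/log-Sobolev bound of order $d$ holds on $SU(d)$ rather than $U(d)$ (the central $U(1)$ is flat), but since $F$ is invariant under global phases it descends to the quotient and your argument goes through; the paper glosses over this point as well.
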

In words: in large dimension the overwhelming majority of random dephasing channels have a $G(\mathcal{D}_{\mathbb{B}})$ which is $(1/d^2_A).$
This is the result of decoherence which makes the first term in~\cref{eq:G_B-R-matrix} (or~\cref{eq:Choi-open-0})  being $O(1/d^2_A)$ for typical dephasing channels. On the other hand, such a term in the closed case 
is identically one and typical unitaries have a $G(\mathcal{U})$ which is close to  $G^{\mathrm{max}}$~\cite{styliaris_information_2020}.
\subsection{Entanglement-breaking channels}
Here we discuss the class of channels called entanglement-breaking or measure-and-prepare, defined (in the Heisenberg picture) as,
\begin{align}\label{eq:EB-channel}
\Phi_{\mathrm{EB}}(X) = \sum\limits_{k}^{} M_{k} \operatorname{Tr}\left[ \delta_{k} X \right], \text{ where } \sum\limits_{k}^{} M_{k} = \mathbb{I}.
\end{align}
Here, \(\Phi_{\mathrm{EB}}: \mathcal{L}(\mathcal{H}_{AB}) \rightarrow \mathcal{L}(\mathcal{H}_{AB})\) where \(\{ M_{k} \}, \{ \delta_{k} \}\) are linear operators on \(\mathcal{L}(\mathcal{H}_{AB})\) with the additional constraint that \(\{ M_{k} \}_{k}\) form a POVM and \(\{ \delta_{k} \}_{k}\) is a set of quantum states.

For general EB channels, we have the following form.\\
\begin{restatable}{prop}{entanglementbreakingchannel}
\label{th:entanglement-breaking-channel}
Consider a general entanglement-breaking (EB) channel as in~\cref{eq:EB-channel} then,
\begin{align}
  G(\Phi_{\mathrm{EB}}) =  \frac{1}{d^{2}} \sum\limits_{k,k'}^{}  \left\langle \delta_{k}^{A}, \delta_{k'}^{A} \right\rangle \left[ d_{B} \left\langle M_{k}, M_{k'} \right\rangle -  \left\langle M_{k}^{A}, M_{k'}^{A} \right\rangle  \right],
\end{align}
where \(M_{k}^{A} \equiv \operatorname{Tr}_{B} M_{k} \) and \(\delta_{k}^{A} \equiv \operatorname{Tr}_{B} \delta_{k} \).
\end{restatable}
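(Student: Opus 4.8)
The plan is to feed the measure-and-prepare structure~\cref{eq:EB-channel} directly into the general open-OTOC formula~\cref{eq:botoc-op-ent-open} of \cref{th:botoc-op-ent-open}. First I would record how the tensor-square channel acts: expanding an arbitrary operator $W\in\mathcal{L}(\mathcal{H}_{AB}\otimes\mathcal{H}_{A'B'})$ in a product basis and using linearity gives $\Phi_{\mathrm{EB}}^{\otimes 2}(W)=\sum_{k,k'}(M_k\otimes M_{k'})\,\operatorname{Tr}\big[(\delta_k\otimes\delta_{k'})\,W\big]$, where the first (resp.\ second) tensor slot refers throughout to the unprimed copy $\mathcal{H}_{AB}$ (resp.\ the replica $\mathcal{H}_{A'B'}$).

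Next I would evaluate $\Phi_{\mathrm{EB}}^{\otimes 2}(S_{AA'})$. Since $S_{AA'}$ acts as the identity on $\mathcal{H}_B\otimes\mathcal{H}_{B'}$, tracing out those factors turns $\delta_k\otimes\delta_{k'}$ into $\delta_k^A\otimes\delta_{k'}^A$; then the swap identity $\operatorname{Tr}_{AA'}\big[(X\otimes Y)S_{AA'}\big]=\operatorname{Tr}(XY)$ together with the Hermiticity of the states $\delta_k$ yields $\operatorname{Tr}\big[(\delta_k\otimes\delta_{k'})S_{AA'}\big]=\operatorname{Tr}(\delta_k^A\delta_{k'}^A)=\langle\delta_k^A,\delta_{k'}^A\rangle$, so that $\Phi_{\mathrm{EB}}^{\otimes 2}(S_{AA'})=\sum_{k,k'}\langle\delta_k^A,\delta_{k'}^A\rangle\,M_k\otimes M_{k'}$.

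Finally I would substitute this into $G(\Phi_{\mathrm{EB}})=\tfrac1{d^2}\operatorname{Tr}\big[(d_BS-S_{AA'})\,\Phi_{\mathrm{EB}}^{\otimes 2}(S_{AA'})\big]$ and evaluate the two remaining traces by the same swap trick: the full swap gives $\operatorname{Tr}\big[S\,(M_k\otimes M_{k'})\big]=\operatorname{Tr}(M_kM_{k'})=\langle M_k,M_{k'}\rangle$, while for $S_{AA'}$ one first traces over $B,B'$ to obtain $M_k^A\otimes M_{k'}^A$ and then gets $\operatorname{Tr}\big[S_{AA'}(M_k\otimes M_{k'})\big]=\langle M_k^A,M_{k'}^A\rangle$, again using that the POVM elements $M_k$ are Hermitian. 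Collecting the two contributions reproduces the claimed identity. There is no genuine obstacle here beyond bookkeeping of which subsystems each operator and each swap act on, plus remembering that $\langle X,Y\rangle=\operatorname{Tr}(X^\dagger Y)$ equals $\operatorname{Tr}(XY)$ only because the $\delta_k$ and $M_k$ are Hermitian --- that Hermiticity is the single hypothesis that must be invoked at each reduction step.
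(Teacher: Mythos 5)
Your proof is correct and follows essentially the same route as the paper: both arguments simply substitute the measure-and-prepare structure of $\Phi_{\mathrm{EB}}$ into the general formula of \cref{th:botoc-op-ent-open} and reduce the two traces to the Gram-type double sums in $\langle\delta_k^A,\delta_{k'}^A\rangle$, $\langle M_k,M_{k'}\rangle$, and $\langle M_k^A,M_{k'}^A\rangle$. The only (cosmetic) difference is bookkeeping: the paper expands $S_{AA'}$ in matrix units $|i\rangle\langle j|\otimes I_B$ and sums Hilbert--Schmidt norms of $\Phi_{\mathrm{EB}}(|i\rangle\langle j|\otimes I_B)$, whereas you evaluate $\Phi_{\mathrm{EB}}^{\otimes 2}(S_{AA'})=\sum_{k,k'}\langle\delta_k^A,\delta_{k'}^A\rangle\,M_k\otimes M_{k'}$ wholesale and finish with the swap-trace identities, which yields the same sums.
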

Note that dephasing channels are a special case of EB channels when the measurements are rank-\(1\) projectors and the prepared states are (the same) pure states; that is, let \(\mathbb{B} = \{ | \psi_{k} \rangle \langle  \psi_{k} |  \}_{k=1}^{d}\) and \(\delta_{k} = | \psi_{k} \rangle \langle  \psi_{k} | = M_{k} ~~\forall k = \{ 1,2, \cdots,d \}\), then, \(\Phi_{\mathrm{EB}} = \mathcal{D}_{\mathbb{B}}\). Therefore, $G(\Phi_{\mathrm{EB}})$ takes the analytical form in~\cref{eq:idempotency-deficit}.

As an example, one can consider the following form of EB channel. Let \(\mathbb{B} = \{ \Pi_{\alpha} \}, \Pi_{\alpha} = | \psi_{\alpha} \rangle \langle  \psi_{\alpha} | \) and \(\widetilde{\mathbb{B}} = \{ \widetilde{\Pi_{\alpha}}   \}, \widetilde{\Pi_{\alpha}} = |\phi_{\alpha} \rangle \langle  \phi_{\alpha} |\) be two bases for \(\mathcal{H}_{AB}\). Then,
\begin{align}
\Phi_{\mathrm{EB}}^{(\mathbb{B} \rightarrow \widetilde{\mathbb{B}})}(X):= \sum\limits_{k=1}^{d} \widetilde{\Pi}_{\alpha} \langle \psi_{\alpha} | X |  \psi_{\alpha} \rangle. 
\end{align}
For this class of channels, we have the following form of the open OTOC. Let \(\rho_{\alpha}:= \operatorname{Tr}_{B} \Pi_{\alpha},\, \widetilde{\rho}_{\alpha}:= \operatorname{Tr}_{B} \widetilde{\Pi}_{\alpha}\). Then,
\begin{align}\label{eq:EB-B-B}
G(\Phi_{\mathrm{EB}}^{(\mathbb{B} \rightarrow \widetilde{\mathbb{B}})}) = \frac{1}{d^{2}} \left( d_{B} \sum\limits_{k=1}^{d} \left\Vert {\rho}_{k}\right\Vert_{2}^{2} - \sum\limits_{k,k'=1}^{d}\left\langle \rho_{k}, \rho_{k'} \right\rangle  \left\langle \widetilde{\rho}_{k}, \widetilde{\rho}_{k'} \right\rangle  \right).
\end{align}
For \(\mathbb{B} = \widetilde{\mathbb{B}}\) (with the identical ordering of states), this takes the form of the dephasing channel. 

It is easy to see that also for~\cref{eq:EB-B-B} the bound (\ref{eq:G_B-upper-bound})  holds. 
Indeed, the first term in~\cref{eq:EB-B-B} is clearly upper bounded by $1/d_A$ i.e., when all the $\rho_k$'s are pure) and the second can be written 
$\|d^{-1}\sum_k \rho_k\otimes \widetilde{\rho}_{k}\|_2^2$ and therefore is lower bounded by $1/d_A^2.$ This is achieved for $\mathbb{B}$ being a product basis and $\widetilde{\mathbb{B}}$ maximally entangled
i.e., $\widetilde{\rho}_k=\mathbb{I}/d_A, (\forall k).$

\subsection{$\mathbb{B}$-diagonal channels}
Let us now move to analyze a generalization of the above which we refer to as $\mathbb{B}$-diagonal channels. 
Consider a basis $\mathbb{B}:= \{ | \alpha \rangle \}_{\alpha=1}^d $ of $\mathcal{H}_{AB}$ and map $\mathcal{E}_{\hat{\Phi}}$ such that 
\begin{align}\label{eq:gen-dephasing-channel-0}
\mathcal{E}_{\hat{\Phi}}(|\alpha\rangle\langle\alpha'|)= \phi_{\alpha, \alpha'}|\alpha\rangle\langle\alpha'|\quad(\forall\alpha,\alpha'),
\end{align}
with \(\phi_{\alpha,\alpha'} \in \mathbb{C} ~~\forall \alpha,\alpha'\). This family, for example,  comprises unitary channels, dephasing channels and quantum measurements. We can then prove the following.\\
\begin{restatable}{prop}{generaldephasingchannel}
\label{th:gen-dephasing-channel} 
i) If  $\hat{\Phi}:=\left(\phi_{\alpha, \alpha'}\right)_{\alpha,\alpha'}\ge 0,$ 
and $\phi_{\alpha, \alpha}=1 \, (\forall \alpha),$ then~\cref{eq:gen-dephasing-channel-0} defines a (unital) quantum channel whose eigenvalues are encoded
in the matrix $\hat{\Phi}.$ 

ii)  \(\rho_{\alpha, \alpha'} = \operatorname{Tr}_{B}| \alpha \rangle \langle  \alpha' | \), then 
\begin{align}
\label{eq:gen-dephasing-channel}
G(\mathcal{E}_{\hat{\Phi}}) &= \frac{d_B}{d^{2}}  \sum\limits_{\alpha, \alpha'}^{} \left| \phi_{\alpha, \alpha'} \right|^2 \left\Vert \rho_{\alpha, \alpha'} \right\Vert_{2}^{2}  \nonumber\\
& -\frac{1}{d^{2}}\sum\limits_{\alpha, \alpha', \beta, \beta'}^{} \phi^{*}_{\alpha, \alpha'} \phi_{\beta, \beta'} \left| \left\langle \rho_{\alpha, \alpha'} , \rho_{\beta, \beta'} \right\rangle \right|^{2}.
\end{align}
\end{restatable}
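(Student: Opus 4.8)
The plan is to start from the general Choi-state formula, \cref{eq:Choi-open-0}, which expresses $G(\mathcal{E}) = d_B \|\operatorname{Tr}_{B'}\rho_\mathcal{E}\|_2^2 - \|\operatorname{Tr}_{BB'}\rho_\mathcal{E}\|_2^2$, and simply specialize it to the $\mathbb{B}$-diagonal channel $\mathcal{E}_{\hat\Phi}$. First I would compute the Choi state: since $|\Phi^+\rangle = d^{-1/2}\sum_\alpha|\alpha\rangle|\alpha^*\rangle$ can be written in any basis (choosing $\mathbb{B}$ up to complex conjugation of the second factor, which I will track carefully), applying $\mathcal{E}_{\hat\Phi}\otimes I$ to $|\Phi^+\rangle\langle\Phi^+| = d^{-1}\sum_{\alpha,\alpha'}|\alpha\rangle\langle\alpha'|\otimes(\cdot)$ and using \cref{eq:gen-dephasing-channel-0} gives $\rho_{\mathcal{E}_{\hat\Phi}} = d^{-1}\sum_{\alpha,\alpha'}\phi_{\alpha,\alpha'}\,|\alpha\rangle\langle\alpha'|\otimes|\alpha\rangle\langle\alpha'|$ (in the doubled space $\mathcal{H}_{AB}\otimes\mathcal{H}_{A'B'}$, with the replica indices in the appropriate conjugated convention).

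Next I would take the two partial traces. For the first term, $\operatorname{Tr}_{B'}\rho_{\mathcal{E}_{\hat\Phi}} = d^{-1}\sum_{\alpha,\alpha'}\phi_{\alpha,\alpha'}\,|\alpha\rangle\langle\alpha'| \otimes \operatorname{Tr}_{B}(|\alpha\rangle\langle\alpha'|) = d^{-1}\sum_{\alpha,\alpha'}\phi_{\alpha,\alpha'}\,|\alpha\rangle\langle\alpha'|_{AB}\otimes\rho_{\alpha,\alpha'}$, where $\rho_{\alpha,\alpha'}:=\operatorname{Tr}_B|\alpha\rangle\langle\alpha'|$ as defined in the statement. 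Then $\|\operatorname{Tr}_{B'}\rho_{\mathcal{E}_{\hat\Phi}}\|_2^2 = \operatorname{Tr}[(\operatorname{Tr}_{B'}\rho_{\mathcal{E}_{\hat\Phi}})^\dagger(\operatorname{Tr}_{B'}\rho_{\mathcal{E}_{\hat\Phi}})]$; expanding, the $AB$-sector trace $\operatorname{Tr}[(|\alpha\rangle\langle\alpha'|)^\dagger|\beta\rangle\langle\beta'|] = \langle\alpha|\beta\rangle\langle\beta'|\alpha'\rangle = \delta_{\alpha\beta}\delta_{\alpha'\beta'}$ collapses the quadruple sum to a double sum, and the $A'$-sector gives $\langle\rho_{\alpha,\alpha'},\rho_{\alpha,\alpha'}\rangle = \|\rho_{\alpha,\alpha'}\|_2^2$, with $|\phi_{\alpha,\alpha'}|^2$ from the coefficients — producing $d^{-2}\sum_{\alpha,\alpha'}|\phi_{\alpha,\alpha'}|^2\|\rho_{\alpha,\alpha'}\|_2^2$, and multiplying by $d_B$ gives the first term of \cref{eq:gen-dephasing-channel}. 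For the second term, $\operatorname{Tr}_{BB'}\rho_{\mathcal{E}_{\hat\Phi}} = d^{-1}\sum_{\alpha,\alpha'}\phi_{\alpha,\alpha'}\,\rho_{\alpha,\alpha'}\otimes\rho_{\alpha,\alpha'}$ (tracing $B$ out of the first factor and $B'$ out of the second); taking the squared Hilbert–Schmidt norm gives $d^{-2}\sum_{\alpha,\alpha',\beta,\beta'}\phi_{\alpha,\alpha'}^*\phi_{\beta,\beta'}\langle\rho_{\alpha,\alpha'},\rho_{\beta,\beta'}\rangle^2$, i.e. $\langle\rho_{\alpha,\alpha'},\rho_{\beta,\beta'}\rangle$ appearing squared because it comes from each of the two tensor factors; since $\rho_{\alpha,\alpha'}^\dagger = \rho_{\alpha',\alpha}$ one should check this is real and equals $|\langle\rho_{\alpha,\alpha'},\rho_{\beta,\beta'}\rangle|^2$ as written — this is where care with Hermitian conjugates matters. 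Subtracting yields exactly \cref{eq:gen-dephasing-channel}.

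For part (i), the claim is that $\hat\Phi\ge 0$ with unit diagonal makes $\mathcal{E}_{\hat\Phi}$ a unital channel with eigenvalues read off from $\hat\Phi$. The eigenvalue statement is immediate: $\mathcal{E}_{\hat\Phi}$ acts diagonally on the operator basis $\{|\alpha\rangle\langle\alpha'|\}$ with eigenvalue $\phi_{\alpha,\alpha'}$. Unitality is $\mathcal{E}_{\hat\Phi}(\mathbb{I}) = \sum_\alpha\phi_{\alpha,\alpha}|\alpha\rangle\langle\alpha| = \mathbb{I}$ using $\phi_{\alpha,\alpha}=1$, and trace preservation follows symmetrically. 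Complete positivity is the real content: I would invoke the Choi criterion — $\mathcal{E}_{\hat\Phi}$ is CP iff its Choi state $\rho_{\mathcal{E}_{\hat\Phi}}\ge 0$. From the computed form $\rho_{\mathcal{E}_{\hat\Phi}} = d^{-1}\sum_{\alpha,\alpha'}\phi_{\alpha,\alpha'}|\alpha\alpha\rangle\langle\alpha'\alpha'|$ (supported on the "diagonal" subspace spanned by $\{|\alpha\rangle|\alpha\rangle\}$), this is (up to the $1/d$ factor) precisely the matrix $\hat\Phi$ embedded in that $d$-dimensional subspace, so $\rho_{\mathcal{E}_{\hat\Phi}}\ge 0 \iff \hat\Phi\ge 0$. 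That closes (i).

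The calculation is essentially bookkeeping, so I do not expect a genuine obstacle; the one place to be vigilant is the index/conjugation convention for the replica space in the definition of $|\Phi^+\rangle$ and the Choi state — getting $\rho_{\alpha,\alpha'}$ versus $\rho_{\alpha',\alpha}$ right in each factor — together with verifying that the apparently complex quantities $\langle\rho_{\alpha,\alpha'},\rho_{\beta,\beta'}\rangle$ combine into the manifestly real $|\cdot|^2$ form stated, which uses the pairing of the $(\alpha,\alpha',\beta,\beta')$ term with its conjugate under $(\alpha,\alpha',\beta,\beta')\mapsto(\alpha',\alpha,\beta',\beta)$ and Hermiticity of $\hat\Phi$.
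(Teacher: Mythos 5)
Your route is sound and it is genuinely different in organization from the paper's. For part (i) the paper constructs explicit Kraus operators from the spectral decomposition $\hat\Phi=S\hat\Phi_D S^\dagger$ and checks $\sum_\mu A_\mu^\dagger A_\mu=\mathbb{I}$ using the unit diagonal, whereas you invoke the Choi criterion and observe that the Choi matrix of $\mathcal{E}_{\hat\Phi}$, restricted to the orthonormal family $\{|\alpha\rangle\otimes|\bar\alpha\rangle\}$, is exactly $\hat\Phi/d$; both are correct, yours is shorter and also gives the converse for free, the paper's yields explicit Kraus operators. For part (ii) the paper works directly from \cref{eq:botoc-op-ent-open}, expanding $S_{AA'}$ in the computational product basis of $A$ and computing $\mathcal{E}_{\hat\Phi}(|i\rangle\langle j|\otimes I_B)$; you instead specialize \cref{eq:Choi-open-0}. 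Since \cref{th:Choi-open} is itself derived from \cref{th:botoc-op-ent-open}, the two computations are equivalent after unwinding, but the paper's choice has the practical advantage of never leaving the computational basis, which sidesteps the one delicate point in your plan.

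That delicate point deserves a sharper statement than the one you give. Writing $|\Phi^+\rangle=d^{-1/2}\sum_k|k\rangle^{\otimes 2}$ in the basis $\mathbb{B}$ produces $d^{-1/2}\sum_\alpha|\alpha\rangle\otimes|\bar\alpha\rangle$, so the correct Choi state is
\begin{align}
\rho_{\mathcal{E}_{\hat\Phi}}=\frac{1}{d}\sum_{\alpha,\alpha'}\phi_{\alpha,\alpha'}\,|\alpha\rangle\langle\alpha'|\otimes|\bar\alpha\rangle\langle\bar\alpha'|,
\qquad
\operatorname{Tr}_{B'}\bigl(|\bar\alpha\rangle\langle\bar\alpha'|\bigr)=\overline{\rho_{\alpha,\alpha'}},
\end{align}
and hence the replica factor in $\|\operatorname{Tr}_{BB'}\rho_{\mathcal{E}_{\hat\Phi}}\|_2^2$ contributes $\overline{\langle\rho_{\alpha,\alpha'},\rho_{\beta,\beta'}\rangle}$, giving the modulus squared in \cref{eq:gen-dephasing-channel} directly. (The first term is insensitive to this, since $\|\overline{\rho_{\alpha,\alpha'}}\|_2=\|\rho_{\alpha,\alpha'}\|_2$, and part (i) is likewise unaffected.) The fallback you sketch for the unconjugated expression --- pairing $(\alpha,\alpha',\beta,\beta')$ with $(\alpha',\alpha,\beta',\beta)$ and using Hermiticity of $\hat\Phi$ --- only proves that $\sum\phi^*_{\alpha,\alpha'}\phi_{\beta,\beta'}\langle\rho_{\alpha,\alpha'},\rho_{\beta,\beta'}\rangle^2$ is real; it does not convert $\langle\cdot\rangle^2$ into $|\langle\cdot\rangle|^2$, because $\operatorname{Re}(wz^2)\neq\operatorname{Re}(w)\,|z|^2$ in general (an overlap $\langle\rho_{\alpha,\alpha'},\rho_{\beta,\beta'}\rangle$ that is purely imaginary would even flip sign). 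So if you drop the conjugation you get a genuinely different, incorrect formula, and no symmetry of the summand rescues it. Since you explicitly flag that you will track the conjugation, this is a repairable misstatement of the verification rather than a wrong approach, but the proof should rest on the conjugated Choi state, not on the pairing argument.
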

We note the following facts:

(a) For \( \hat{\Phi}=\mathbf{1}\), we recover the dephasing channel \(\mathcal{D}_{\mathbb{B}}\) and~\cref{eq:gen-dephasing-channel} becomes~\cref{eq:idempotency-deficit} . 

(b) For $\phi_{\alpha, \alpha'}=e^{i(\theta_\alpha-\theta_{\alpha'})}$ with \(\{ \theta_{\alpha} \}_{\alpha} \in [0,2\pi)\), one recovers unitary channels and~\cref{eq:gen-dephasing-channel}  becomes~\cref{eq:botoc-op-ent-closed}. In particular if \(\phi_{\alpha, \alpha'} = 1 ~(\forall \alpha, \alpha')\) we have, \(\mathcal{E}_{\hat{\Phi}} = \mathcal{I}\) and therefore $G$ vanishes.

(c) Suppose that  the dynamics is generated by a Lindbladian 
$$\mathcal {L}(X)=\sum_\mu\left( L_\mu X L_\mu^\dagger -\frac{1}{2}\{ L_\mu L_\mu^\dagger, X\}\right),$$
where the Lindblad operators $L_\mu$ form an abelian algebra and $\{X,Y\}:= XY + YX$. Then one one has that $\mathcal{E}_t=e^{t\mathcal{L}}$ is of the form~\cref{eq:gen-dephasing-channel-0} with 
$$\phi_{\alpha,\alpha'}=
\exp\left[ -\frac{1}{2}\sum_\mu (|\alpha_\mu-\alpha'_\mu |^2 -2i\mathrm{Im}(\alpha'_\mu\bar{\alpha_\mu}))\right],$$
being the $|\alpha\rangle$ a joint eigenbasis of the $L_\mu$ i.e., $L_\mu|\alpha\rangle=\alpha_\mu |\alpha\rangle, L^\dagger_\mu|\alpha\rangle=\bar{\alpha_\mu} |\alpha\rangle,(\forall \mu,\alpha)$.

To illustrate the physical relevance of the family of channels in~\cref{eq:gen-dephasing-channel} we now provide a couple of simple analytical  examples arising from a  dynamical semigroup. They are aimed at making manifest non-unitary effects and their interplay with unitary ones.
For both examples below, \(\mathcal{H}_{A} \cong \mathcal{H}_{B}\)  and the relevant basis 
  $\{|\alpha\rangle\}$ is an eigenbasis of the swap operator $S$, for e.g., the Bell basis for $d_A=d_B=2.$

{\em{Example 1.--}} Let us consider the  Lindbladian  $$\mathcal{L} = \mathrm{Ad} S - I,$$ where  $\mathrm{Ad} S(X):= S XS.$ Then, by a straightforward exponentiation one finds 
a convex combination of unitaries
$$\mathcal{E}_{t} = e^{\mathcal{L} t} =a(t) I + b(t) \mathrm{Ad} S$$
with \(a(t) = \frac{1}{2}(1 + e^{-t}), \,b(t) =\frac{1}{2}(1 - e^{-t})\). The Lindbladian here is designed to generate an evolution which is a mixture of the Identity and the SWAP unitaries. The idea is that the swap unitary maximizes the (unitary) bipartite OTOC, while the Identity channel corresponds to zero bipartite OTOC. The probabilities for these two evolutions are time-dependent and, as time evolves, the weight corresponding to the swap unitary increases exponentially (from zero) while that of the Identity decays (from one) to zero. Namely, it generates a maximally scrambling evolution with increasing time. We have, $ \hat{\Phi}_{\alpha\alpha'}= a(t)+b(t) \lambda_\alpha\lambda_{\alpha'}$     with  $\lambda_{\alpha/\alpha'}=\pm 1.$
 The open averaged bipartite OTOC for this channel is $$G(\mathcal{E}_t) = b^{2}(t)\, G^{\mathrm{max}}.$$
Note that the identity component of  $\mathcal{E}_t$ does not contribute to the  averaged bipartite OTOC and $G(\mathcal{E}_\infty)=\frac{1}{4}G^{\mathrm{max}}.$

{\em{Example 2.--}}  Let us consider the  Lindbladian  $$\mathcal{L} = i\, \mathrm{ad} {H} + \lambda \left( \mathcal{D}_{\mathbb{B}} - I \right),$$ where \(\mathrm{ad} {H}(X) := \left[ H,X \right]\) and $\mathcal{D}_{\mathbb{B}}$  is the dephasing superoperator.
We assume that the dephasing basis is the same as the Hamiltonian eigenbasis, i.e., \(\mathbb{B} = \{ \Pi_{j} \}\) with \(\Pi_{j}\) the Hamiltonian eigenstates. 
In this case $[\mathrm{ad} H, \mathcal{D}_{\mathbb{B}}]=0$ and 
therefore  the dynamics  is given by a convex combination of a unitary and a dephasing channel
  $$\mathcal{E}_t  = e^{t \mathcal{L}}=\tilde{a}(t)\,  e^{it \mathrm{ad} {H}} + \tilde{b}(t)\,  \mathcal{D}_{\mathbb{B}}$$
where $ \tilde{a}(t):=e^{- \lambda t}$ and $\tilde{b}(t):=1 -\tilde{a}(t).$
This corresponds to  $ \hat{\Phi}_{\alpha\alpha'}=  \tilde{a}(t)\,e^{it(\lambda_\alpha-\lambda_{\alpha'})} e +    \tilde{b}(t)\, \delta_  {\alpha\alpha'},$   
with  $\lambda_{\alpha/\alpha'}=\pm 1.$
Moreover, if we assume that \(\mathcal{D}_{\mathbb{B}} \left( X \otimes \frac{\mathbb{I}}{d_{B}} \right) = \operatorname{Tr}\left( X \right) \frac{\mathbb{I}}{d}\). Then, the bipartite OTOC becomes,
  $$G(\mathcal{E}_t) = \tilde{a}^2(t)\, G(e^{it \mathrm{ad} {H}} ).$$
 If  the Hamiltonian is the swap operator, $S$
one gets $G(e^{it \mathrm{ad} {H}} )= \left( 1 - \cos^{4}(t)\right)\, G^{\mathrm{max}}.$ Fig.~\ref{fig:swap-w-dissi} shows the corresponding   pattern of exponentially damped oscillations.

\section{Quantum Spin Chains}
\label{sec:spin-chains}

\begin{figure*}[!t]
\raggedright
\begin{subfigure}{.44\textwidth}
  \includegraphics[width=1.3\linewidth]{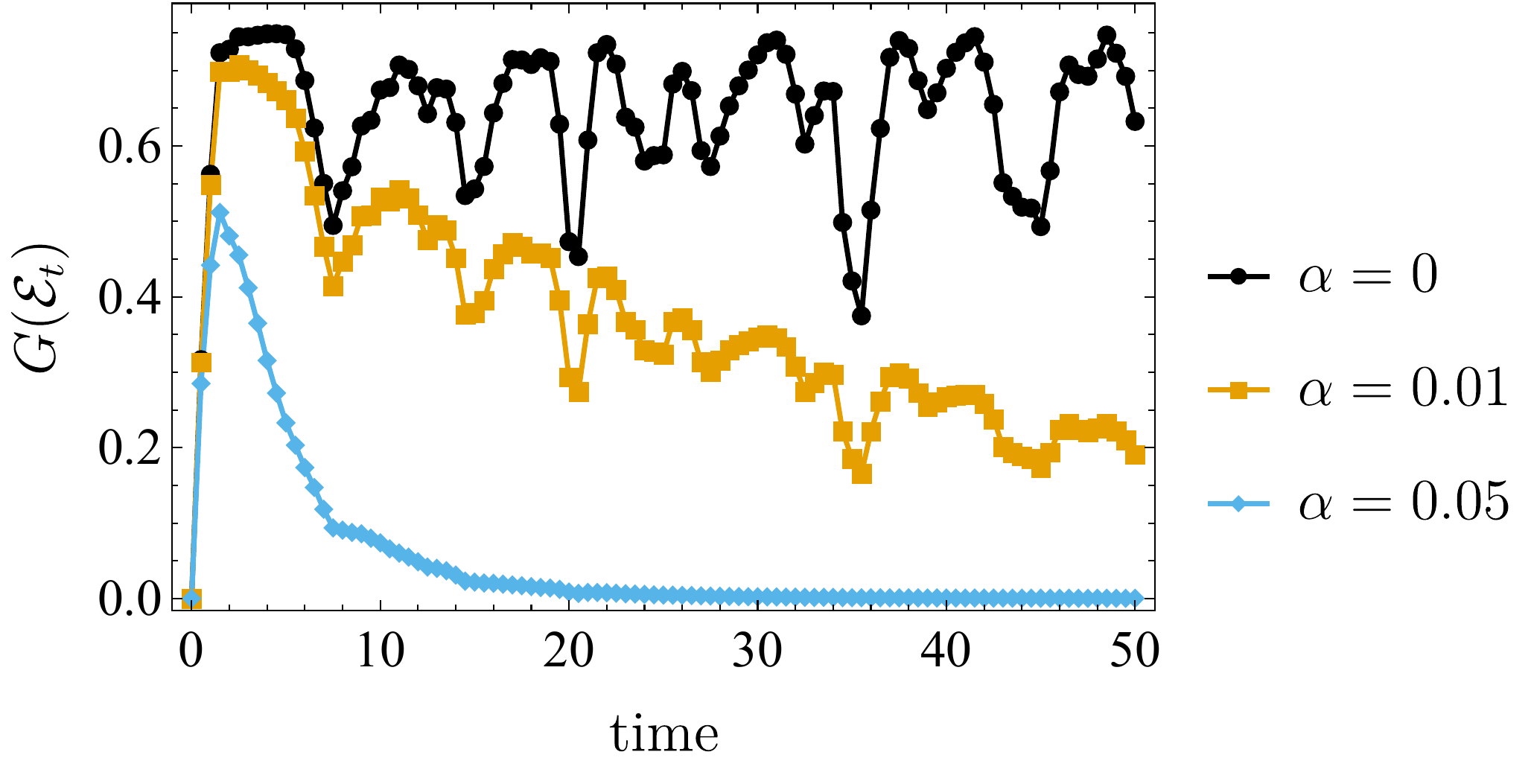}
  \caption{integrable}
\end{subfigure}\hspace{65pt}
\begin{subfigure}{.34\textwidth}
  \includegraphics[width=1.3\linewidth]{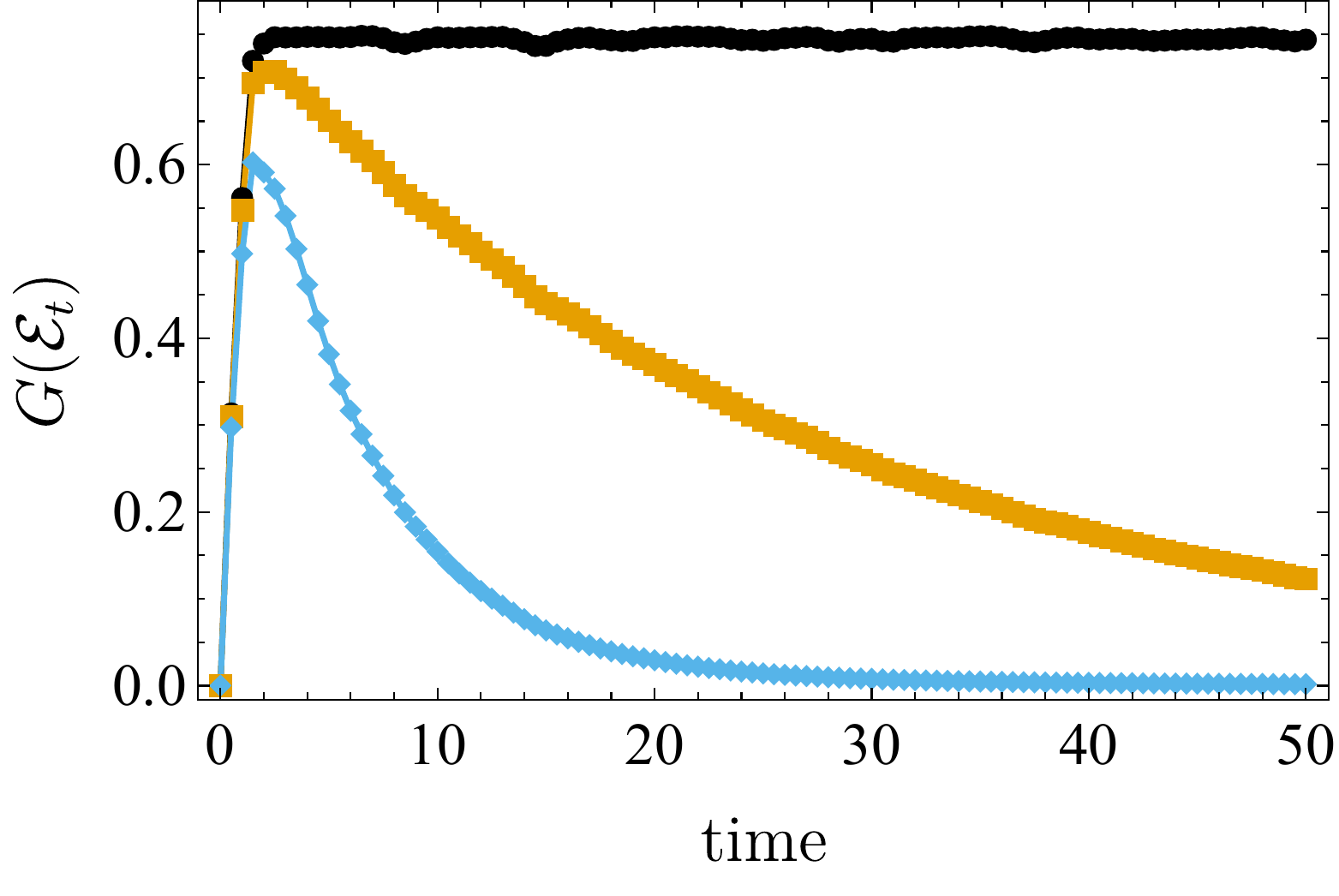}
    \caption{chaotic}
\end{subfigure}%
\caption{Temporal variation of the open OTOC $G(\mathcal{E}_{t})$ for the TFIM~\cref{eq:tfim} with $L=6$ spins. The three curves correspond to varying choices of the dissipation strength $\alpha$ in the Lindblad operators~\cref{eq:lindblad-gksl}. The chaotic ($g=-1.05,h=0.5$) and integrable ($g=1,h=0$) phases are clearly distinguishable for the $\alpha=0$ case (closed system), however, increasing the dissipation strength to $\alpha=0.05$ makes them fairly indiscernible and destroys the revivals (or fluctuations) characteristic of integrable systems.}
\label{fig:TFIM-6qubits-openotoc}
\end{figure*}

\begin{figure*}[!t]
\raggedright
\begin{subfigure}{.45\textwidth}
  \includegraphics[width=1.3\linewidth]{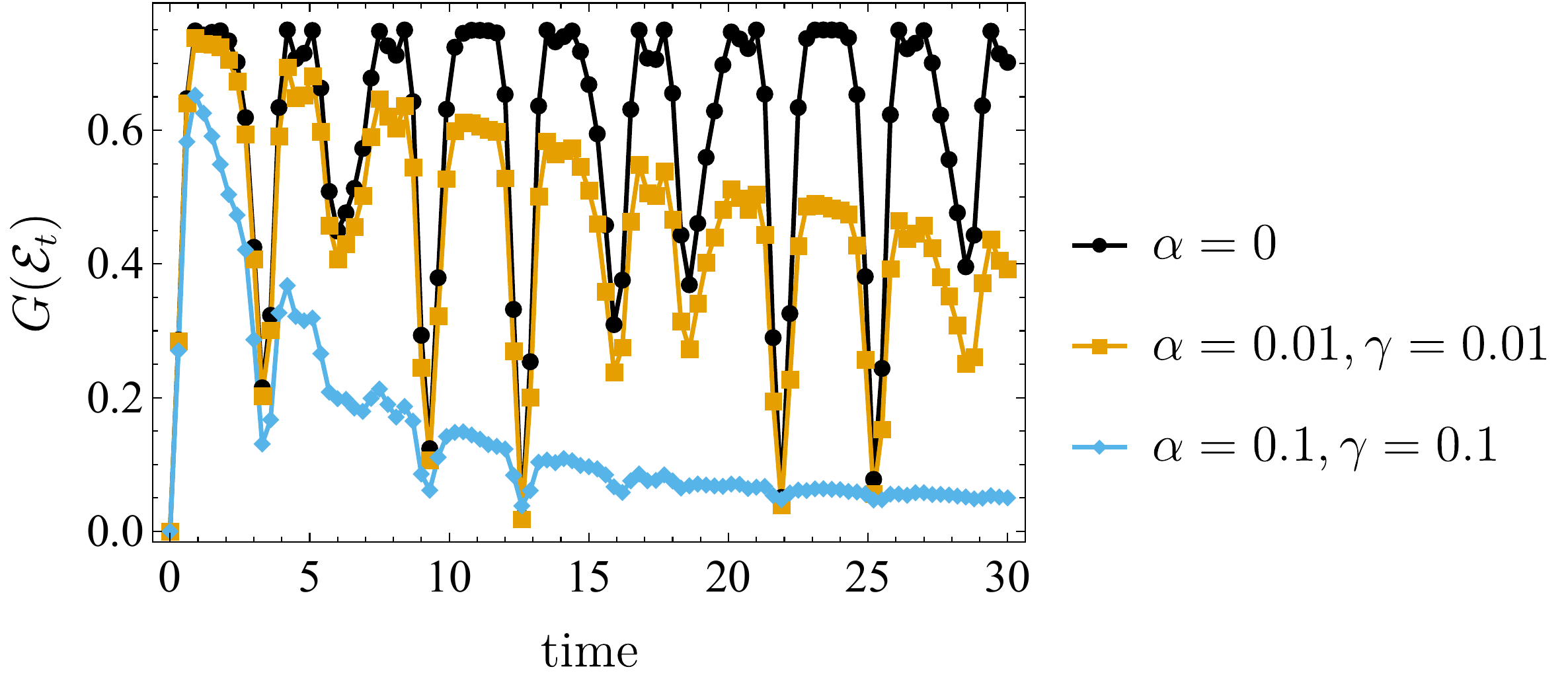}
  \caption{integrable}
\end{subfigure}\hspace{65pt}
\begin{subfigure}{.31\textwidth}
  \includegraphics[width=1.3\linewidth]{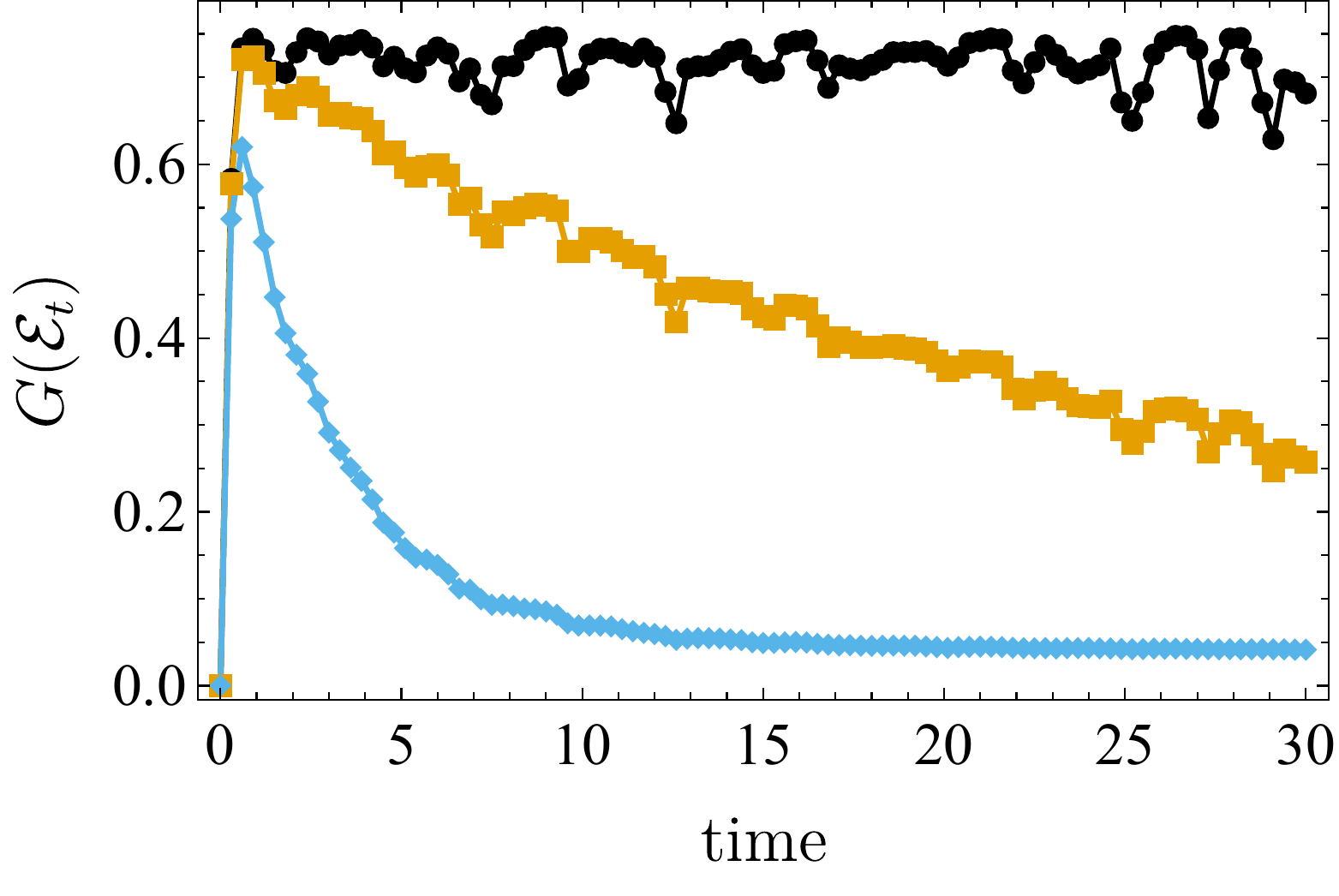}
    \caption{chaotic}
\end{subfigure}%
\caption{Temporal variation of the open OTOC $G(\mathcal{E}_{t})$ for the XXZ-NNN model~\cref{eq:xxz-nnn} with $L=6$ spins. The three curves correspond to varying choices of the dissipation strength $\alpha$ in the Lindblad operators~\cref{eq:lindblad-gksl}
. The nonintegrable ($J=1,\Delta=0.5,J=1,\Delta'=0.5$) and integrable ($J=1,\Delta=0=J=\Delta'$) phases are clearly distinguishable for the $\alpha=0=\gamma$ case (closed system). The integrable model here can be mapped onto free fermions and hence unlike the TFIM case, even after increasing the dissipation strength ($\alpha=0.1=\gamma$), the system demonstrates revivals (or fluctuations) characteristic of integrable systems.}
\label{fig:XXZ-6qubits-openotoc}
\end{figure*}

\begin{figure*}[!t]
\raggedright
\begin{subfigure}{.42\textwidth}
  \includegraphics[width=1.3\linewidth]{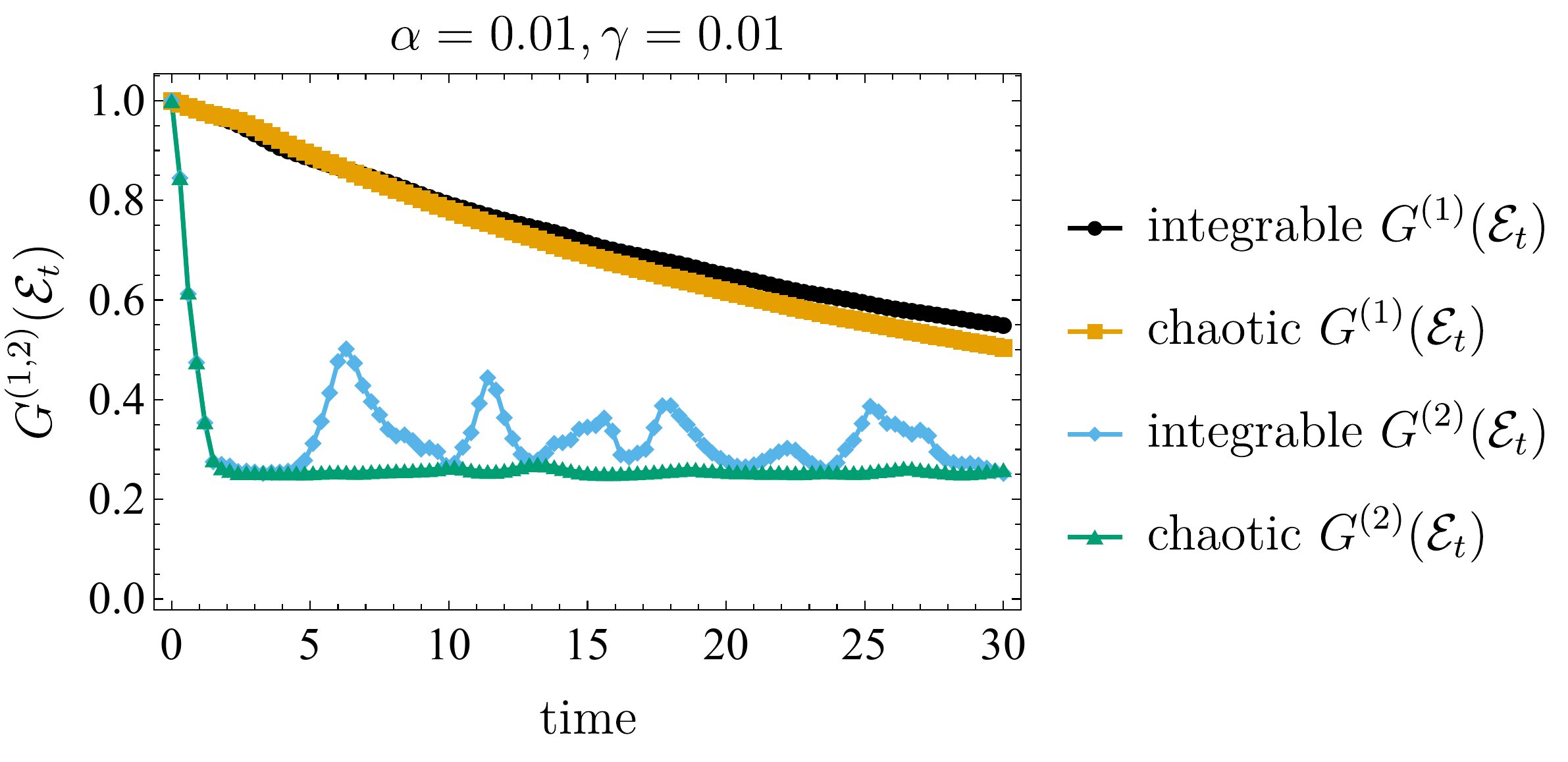}
  \caption{TFIM}
\end{subfigure}\hspace{65pt}
\begin{subfigure}{.29\textwidth}
  \includegraphics[width=1.3\linewidth]{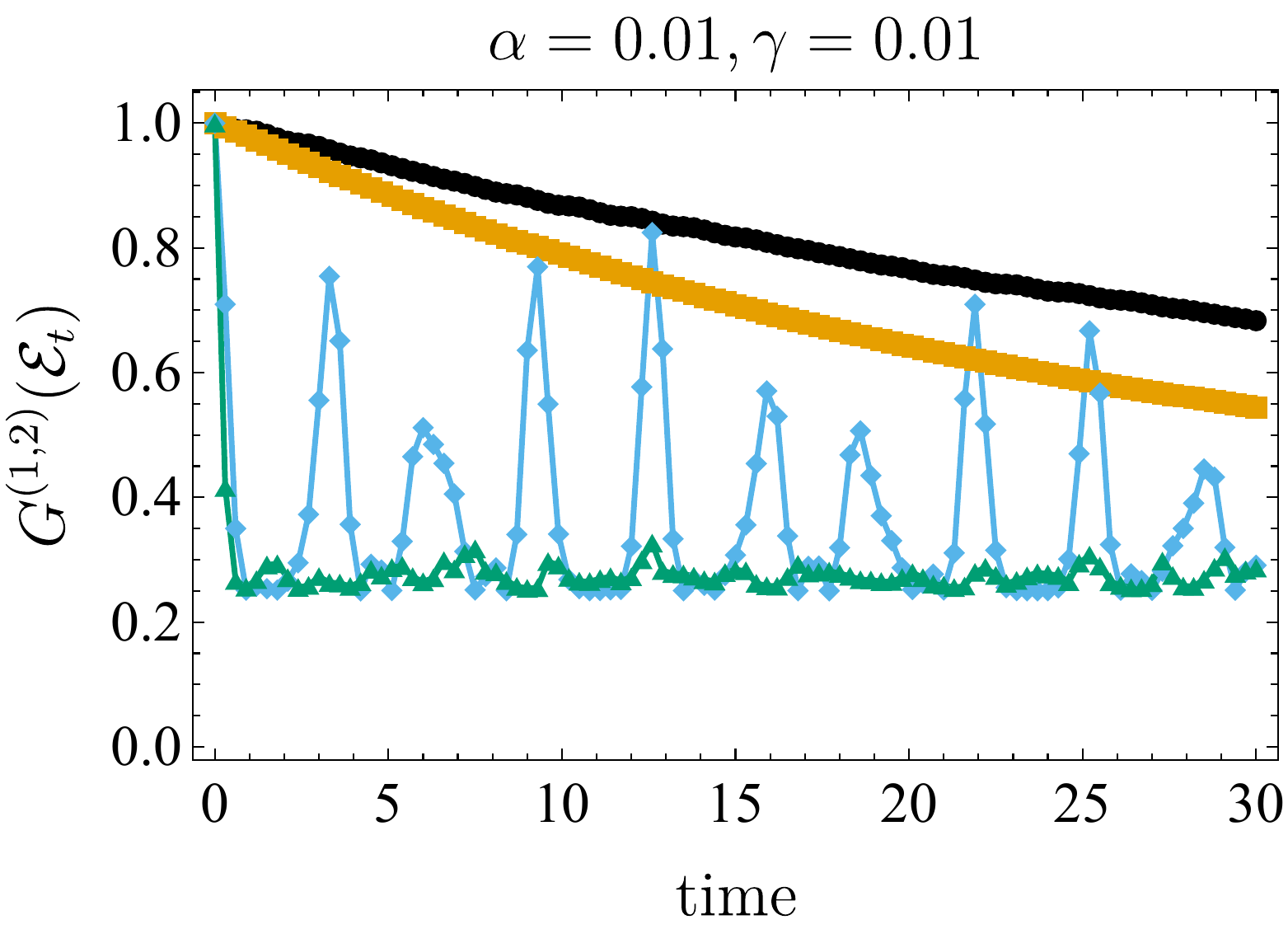}
    \caption{XXZ-NNN model}
\end{subfigure}%
\caption{Temporal variation of the individual terms of the open OTOC \(G^{(1)}(\mathcal{E}_{t})= \frac{d_{B}}{d^{2}} \operatorname{Tr}\left[ S \mathcal{E}^{\otimes 2} \left( S_{AA'} \right) \right]\) and \(G^{(2)}(\mathcal{E}_{t})= \frac{1}{d^{2}} \operatorname{Tr}\left[ S_{AA'} \mathcal{E}^{\otimes 2} (S_{AA'}) \right]\) with \(G(\mathcal{E}_{t}) = G^{(1)}(\mathcal{E}_{t}) -  G^{(2)}(\mathcal{E}_{t})\). The two figures correspond to the integrable and chaotic limits as considered above for the (a) TFIM and (b) XXZ-NNN model with $L=6$ spins, respectively. The dissipation parameters are \(\alpha=0.01,\gamma=0.01\). The first term \(G^{(1)}(\mathcal{E}_{t})\) originates from environmental decoherence and is similar for both the integrable and the chaotic case. However, the second term, \(G^{(2)}(\mathcal{E}_{t})\) is clearly distinct for the two phases and can diagnose quantum chaos even in the presence of dissipation.}
\label{fig:first-second-terms-otoc}
\end{figure*}

As a physical application of the open OTOC, we study paradigmatic quantum spin-chain models of quantum chaos in the presence of open-system dynamics. For systems interacting with a Markovian environment, the dynamics can be described by a Lindblad master equation (sometimes also called the GKSL form)~\cite{breuerTheoryOpenQuantum2002},
\begin{align}
\label{eq:lindblad-gksl}
&\frac{d \rho(t)}{d t} = \mathcal{L}^\dagger (\rho(t)) \equiv -i \left[ H, \rho(t) \right] \nonumber \\
&+ \sum\limits_{j}^{} \left( L_{j} \rho(t) L^{\dagger}_{j} - \frac{1}{2} \{L_{j}^{\dagger}L_{j}, \rho(t)\} \right),
\end{align}
where $\mathcal{L}^\dagger$ is the Lindbladian, \(H\) is the Hamiltonian, \(\rho(t)\) is the quantum state at time \(t\), and \(\{ L_{j} \}\) are called the Lindblad (or jump) operators, which constitute the system-environment interaction. The master equation above gives rise to a one-parameter family of time-evolution superoperators (in the Schr\"odinger picture),
\begin{align}
\mathcal{E}^{\dagger}_{t} = e^{t \mathcal{L}^{\dagger}}, t \geq 0.
\end{align}

We consider two quantum spin-\(1/2\) chains on \(L\) sites, (i) the transverse-field Ising model (TFIM) with an onsite magnetization and (ii) the next-to-nearest neighbor Heisenberg XXZ model (XXZ-NNN).
\begin{align}
\label{eq:tfim}
H_{\mathrm{TFIM}} &= - \left( \sum\limits_{j}^{} \sigma_{j}^{z} \sigma_{j+1}^{z} + g \sigma_{j}^{x} + h \sigma_{j}^{z} \right).
\end{align}
\begin{align}
\label{eq:xxz-nnn}
H_{\mathrm{XXZ}} &= J \sum\limits_{j=1}^{L-1} \left( \sigma_{j}^{x} \sigma_{j+1}^{x} + \sigma_{j}^{y} \sigma_{j+1}^{y} + \Delta \sigma_{j}^{z} \sigma_{j+1}^{z} \right) \nonumber \\ 
& + J' \sum\limits_{j=1}^{L-2} \left( \sigma_{j}^{x} \sigma_{j+2}^{x} + \sigma_{j}^{y} \sigma_{j+2}^{y} + \Delta' \sigma_{j}^{z} \sigma_{j+2}^{z} \right).
\end{align}
Here, the \(\sigma_{j}^{\alpha}, \alpha \in \{ x,y,z \}\) are the Pauli matrices. For the TFIM, \(g,h\) denotes the strength of the transverse field and the local field, respectively. The TFIM Hamiltonian is integrable for \(h=0\) and nonintegrable when both \(g,h\) are nonzero. We consider as the integrable point, \(g=1,h=0\) and the nonintegrable point \(g=-1.05,h=0.5\). For the XXZ-NNN model, \(J(J')\) denotes the strength of the nearest- (next-to-nearest-) neighbor coupling, and \(\Delta(\Delta')\) denotes the anisotropy along the \(z\)-axis. The XXZ-NNN model Hamiltonian is integrable by Bethe Ansatz for \(J'=0 = \Delta'\). We consider as the integrable point, $J=1,\Delta=0=J=\Delta'$ which can be mapped onto free fermions and as the nonintegrable point, $J=1,\Delta=0.5,J=1,\Delta'=0.5$~\cite{baxter2016exactly}.

We consider two types of jump processes at the boundary: (i) amplitude damping, with Lindblad operators \(\sqrt{\alpha} \sigma_{1}^{\pm}\) and \(\sqrt{\alpha} \sigma_{L}^{\pm}\); and (ii) boundary dephasing, with Lindblad operators \(\sqrt{\gamma} \sigma_{1}^{z}, \sqrt{\gamma} \sigma_{L}^{z}\). Note that similar models have been considered before to study non-equilibrium spin transport~\cite{Bu_a_2012,PhysRevLett.107.137201,PhysRevLett.117.137202} and dissipative quantum chaos~\cite{PhysRevX.10.021019}. To numerically simulate the evolution, we ``vectorize'' the Lindbladian superoperator \(\mathcal{L}\) into a \(4^{L} \times 4^{L}\) dimensional matrix representation,
\begin{align}
|\mathcal{L}\rangle\!\rangle =& i \left( \mathbb{I} \otimes H^{\dagger} - H^{*} \otimes \mathbb{I} \right) + \sum\limits_{j}^{} ( L_{j}^{T} \otimes L_{j}^{\dagger} \nonumber \\
& - \frac{1}{2} L_{j}^{*} L_{j}^{T} \otimes \mathbb{I} - \frac{1}{2} \mathbb{I} \otimes L_{j}^{\dagger}L_{j}),
\end{align}
where \(X^{T},X^{*}\) denotes the matrix transpose and complex conjugation, respectively \footnote{{This matrix representation is also sometimes known as the \textit{Liouville representation}. It is closely related to the Choi-Jamiolkowski form via, \(|\mathcal{E}\rangle\!\rangle^{R} = \rho_{\mathcal{E}}\), where \(|jk\rangle\langle lm|^{R} = |jl\rangle\langle km|\) for all basis states \(\{|j\rangle\}\) is known as the ``reshuffling'' operation.}}.

We simulate exact dynamics for the open system and compute \(G(\mathcal{E}_{t})\) for \(L=6\) spins across the bipartition \(1:L-1\). In~\cref{fig:TFIM-6qubits-openotoc,fig:XXZ-6qubits-openotoc} we consider the two models in~\cref{eq:tfim,eq:xxz-nnn} with their integrable and chaotic limits. As we increase the strength of system-environment coupling, namely, in~\cref{fig:TFIM-6qubits-openotoc} the parameter $\alpha$ and in~\cref{fig:XXZ-6qubits-openotoc} the parameters $\alpha,\gamma$, the open OTOC $G(\mathcal{E}_{t})$ starts decaying from its closed system value, $G(\mathcal{U}_{t})$. In~\cref{fig:TFIM-6qubits-openotoc} the integrable and chaotic phases are clearly distinguishable for the closed system case ($\alpha=0$), however, for $\alpha=0.05$, the phases become indiscernible due to open-system effects. Similarly, in~\cref{fig:XXZ-6qubits-openotoc}, the revivals in the free fermions regime is clearly distinguishable from the nonintegrable regime for the closed system ($\alpha=0=\gamma$). However, at $\alpha=0.1=\gamma$, the two are less discernible. Note, however, in this ``strongly integrable'' regime (since the system can be mapped onto free fermions), even by increasing the dissipation strength, one can see revivals (or fluctuations).

Furthermore, following the intuition developed in \autoref{th:botoc-op-ent-open} we can separate the contributions due to environmental decoherence and the dynamical entanglement generation. The open OTOC  \(G(\mathcal{E}_{t}) = G^{(1)}(\mathcal{E}_{t}) -  G^{(2)}(\mathcal{E}_{t})\) is the difference of two terms, \(G^{(1)}(\mathcal{E}_{t}) \equiv \frac{d_{B}}{d^{2}} \operatorname{Tr}\left[ S \mathcal{E}^{\otimes 2} \left( S_{AA'} \right) \right]\) and \(G^{(2)}(\mathcal{E}_{t}) \equiv \frac{1}{d^{2}} \operatorname{Tr}\left[ S_{AA'} \mathcal{E}^{\otimes 2} (S_{AA'}) \right]\). As illustrated in~\cref{fig:first-second-terms-otoc}, the first term $G^{(1)}(\mathcal{E}_{t})$ displays a similar behavior in the integrable and chaotic regimes for both the TFIM and the XXZ-NNN model, however, the second term, $G^{(2)}(\mathcal{E}_{t})$ can still diagnose quantum chaos, even in the presence of dissipation. In fact, as we know from \autoref{th:Choi-open}, this is the open-system variant of the operator entanglement-OTOC connection for the unitary case and is expected to be the diagnostic of these two phases. Moreover, notice that after separating these two contributions, one is able to distinguish the chaotic and integrable phases for the TFIM which were less discernible previously.

\section{Conclusions}
\label{sec:conclusions}

In this work we generalize the bipartite OTOC to the case of open quantum dynamics described by quantum channels. We provide an exact analytical expression for this \textit{open} bipartite OTOC which allows us to understand the competing entropic contributions from environmental decoherence and information scrambling. The separate contributions to entropy production can be understood via (a) \autoref{th:Choi-open}, as the difference of purities of the Choi state (corresponding to the dynamical map) across different partitions and (b) in \autoref{th:entropy-production-open} as the average entropy production under the reduced dynamics and that due to the (global) mixedness of the evolution. 

As a concrete example, we study special classes of channels, namely dephasing channels, entanglement-breaking channels, and $\mathbb{B}$-diagonal channels. For dephasing channels, the open OTOC can be expressed in terms of the ``idempotency deficit'' of the Gram matrix of reduced states of the states in the dephasing basis. Moreover, if the (dephasing) basis states are highly entangled then an upper bound on the open OTOC can be obtained in terms of their deviation from maximal entanglement. Furthermore, we provide an analytical estimate of the open OTOC for random dephasing channels, deviations from which are exponentially suppressed due to measure concentration. 

Finally, as a physical application of our analytical results, we consider paradigmatic quantum spin-chain models of quantum chaos in the presence of open system dynamics. As expected, the dissipation effects obfuscate the dynamical scrambling of information and the integrable and chaotic phases become less discernible as the strength of dissipation is increased. However, our analytical results allow us to separate the entropic contributions making discernible the ``scrambling entropy'' even in the presence of dissipation.

In closing, we list two promising directions for future investigations. First, exploring further the interplay of the two distinct contributions to entropy production --- that can obfuscate the effect of information scrambling --- and how to build robust techniques to delineate them in an experimentally accessible way. And second, the averaged (bipartite) open OTOC discussed in this paper has a well defined quantum-information theoretic meaning in terms of operational protocols (see \cref{sec:experimental-openotoc}) which make no direct reference to the system's temperature. However, it is a  compelling topic for future research to study extensions where the formal infinite-temperature average is replaced by expectations over other steady states of the quantum channel under examination, for example, finite-temperature Gibbs states for Davies generators \cite{anand_zanardi_finitetemp} \footnote{A finite-temperature generalization of Proposition 1 for the case of unitary evolutions was already reported in the Supplemental Material of  Ref.~\cite{styliaris_information_2020}, following the proof of Theorem 1.}.


\section{Acknowledgments}
P.Z. acknowledges partial support from the NSF award PHY-1819189. This research was (partially) sponsored by the Army Research Office and was accomplished under Grant Number W911NF-20-1-0075. The views and conclusions contained in this document are those of the authors and should not be interpreted as representing the official policies, either expressed or implied, of the Army Research Office or the U.S. Government. The U.S. Government is authorized to reproduce and distribute reprints for Government purposes notwithstanding any copyright notation herein. 

\bibliography{refs,my_library}

\onecolumngrid
\widetext
\newpage

\appendix

\renewcommand{\thepage}{A\arabic{page}}
\setcounter{page}{1}
\renewcommand{\thesection}{A\arabic{section}}
\setcounter{section}{0}
\renewcommand{\thetable}{A\arabic{table}}
\setcounter{table}{0}
\renewcommand{\thefigure}{A\arabic{figure}}
\setcounter{figure}{0}
\renewcommand{\theequation}{A\arabic{equation}}
\setcounter{equation}{0}

\section{Review of operator entanglement and entangling power}
\label{sec:operator-entanglement-intro}

Let us briefly recall the ideas associated to operator entanglement and entangling power; see Refs.~\cite{zanardi2001entanglement,wang2002entanglement,zanardi2000entangling} for a detailed discussion. Given a \(d\)-dimensional Hilbert space \(\mathcal{H}\) the algebra of linear operators over \(\mathcal{H}\), \(\mathcal{L}(\mathcal{H})\) is endowed with a Hilbert space structure itself denoted as \(\mathcal{H}_{HS}\), induced via the Hilbert-Schmidt inner product. Moreover, \(\mathcal{H}_{HS}\) is isomorphic (both algebraically and as a Hilbert space) to \(\mathcal{H}^{\otimes 2}\), therefore, one can associate bipartite states to linear operators. This is analogous to the Choi-Jamiolkowski isomorphism.

Formally, given \(U \in \mathcal{H}_{HS}\), one can define \(| U \rangle := \left( U \otimes \mathbb{I} \right) | \Phi^{+} \rangle\), where \(| \Phi^{+} \rangle := \frac{1}{\sqrt{d}} \sum\limits_{j=1}^{d} | j \rangle | j \rangle\) is the maximally entangled state across \(\mathcal{H}^{\otimes 2}\). Now, if the Hilbert space \(\mathcal{H}\) itself has a bipartite structure, that is, \(\mathcal{H} \cong \mathcal{H}_{A} \otimes \mathcal{H}_{B}\) then the corresponding state-representation of $U \equiv U_{AB}$ (since it generically acts on the total space) is a four-party state. Namely, \(| U \rangle_{ABA'B'} = \left( U_{AB} \otimes \mathbb{I}_{A'B'} \right) | \Phi^{+} \rangle_{ABA'B'}\) with \(| \Phi^{+} \rangle_{ABA'B'} = \frac{1}{\sqrt{d}} \sum\limits_{j=1}^{d} | j \rangle_{AB} | j \rangle_{A'B'}\). Moreover, notice that for the state \(| U \rangle_{ABA'B'}\), the entanglement across the \(AB|A'B'\) partition is maximal (since it is local unitarily equivalent to the maximally entangled state). However, the entanglement across the \(AA'|BB'\) partition is nontrivial and one way to quantify this would be to compute the linear entropy across this bipartition. This is precisely the operator entanglement. That is, tracing out over \(BB'\) we obtain \(\sigma_{U} := \operatorname{Tr}_{BB'}\left[ | U \rangle \langle  U |  \right]\) and computing its linear entropy defined as \(S_{\mathrm{lin}}(\rho):= 1-\operatorname{Tr}\left[ \rho^{2} \right]\), we have,
\begin{align}
E_{\mathrm{op}}(U):= S_{\mathrm{lin}}(\sigma_{U}) = 1 - \operatorname{Tr}\left[ \left( \operatorname{Tr}_{BB'} | U \rangle \langle  U |  \right)^{2} \right].
\end{align}

Another key quantity that is related to the operator entanglement is the \textit{entangling power} of a unitary \(U\) acting on a (symmetric) bipartite space \(\mathcal{H}_{AB} \cong \mathcal{H}_{A} \otimes \mathcal{H}_{B}\) with \(d_{A} = d_{B} = \sqrt{d}\), defined as the average amount of entanglement generated by \(U\) via its action on pure product states. Formally,
\begin{align}
e_{p}(U) := \mathbb{E}_{V \in \mathcal{U}(H_{A}), W \in \mathcal{U}(\mathcal{H}_{B})} \left[ E_{\mathrm{op}} \left( U | \psi_{V_{A}} \rangle | \psi_{W_{B}} \rangle \right) \right],
\end{align}
where \(| \psi_{V_{A}} \rangle = V | \psi_{0} \rangle\) (and similarly for \(| \psi_{W_{B}} \rangle\)). Quite remarkably, the entangling power and the operator entanglement are related as,
\begin{align}
e_{p}(U) = \frac{d^{2}}{\left( d+1 \right)^{2}} \left[ E_{\mathrm{op}}(U) + E_{\mathrm{op}}(US) - E_{\mathrm{op}}(S) \right],
\end{align}
where \(S\) is the swap operator between subsystems \(A,B\) (assumed to be symmetric for the connection to entangling power).

\section{A protocol for estimating the Open OTOC}
\label{sec:experimental-openotoc}
\autoref{th:entropy-production-open} establishes the open OTOC, \(G(\mathcal{E})\) as the difference of two terms, each of which quantify the average entropy production of channels \(\widetilde{\mathcal{E}}\) and \(\operatorname{Tr}_{B}\left[ \widetilde{\mathcal{E}} \right]\), respectively. Let us briefly review the case for unitary channels first, which was first discussed in Ref. \cite{styliaris_information_2020}, see Section III of the Supplemental Material for more details.

For a unitary time evolution \(\{ \mathcal{U}_{t} \}_{t \geq 0}\), the bipartite OTOC can be expressed as,
\begin{align}
G(\mathcal{U}_{t}) = \frac{d_{A}+1}{d_{A}} \mathbb{E}_{\psi \in \mathcal{H}_{A}} \left[ S_{\mathrm{lin}} \left( \Lambda_{t}^{(A)}(\psi) \right) \right],
\end{align}
where $\Lambda_{t}^{(A)}\left(\rho_{A}\right):=\operatorname{Tr}_{B}\left[U_{t}\left(\rho_{A} \otimes I_{B} / d_{B}\right) U_{t}^{\dagger}\right]$, \(\mathbb{E}_{\psi \in \mathcal{H}_{A}}\) denotes random pure states uniformly distributed in \(\mathcal{H}_{A}\), and \(S_{\mathrm{lin}}(\cdot)\) is the linear entropy. The basic protocol is to (i) initialize a random state in subsystem \(A\) and a maximally mixed state in subsystem \(B\), (ii) apply the channel \(\mathcal{U}_{t}\) to the entire system \(AB\), (iii) trace out subsystem \(B\), (iv) measure the linear entropy of the resulting state, and (v) repeat for many random initial states uniformly distributed in \(\mathcal{H}_{A}\).

The key idea is that (i) due to measure concentration, as \(d_{A}\) grows, fewer random states are needed to estimate \(G(\mathcal{U}_{t})\) exponentially well, and (ii) linear entropy of a quantum state can be measured in an experimentally accessible way, see, for example, the seminal experiment in Ref.~\cite{islam2015measuring} where the purity (which is equal to one minus the linear entropy) was measured by interfering two uncorrelated but identical copies of a many-body quantum state; similar ideas have also been considered previously~\cite{daley2012measuring,ekert2002direct,PhysRevLett.93.110501,bovino2005direct}.

Furthermore, there have also been recent proposals based on measurements over random local bases that can probe entanglement given just a single copy of the quantum state, and, in this sense, go beyond traditional quantum state tomography. The main idea consists of directly expressing the linear entropy~\cite{brydges2019probing,elben2019statistical}, as well as other functions of the state~\cite{huang2020predicting}, as an ensemble average of measurements over random bases.\\

Now, for the open-system case, to estimate \(\mathbb{E}_{\psi} \left[ S_{\mathrm{lin}} \left( \operatorname{Tr}_{B}\widetilde{\mathcal{E}} \left( \psi \right) \right) \right]\), we replace in the protocol above, \(\mathcal{U}_{t}\) with the channel \(\mathcal{E}\). To understand this, recall that \(\widetilde{\mathcal{E}}\) is defined such that its action is \(\widetilde{\mathcal{E}}(\rho_{A}) \mapsto \mathcal{E} \left( \rho_{A} \otimes \frac{\mathbb{I}}{d_{B}} \right)\), analogous to the channel \(\Lambda_{t}^{(A)}\) for the unitary case. For the second term that is proportional to \(\mathbb{E}_{\psi} \left[ S_{\mathrm{lin}} \left(\widetilde{\mathcal{E}} \left( \psi \right) \right) \right]\), we simply drop the partial tracing over subsystem \(B\) above and everything else in the protocol is the same.\\

\section{Proofs}
\label{sec:appendix}

\subsection*{Proof of \autoref{th:botoc-op-ent-open}}
\haaravgopenotoc*
\begin{proof}

Let us start by simplifying \(G(\mathcal{E})\) first,
\begin{align}
  G(\mathcal{E}) := \frac{1}{2 \left( d_{A} d_{B} \right)} \mathbb{E}_{A \in \mathcal{U}_A, B \in \mathcal{U}_B} \left\Vert \left[ \mathcal{E}(A), B \right]  \right\Vert_{2}^{2}, 
\end{align}
where \(\mathbb{E}_{X \in \mathcal{G}} \left( \cdots \right) \equiv \int_{X\sim \mathrm{Haar}} dX\left( \cdots \right)\) denotes the Haar average and the factor of \(\frac{1}{2}\) originates from the squared commutator, while the factor of \(\frac{1}{d_{A} d_{B}}\) is for the infinite-temperature state. 

Now, expanding the commutator gives us,
\begin{align}
  G(\mathcal{E}) = \frac{1}{d} \left[ \operatorname{Tr}\left( \mathcal{E}(A^{\dagger}) \mathcal{E}(A) \right) - \mathrm{Re} \operatorname{Tr}\left( \mathcal{E}(A) B^{\dagger} \mathcal{E}(A^{\dagger}) B \right) \right] ,
\end{align}
where \(d \equiv d_{A} d_{B}\).

Using the identity,
\begin{align}
  \operatorname{Tr}\left( XY \right) = \operatorname{Tr}\left( S X \otimes Y \right),
\end{align}
we have,
\begin{align}
  \operatorname{Tr}\left( \mathcal{E}(A^{\dagger}) \mathcal{E}(A) \right) = \operatorname{Tr}\left( S \mathcal{E}^{\otimes 2} (A^{\dagger} \otimes A) \right).
\end{align}
And,
\begin{align}
\operatorname{Tr}\left( \mathcal{E}(A) B^{\dagger} \mathcal{E}(A^{\dagger}) B  \right) = \operatorname{Tr}\left( S \mathcal{E}^{\otimes 2} \left( \left( A \otimes A^{\dagger} \right) \right) \left( B^{\dagger} \otimes B \right) \right)
\end{align}

We now use another key identity,
\begin{align}
  \mathbb{E}_{A \in \mathcal{U}_{A}} \left( A^{\dagger} \otimes A \right) = \frac{S_{AA'}}{d_{A}},
\end{align}
where \(S_{AA'}\) is the operator that swaps the replicas \(A\) with \(A'\). The analogous expression for \(BB'\) also holds.

Then, we have,
\begin{align}
  \mathbb{E}_{A \in \mathcal{U}_{A}} \operatorname{Tr}\left(S \mathcal{E}^{\otimes 2} \left( A^{\dagger} \otimes A \right) \right) = \operatorname{Tr}\left( S \mathcal{E}^{\otimes 2} \left( \frac{S_{AA'}}{d_{A}} \right) \right),
\end{align}

and,
\begin{align}
\mathbb{E}_{A \in \mathcal{U}_{A}, B \in \mathcal{U}_{B}} \operatorname{Tr}\left( \mathcal{E}(A) B^{\dagger} \mathcal{E}(A^{\dagger}) B  \right) = \operatorname{Tr}\left( S \mathcal{E}^{\otimes 2} \left( \frac{S_{AA'}}{d_{A}} \right) \left( \frac{S_{BB'}}{d_{B}} \right) \right) = \frac{1}{d} \operatorname{Tr}\left( S_{AA'} \mathcal{E}^{\otimes 2} \left( S_{AA'} \right) \right),
\end{align}
where in the last equality we have used the fact that \(S = S_{AA'} S_{BB'}\).

Putting everything together, we have,
\begin{align}
 G(\mathcal{E}) = \frac{1}{\left( d_{A} d_{B} \right)^{2}} \operatorname{Tr}\left( \left( d_{B} S - S_{AA'} \right) \mathcal{E}^{\otimes 2} (S_{AA'}) \right).
\end{align}

Note that if \(\mathcal{E} = \mathcal{U}\), then, \(S U^{\otimes 2} S_{AA'} U^{\dagger \otimes 2} = U^{\otimes 2} S_{BB'} U^{\dagger \otimes 2}\) using \(\left[ S, U^{\otimes 2} \right] =0\) and \(S = S_{AA'} S_{BB'}\), then, the first term of \(G(\mathcal{E})\) becomes one.

\begin{center}
\noindent\rule[0.5ex]{0.5\linewidth}{0.8pt}
\end{center}

We now show that \(\left[ L_{S}, \mathcal{E}^{\otimes 2} \right] = 0 \iff \mathcal{E} \text{ is unitary.} \)

Let \(L_{S}(X):= SX\) be the superoperator that denotes the left action of the swap operator. Note that,
\begin{align}
  L_{S} \mathcal{E}^{\otimes 2}(S) = \mathcal{E}^{\otimes 2} L_{S}(S) = \mathcal{E}^{\otimes 2}(I).
\end{align}

And, let \(\mathcal{E}(X) = \sum\limits_{j}^{} A_{j} X A_{j}^{\dagger}\) with \(\sum\limits_{j}^{} A_{j}^{\dagger} A_{j} = I\) be its Kraus representation. Then,
\begin{align}
  \text{LHS = } S \left( \sum\limits_{i,j}^{} \left( A_{i} \otimes A_{j} \right) S \left( A^{\dagger}_{i} \otimes A^{\dagger}_{j} \right) \right) = \sum\limits_{i,j}^{} A_{j} A^{\dagger}_{i} \otimes A_{i} A^{\dagger}_{j}.
\end{align}
Now, the RHS is \(\sum\limits_{i,j}^{} A_{i} A^{\dagger}_{i} \otimes A_{j} A^{\dagger}_{j}\). Taking the trace of both sides, we have,
\begin{align}
  \sum\limits_{i,j}^{} \left| \operatorname{Tr}\left( A_{j} A^{\dagger}_{i} \right) \right|^{2} = \sum\limits_{i,j}^{} \underbrace{\operatorname{Tr}\left( A_{i} A^{\dagger}_{i} \right)}_{\left\Vert A_{i} \right\Vert_{2}^{2}} \underbrace{\operatorname{Tr}\left( A_{j} A^{\dagger}_{j}\right)}_{\left\Vert A_{j} \right\Vert_{2}^{2}}.
\end{align}

Using Cauchy-Schwarz inequality, we have,
\begin{align}
  \left| \left\langle A_{j}, A_{i} \right\rangle \right|^{2} \leq \left\Vert A_{i} \right\Vert_{2}^{2} \left\Vert A_{j} \right\Vert_{2}^{2},  
\end{align}
where the equality holds if and only if \(~~\forall i,j A_{i} =  \lambda_{ij} A_{j}\).

Say \(A_{i} = \lambda_{i} A_{0} ~~\forall i\). Then,
\begin{align}
  \mathcal{E}(X) = \sum\limits_{i}^{} \left| \lambda_{i} \right|^{2} A_{0} X A^{\dagger}_{0} = \widetilde{A}_{0} X \widetilde{A}^{\dagger}_{0}, \text{ where } \widetilde{A}_{0} \equiv \sqrt{\sum\limits_{i}^{} \left| \lambda_{i} \right|^{2}} A_{0}.
\end{align}

Namely, \(\mathcal{E}\) is a CP map with a single Kraus operator, therefore, \(\mathcal{E}\) is unitary.

\end{proof}

\subsection*{Proof of \autoref{th:Choi-open}}
\choiopen*
\begin{proof}
Given a bipartite channel, \(\mathcal{E}_{AB}\), consider its Choi state,
\begin{align}
\rho_{\mathcal{E}} = \left( \mathcal{E}_{AB} \otimes I_{A'B'} \right) \left( | \phi^{+} \rangle \langle  \phi^{+} |  \right),
\end{align}
where \(| \phi^{+} \rangle \equiv | \phi^{+}_{ABA'B'} \rangle = \frac{1}{\sqrt{d}} \sum\limits_{i,j=1}^{d_{A} d_{B}} | i_{A} j_{B} \rangle \otimes | i_{A'} j_{B'} \rangle\).

Then,
\begin{align}
\rho_{\mathcal{E}} = \frac{1}{d} \sum\limits_{i,l=1}^{d_{A}} \sum\limits_{j,m=1}^{d_{B}} \mathcal{E} \left( | i \rangle \langle  l | \otimes | j \rangle \langle  m |  \right) \otimes | i \rangle \langle  l | \otimes | j \rangle \langle  m | .
\end{align}

Notice,
\begin{align}
\operatorname{Tr}_{BB'}\left[\rho_{\mathcal{E}}\right] = \frac{1}{d} \sum\limits_{i,l=1}^{d_{A}} \operatorname{Tr}_{B}\left[ \mathcal{E} \left( | i \rangle \langle  l | \otimes I_{B} \right) \otimes | i \rangle \langle  l |  \right] \equiv \rho_{\mathcal{E}}^{AA'}.
\end{align}
And,
\begin{align}
\operatorname{Tr}_{B'}\left[ \rho_{\mathcal{E}} \right] = \frac{1}{d} \sum\limits_{i,l=1}^{d_{A}} \mathcal{E} \left( | i \rangle \langle  l | \otimes I_{B} \right) \otimes | i \rangle \langle  l | \equiv \rho_{\mathcal{E}}^{ABA'}.
\end{align}
Then,
\begin{align}
\left\Vert \rho_{\mathcal{E}}^{AA'} \right\Vert_{2}^{2} = \frac{1}{d^{2}} \sum\limits_{i,l=1}^{d_{A}} \left\Vert \operatorname{Tr}_{B}\left[ \mathcal{E} \left( | i \rangle \langle  l | \otimes I_{B} \right) \right] \right\Vert_{2}^{2} = \frac{1}{d^{2}} \operatorname{Tr}\left[ S_{AA'} \mathcal{E}^{\otimes 2} (S_{AA'}) \right].
\end{align}
And,
\begin{align}
\left\Vert \rho_{\mathcal{E}}^{ABA'} \right\Vert_{2}^{2} = \frac{1}{d^{2}} \sum\limits_{i,l=1}^{d_{A}} \left\Vert \mathcal{E} \left( | i \rangle \langle  l | \otimes I_{B} \right) \right\Vert_{2}^{2} = \frac{1}{d^{2}} \operatorname{Tr}\left[ S \mathcal{E}^{\otimes 2} \left( S_{AA'} \right) \right].
\end{align}

Therefore, the open OTOC can be reexpressed as the difference of purities of the Choi state \(\rho_{\mathcal{E}}\) across different partitions,
\begin{align}
G(\mathcal{E}) = d_{B} \left\Vert \operatorname{Tr}_{B'}\left[ \rho_{\mathcal{E}} \right] \right\Vert_{2}^{2} - \left\Vert \operatorname{Tr}_{BB'}\left[ \rho_{\mathcal{E}} \right] \right\Vert_{2}^{2} .
\end{align}

Notice that for a dephasing channel, \(\mathcal{D}_{\mathbb{B}}\), one finds,
\begin{align}
\rho_{\mathcal{D}_{\mathbb{B}}} = \frac{1}{d} \sum\limits_{\alpha}^{} \Pi_{\alpha} \otimes \Pi_{\alpha} \equiv R_{\mathbb{B}}
\end{align}
and the \(G(\mathcal{D}_{\mathbb{B}})\) becomes the known expression for with the ``R-matrix''.

Moreover, for unitary channels, \(\rho_{\mathcal{U}}^{ABA'}\) is isospectral to \(\rho_{\mathcal{U}}^{B'}\) since the state \(\rho_{\mathcal{U}}^{ABA'B'}\) is pure. And, it is easy to show that \(\rho_{\mathcal{U}}^{B'} = I_{B'}/d_{B'}\), therefore, its purity is \(1/d_{B}\). That is, for unitary channels the first term of \(G(\mathcal{U})\) is equal to one, as expected. As a result, we have, \(G(\mathcal{U}) = 1 - \left\Vert \rho_{\mathcal{U}}^{AA'} \right\Vert_{2}^{2}\), which is the operator entanglement of the unitary channel \(\mathcal{U}\).

\begin{center}
\noindent\rule[0.5ex]{0.5\linewidth}{0.8pt}
\end{center}

To prove part (ii), notice that,
\begin{align}
&\operatorname{Tr}\left[ S \mathcal{E}^{\otimes 2} (S_{AA'}) \right] = \sum\limits_{i,j=1}^{d_{A}} \operatorname{Tr}\left[ S \mathcal{E}^{\otimes 2} \left( | i \rangle_{A} \langle  j | \otimes I_{B} \otimes | j \rangle_{A'} \langle  i | \otimes I_{B'}   \right) \right] = \sum\limits_{i,j=1}^{d_{A}} \operatorname{Tr}\left[ S \mathcal{E} \left( | i \rangle \langle  j | \otimes I_{B} \right) \otimes \mathcal{E} \left( | j \rangle \langle  i | \otimes I_{B'} \right) \right]\\
&= d_{B}^{2} \sum\limits_{i,j=1}^{d_{A}} \left\Vert \mathcal{E} \left( | i \rangle \langle  j | \otimes \frac{I_{B}}{d_{B}}  \right) \right\Vert_{2}^{2}.
\end{align}

Now, notice that, for \(\widetilde{\mathcal{E}}(X) = \mathcal{E} \left( X \otimes \frac{I_{B}}{d_{B}} \right)\), we have,
\begin{align}
\rho_{\widetilde{\mathcal{E}}} = \frac{1}{d_{A}} \sum\limits_{i,j}^{d_{A}} \left( \widetilde{\mathcal{E}} \otimes \mathcal{I}  \right) | i \rangle_{A} \langle  j | \otimes | i \rangle_{A'} \langle  j | = \frac{1}{d_{A}} \sum\limits_{i,j}^{d_{A}} \mathcal{E} \left( | i \rangle \langle  j | \otimes \frac{\mathbb{I}}{d_{B}} \right) \otimes | i \rangle_{A'} \langle  j |. 
\end{align}

Then, \(\left\Vert \rho_{\widetilde{\mathcal{E}}} \right\Vert_{2}^{2} = \frac{1}{d_{A}^{2}} \sum\limits_{i,j}^{d_{A}} \left\Vert \mathcal{E} \left( | i \rangle \langle  j | \otimes \frac{\mathbb{I}}{d_{B}} \right) \otimes | i \rangle_{A'} \langle  j | \right\Vert_{2}^{2}\).

Therefore,
\begin{align}
\frac{d_{B}}{d^{2}} \operatorname{Tr}\left[ S \mathcal{E}^{\otimes 2} \left( S_{AA'} \right) \right] = \frac{d_{B}}{d_{A}^{2}} \sum\limits_{ij}^{d_{A}} \left\Vert \mathcal{E} \left( | i \rangle \langle  j | \otimes \frac{I_{B}}{d_{B}} \right) \right\Vert_{2}^{2} = d_{B} \left\Vert \rho_{\widetilde{\mathcal{E}}} \right\Vert_{2}^{2}.
\end{align}

Similarly, we have,
\begin{align}
&\operatorname{Tr}\left[ S_{AA'} \mathcal{E}^{\otimes 2} \left( S_{AA'} \right) \right] = \frac{1}{d_{A}^{2}} \sum\limits_{i,j}^{d_{A}} \operatorname{Tr}\left[ S_{AA'} \mathcal{E} \left( | i \rangle \langle  j | \otimes \frac{I_{B}}{d_{B}} \right) \otimes \mathcal{E} \left( | j \rangle \langle  i | \otimes \frac{I_{B}}{d_{B}}  \right) \right] \\
&= \frac{1}{d_{A}^{2}} \sum\limits_{i,j}^{d_{A}} \left\Vert \operatorname{Tr}_{B}\left[ \mathcal{E} \left( | i \rangle \langle  j | \otimes \frac{I_{B}}{d_{B}} \right) \right] \right\Vert_{2}^{2} = d_{B} \left\Vert\rho_{\mathcal{T} \circ \widetilde{\mathcal{E}}} \right\Vert_{2}^{2} .
\end{align}

Putting everything together, we have the desired proof.
\end{proof}

\subsection*{Proof of \autoref{th:entropy-production-open}}
\entropyopen*
\begin{proof}

Let \(| \phi_{A} \rangle\) be an arbitrary state and \(| \psi_{A} \rangle := U | \phi_{A} \rangle\) correspond to Haar random pure states over \(\mathcal{H}_A\). Then, the key idea of the proof is the observation that \(S_{AA'}\) can be expressed via the identity,
\begin{align}
 \mathbb{E}_{\psi_{A} \sim \mathrm{Haar}} \left( | \psi_{A} \rangle \langle  \psi_{A} |  \right)^{\otimes 2} = \frac{1}{d_{A}(d_{A}+1)} \left( I_{AA'} + S_{AA'} \right).
\end{align}

Plugging this into Eq (2), we have,
\begin{align}
  1 - \frac{d_{A}+1}{d_{A}} \left\{ d_{B} \operatorname{Tr}\left( S \mathcal{E}^{\otimes 2} \left( \overline{\psi_{A}^{\otimes 2} \otimes \frac{I_{BB'}}{d_{B}^{2}}}^{\psi_{A}} \right)  \right) - \operatorname{Tr}\left( S_{AA'} \mathcal{E}^{\otimes 2}  \left( \overline{\psi_{A}^{\otimes 2} \otimes \frac{I_{BB'}}{d_{B}^{2}}}^{\psi_{A}} \right)  \right) \right\}.
\end{align}

Then, using,
\begin{align}
  S_{L}(X) = 1 - \operatorname{Tr}\left( X^{2} \right) = 1 - \operatorname{Tr}\left( S X \otimes X \right),
\end{align}
we have,
\begin{align}
  1- \frac{d_{A}+1}{d_{A}} \left\{ \overline{S_{L} \left( \operatorname{Tr}_{B}\left( \widetilde{\mathcal{E}} (\psi_{A}) \right)
 \right)}^{\psi_{A}} - d_{B} \left[ \overline{S_{L} \left( \widetilde{\mathcal{E}} (\psi_{A}) \right)}^{\psi_{A}} - \left( 1 - \frac{1}{d_{B}} \right) \right]  \right\},
\end{align}
where \(\widetilde{\mathcal{E}}(\psi_{A}) = \mathcal{E} \left( \psi_{A} \otimes \frac{I_{B}}{d_{B}} \right)\).

Now, notice that,
\begin{align}
  S_{L} \left( \widetilde{\mathcal{E}}(\psi_{A}) \right) \ge 1 - \frac{1}{d_{B}} \equiv S_{L}^{\mathrm{min}},
\end{align}
and since \(\mathcal{E}\) is unital, \(S_{L} \left( \mathcal{E} (\psi_{A} \otimes \frac{I_{B}}{d_{B}}) \right)\) must increase with time, since entropy cannot decrease under a unital map.
\end{proof}

\subsection*{Proof of \autoref{th:dephasing-channel}}
\dephasingchannel*
\begin{proof}
Consider the dephasing channel, \(\mathcal{E} = \mathcal{D}_{\mathbb{B}}\), where \(\mathcal{D}_{\mathbb{B}}(X) = \sum\limits_{\alpha=1}^{d} \Pi_{\alpha} X \Pi_{\alpha} = \sum\limits_{\alpha=1}^{d} | \psi_{\alpha} \rangle \langle  \psi_{\alpha} | \left\langle \psi_{\alpha} | X | \psi_{\alpha} \right\rangle\), where \(\{ \Pi_{\alpha} \}_{\alpha}\) is a basis (of rank-\(1\) projectors).

First, note that,
\begin{align}
  \operatorname{Tr}\left( S \mathcal{E}^{\otimes 2} (S_{AA'}) \right) = \sum\limits_{i,j=1}^{d_{A}} \operatorname{Tr}\left( S \mathcal{E} \left( | i \rangle \langle  j | \otimes I_{B} \right) \otimes \mathcal{E} \left( | j \rangle \langle  i | \otimes I_{B} \right) \right) = \sum\limits_{i,j=1}^{d_{A}} \left\Vert \mathcal{E} \left( | i \rangle \langle  j | \otimes I_{B} \right) \right\Vert_{2}^{2}, 
\end{align}
where we have used \(S_{AA'} = \sum\limits_{i,j=1}^{d_{A}} | ij \rangle_{AA'} \langle  ji | \otimes I_{BB'} \).

Now, 
\begin{align}
  \mathcal{E} \left( | i \rangle \langle  j | \otimes I_{B} \right) &= \sum\limits_{\alpha}^{} | \psi_{\alpha} \rangle \langle  \psi_{\alpha} | \left\langle \psi_{\alpha} | \left( | i \rangle \langle  j | \otimes I_{B} \right) | \psi_{\alpha} \right\rangle \\
&= \sum\limits_{\alpha}^{} \Pi_{\alpha} \operatorname{Tr}\left( \rho_{\alpha\alpha} | i \rangle \langle  j |  \right) = \sum\limits_{\alpha}^{} \left\langle j | \rho_{\alpha\alpha} | i \right\rangle \Pi_{\alpha},
\end{align}
where \(\rho_{\alpha \alpha} := \operatorname{Tr}_{B}\left( | \psi_{\alpha} \rangle \langle  \psi_{\alpha} |  \right)
\).

Therefore, 
\begin{align}
  \sum\limits_{i,j=1}^{d_{A}} \left\Vert \mathcal{E} \left( | i \rangle \langle  j | \otimes I_{B} \right) \right\Vert_{2}^{2} = \sum\limits_{i,j=1}^{d_{A}} \sum\limits_{\alpha}^{} \left| \left\langle i | \rho_{\alpha \alpha} | j \right\rangle \right|^{2} = \sum\limits_{\alpha}^{} \left\Vert \rho_{\alpha \alpha} \right\Vert_{2}^{2}. 
\end{align}

Similarly,
\begin{align}
 & \operatorname{Tr}\left( S_{AA'} \mathcal{E}^{\otimes 2} \left( S_{AA'} \right) \right) = \sum\limits_{i,j=1}^{d_{A}} \left\Vert \operatorname{Tr}_{B}\left( \mathcal{E} \left( | i \rangle \langle  j | \otimes I_{B} \right) \right) \right\Vert_{2}^{2} = \sum\limits_{i,j=1}^{d_{A}} \left\Vert \sum\limits_{\alpha}^{} \rho_{\alpha \alpha} \left\langle j | \rho_{\alpha \alpha} | i \right\rangle \right\Vert_{2}^{2} \\
&= \sum\limits_{i,j=1}^{d_{A}} \sum\limits_{\alpha, \beta}^{} \left\langle j | \rho_{\alpha \alpha} | i \right\rangle \left\langle i | \rho_{\beta \beta} | j\right\rangle \left\langle \rho_{\alpha \alpha}, \rho_{\beta \beta} \right\rangle = \sum\limits_{\alpha,\beta}^{} \left| \left\langle \rho_{\alpha \alpha}, \rho_{\beta \beta} \right\rangle \right|^{2}. 
\end{align}

Putting everything together, we have the desired result,
\begin{align}
  G(\mathcal{D}_{\mathbb{B}}) = \frac{1}{d^{2}} \left[ d_{B} \sum\limits_{\alpha}^{} \left\Vert \rho_{\alpha \alpha} \right\Vert_{2}^{2} - \sum\limits_{\alpha, \beta}^{} \left| \left\langle \rho_{\alpha \alpha}, \rho_{\beta \beta} \right\rangle \right|^{2}  \right] .
\end{align}

Define the renormalized Gram matrix as, \(X_{\alpha \beta} = \frac{\left\langle \rho_{\alpha \alpha}, \rho_{\beta \beta} \right\rangle}{d_{B}}\), then,
\begin{align}
  G(\mathcal{D}_{\mathbb{B}}) = \frac{1}{d_{A}^{2}} \left[ \sum\limits_{\alpha}^{} X_{\alpha \alpha} - \sum\limits_{\alpha \beta}^{} X_{\alpha \beta}^{2} \right] = \frac{1}{d_{A}^{2}} \left( \operatorname{Tr}\left( X \right) - \operatorname{Tr}\left( X^{2} \right) \right) = \frac{1}{d_{A}^{2}} \left\Vert X - X^{2} \right\Vert_{1} .
\end{align}

For the bound, note that \(X \geq X^{2}\) since \(X\) is bistochastic. Therefore, one has that \(\text{spec} \left( X \right) \subseteq \left[ 0,1 \right]  \). Then,
\begin{align}
  \left\Vert X - X^{2} \right\Vert_{1} = \sum\limits_{\alpha}^{} x_{\alpha} \left( 1 - x_{\alpha} \right) \leq \text{rank}(X)/4. 
\end{align}
And, \(\text{rank}(X) \leq \min \left( d_{A}^{2},d \right)\) since it is a Gram matrix of vectors in a \(d_{A}^{2}\)-dimensional space. Therefore, we have the bound,
\begin{align}
  G(\mathcal{D}_{\mathbb{B}}) \leq \frac{1}{4} \min \left( 1, \frac{d}{d_{A}^{2}} \right) = \frac{1}{4} \min \left( 1, \frac{d_{B}}{d_{A}} \right) .
\end{align}
\end{proof}

\subsection*{Proof of \autoref{th:deficit-entanglement}}
\deficitentanglement*
\begin{proof}
First notice, 
\begin{align}
X_{\alpha \beta} = \left\langle \frac{I_{A}}{d_{A}} + \Delta_{\alpha}, \frac{I_{B}}{d_{B}} + \Delta_{\beta} \right\rangle = \frac{1}{d_{A} d_{B}} + \frac{\left\langle \Delta_{\alpha}, \Delta_{\beta} \right\rangle}{d_{B}} \left( ~~\forall \alpha,\beta \right).
\end{align}

Namely, \(\hat{X}_{\mathbb{B}} = | \phi_{AB}^{s} \rangle \langle  \phi_{AB}^{s} | + \hat{\delta}_{\mathbb{B}} \) where \(| \phi_{AB}^{s} \rangle = \frac{1}{\sqrt{d_{A} d_{B}}} \sum\limits_{i=1}^{d_{A}} \sum\limits_{j=1}^{d_{B}} | i \rangle \otimes | j \rangle\) and \([ \hat{\delta} ]_{\alpha \beta} =  \frac{\left\langle \Delta_{\alpha}, \Delta_{\beta} \right\rangle}{d_{B}} \).

Then, using,
\begin{align}
\sum\limits_{\alpha}^{} \Delta_{\alpha} = \sum\limits_{\alpha}^{} \rho_{\alpha} - d_{B} I_{A} = \operatorname{Tr}_{B}\left[ \sum\limits_{\alpha}^{} | \psi_{\alpha} \rangle \langle  \psi_{\alpha} |  \right] = d_{B} I_{A} - d_{B} I_{A} = 0,
\end{align}
we find that
\begin{align}
\frac{1}{d_{A}^{2}} \operatorname{Tr}\left[ \hat{X}_{\mathbb{B}} - \hat{X}^{2}_{\mathbb{B}} \right] = \frac{1}{d_{A}^{2}} \operatorname{Tr}\left[ \hat{\delta}_{\mathbb{B}} - \delta_{\mathbb{B}}^{2} \right] = G(\mathcal{D}_{\mathbb{B}}).
\end{align}

Ignoring the squared term, it follows that
\begin{align}
G(\mathcal{D}_{\mathbb{B}}) \leq \frac{1}{d_{A}^{2}} \operatorname{Tr}\left[ \hat{\delta}_{\mathbb{B}} \right] = \frac{1}{d_{A}^{2}} \sum\limits_{\alpha}^{} \frac{\left\langle \Delta_{\alpha}, \Delta_{\alpha} \right\rangle}{d_{B}} \leq \frac{1}{d_{A}^{2}} \sum\limits_{\alpha}^{} \frac{\epsilon}{d_{B}} = \epsilon \frac{d}{d_{A}^{2} d_{B}} = \frac{\epsilon}{d_{A}}.
\end{align}
\end{proof}

\subsection*{Proof of \autoref{th:random-dephasing-channels}}
\randomdephasingchannels*
\begin{proof}
We have
\begin{align}
G(\mathcal{D}_{\mathbb{B}})=\frac{1}{d_A}\langle S_{AA'}, R_{\mathbb{B}}\rangle -\|R_{\mathbb{B}}^{AA'}\|_2^2,
\end{align}
Let us consider the two terms separately. 

\begin{align}
 \frac{1}{d_{A}} \operatorname{Tr}[ S_{AA'} U^{\otimes 2} \underbrace{ \left( \frac{1}{d}\sum\limits_{\alpha}^{} | \alpha \rangle \langle  \alpha | ^{\otimes 2} \right)}_{\Omega_{0}} U^{\dagger \otimes 2} ].
\end{align}
Then, notice that \(\mathbb{E}_{U} \left[ U^{\otimes 2} \Omega_{0} U^{\dagger \otimes 2}\right] = \frac{I + S}{d(d+1)}\), hence,
\begin{align}\label{eq:1st}
\frac{1}{  d_A  d \left( d+1 \right)} \operatorname{Tr}\left[ S_{AA'} \left( I+S \right) \right]= \frac{1 + \frac{d_{B}}{d_{A}}}{d+1} \leq \frac{2 }{d_{A}^{2}}.
\end{align}

Second term. Using convexity, we have,
\begin{align}
\overline{\left\Vert  \operatorname{Tr}_{BB'}R_{\mathbb{B}} \right\Vert_{2}^{2}}^{\mathbb{B}} \geq \left\Vert \overline{ \operatorname{Tr}_{BB'}R_{\mathbb{B}}}^{\mathbb{B}} \right\Vert_{2}^{2}.
\end{align}

Recall that,
\begin{align}
\overline{ \operatorname{Tr}_{BB'}R_{\mathbb{B}}}^{\mathbb{B}} = \operatorname{Tr}_{BB'}\left[ \frac{I+S}{d(d+1)} \right] = \frac{d_{B}^{2} I_{AA'} + d_{B} S_{AA'}}{d(d+1)}.
\end{align}

Therefore,
\begin{align}\label{eq:2nd}
\left\Vert \overline{R_{\mathbb{B}}}^{\mathbb{B}} \right\Vert_{2}^{2} = \frac{d_{B}^{2}}{d^2\left( d+1 \right)^{2}} \left\Vert d_{B} I_{AA'} + S_{AA'} \right\Vert_{2}^{2} = \frac{d_{B}^{2}}{d^2\left( d+1 \right)^{2}} \left[ d^{2} + d_{A}^{2} + 2d \right] \geq \frac{d_{B}^{2} }{\left( d+1 \right)^{2}}.
\end{align}

And finally, putting Eqs. (\ref{eq:1st}) and (\ref{eq:2nd}) together
\begin{align}
\overline{G(\mathcal{D}_{\mathbb{B}})}^{\mathbb{B}}  \leq \frac{2}{d_{A}^{2}} - \frac{d_{B}^{2}}{ \left( d+1 \right)^{2	}} \le \frac{2}{d_{A}^{2}} - \frac{d_{B}^{2}}{ \left( 2d \right)^{2	}}=\frac{7}{4}d_A^{-2}=O(\frac{1}{d_{A}^{2}}).
\end{align}

ii) 
\begin{align}
f(\mathbb{B}) \equiv \frac{1}{d_{A}} \left\langle S_{AA'}, R_{\mathbb{B}} \right\rangle - \left\Vert R_{\mathbb{B}}^{AA'} \right\Vert_{2}^{2} \equiv \alpha(\mathbb{B}) + \beta(\mathbb{B}).
\end{align}

We first collect a few results. First,
\begin{align}
\left| \alpha(\mathbb{B}) - \alpha (\widetilde{\mathbb{B}}) \right| = \frac{1}{d_{A}} \left| \left\langle S_{AA'}, R_{\mathbb{B}} - R_{\widetilde{\mathbb{B}}} \right\rangle \right| \leq \frac{1}{d_{A}} \left\Vert S_{AA'} \right\Vert_{\infty}^{} \left\Vert R_{\mathbb{B}} - R_{\widetilde{\mathbb{B}}} \right\Vert_{1}^{} \leq \frac{1}{d_{A}} \left\Vert R_{\mathbb{B}} - R_{\widetilde{\mathbb{B}}} \right\Vert_{1}^{} ,  
\end{align}
where in the first inequality we have used the Holder-type inequality (for matrices), \(\left| \operatorname{Tr}\left[ A^{\dagger}B \right] \right| \leq \left\Vert A \right\Vert_{\infty}^{} \left\Vert B \right\Vert_{1}^{}\). And in the second inequality, \(\left\Vert U \right\Vert_{\infty}^{} = 1\) for any unitary \(U\).

Second,
\begin{align}
&\left| \beta(\mathbb{B}) - \beta(\widetilde{\mathbb{B}}) \right| = \left| \left\Vert R_{\mathbb{B}}^{AA'} \right\Vert_{2}^{2} - \left\Vert R_{\widetilde{\mathbb{B}}}^{AA'} \right\Vert_{2}^{2} \right| = \left| \left( \left\Vert R_{\mathbb{B}}^{AA'} \right\Vert_{2} + \left\Vert R_{\widetilde{\mathbb{B}}}^{AA'} \right\Vert_{2} \right) \left( \left\Vert R_{\mathbb{B}}^{AA'} \right\Vert_{2} - \left\Vert R_{\widetilde{\mathbb{B}}}^{AA'} \right\Vert_{2} \right) \right|\\
&= 2 \left\Vert R_{\mathbb{B}}^{AA'} - R_{\widetilde{\mathbb{B}}}^{AA'} \right\Vert_{2} \leq 2 \left\Vert R_{\mathbb{B}}^{AA'} - R_{\widetilde{\mathbb{B}}}^{AA'} \right\Vert_{1} \leq 2 \left\Vert R_{\mathbb{B}} - R_{\widetilde{\mathbb{B}}} \right\Vert_{1},
\end{align}
where in the first inequality we have bounded the \(2\)-norm with the \(1\)-norm distance and in the second inequality we have used the fact that partial trace is a CP map and the \(1\)-norm is contractive under CP maps.

Now, we have to bound,
\begin{align}
\left\Vert R_{\mathbb{B}} - R_{\widetilde{\mathbb{B}}} \right\Vert_{1}^{} = \left\Vert R_{\mathbb{B}_{0}} - \left( V^{\dagger}U \right)^{\otimes 2} R_{\mathbb{B}_{0}} \left( V^{\dagger} U \right)^{\otimes 2} \right\Vert_{1}^{},  
\end{align}
where we have use the unitary invariance of the \(1\)-norm.

Define, \(U-V \equiv \Delta \implies V^{\dagger}U = I + \Delta\). Then,
\begin{align}
\left( I + \Delta \right)^{\otimes 2} = I \otimes I + \Delta \otimes I + I \otimes \Delta + \Delta \otimes \Delta \equiv I + X.
\end{align}

Using this, we have,
\begin{align}
\left\Vert R_{\mathbb{B}} - R_{\widetilde{B}} \right\Vert_{1}^{} = \left\Vert X R_{\mathbb{B}_{0}} + R_{\mathbb{B}_{0}} X + X R_{\mathbb{B}_{0}} X \right\Vert_{1}^{} \leq 2 \left\Vert X \right\Vert_{\infty}^{} + \left\Vert X \right\Vert_{\infty}^{2} = \left\Vert X \right\Vert_{\infty}^{} \left( 2 + \left\Vert X \right\Vert_{\infty}^{} \right),    
\end{align}
where we have repeatedly used \(\left\Vert AB \right\Vert_{1}^{} \leq \left\Vert A \right\Vert_{\infty}^{} \left\Vert B \right\Vert_{1}^{}\), submultiplicativity of norms and the fact that \(R_{\mathbb{B}_{0}}\) is a quantum state, \(\left\Vert R_{\mathbb{B}_{0}} \right\Vert_{1}^{} = 1\).

Now,
\begin{align}
&\left\Vert X \right\Vert_{\infty}^{} = \left\Vert \Delta \otimes I + I \otimes \Delta + \Delta \otimes \Delta \right\Vert_{\infty}^{} \leq 2 \left\Vert \Delta \right\Vert_{\infty}^{} + \left\Vert \Delta \right\Vert_{\infty}^{2} \\
& = \left\Vert \Delta \right\Vert_{\infty}^{} \left( 2 + \left\Vert \Delta \right\Vert_{\infty}^{}  \right) \leq 4 \left\Vert \Delta \right\Vert_{\infty}^{} = 4 \left\Vert U-V \right\Vert_{\infty}^{}.    
\end{align}

Therefore,
\begin{align}
\left\Vert R_{\mathbb{B}} - R_{\widetilde{\mathbb{B}}} \right\Vert_{1}^{} \leq 4 \left\Vert \Delta \right\Vert_{\infty}^{} \left( 2 + 4 \left\Vert \Delta \right\Vert_{\infty}^{}  \right) \leq 4 \left\Vert \Delta \right\Vert_{\infty}^{} \left( 2 + 4 \times 2 \right) = 40 \left\Vert \Delta \right\Vert_{\infty}^{},   
\end{align}
where we have used \(\left\Vert \Delta \right\Vert_{\infty}^{} \leq 2\).

Bringing everything together, we have,
\begin{align}
&\left| F(\mathbb{B}) - F(\widetilde{\mathbb{B}}) \right| \leq \left| \alpha(\mathbb{B}) - \alpha (\widetilde{\mathbb{B}}) \right| + \left| \beta(\mathbb{B}) - \beta (\widetilde{\mathbb{B}}) \right| \leq \left( \frac{1}{d_{A}} + 2 \right) \left\Vert R_{\mathbb{B}}- R_{\widetilde{\mathbb{B}}} \right\Vert_{1}^{}  \leq 40 \left( \frac{1}{d_{A}} + 2 \right) \left\Vert \Delta \right\Vert_{\infty}^{} \\
& \leq 40 \times \frac{5}{2} \left\Vert \Delta \right\Vert_{\infty}^{} = 100 \left\Vert U-V \right\Vert_{\infty}^{}  \leq 100 \left\Vert U-V \right\Vert_{2}^{}. 
\end{align}

\end{proof}

\subsection*{Proof of \autoref{th:entanglement-breaking-channel}}
\entanglementbreakingchannel*
\begin{proof}

To compute the open OTOC for the general case, \(G(\Phi_{\mathrm{EB}})\), we need to compute, \(\operatorname{Tr}\left( S \Phi^{\otimes 2}_{\mathrm{EB}} (S_{AA'}) \right) =\sum\limits_{i,j=1}^{d_{A}} \left\Vert \Phi_{\mathrm{EB}} \left( | i \rangle \langle  j | \otimes I_{B} \right) \right\Vert_{2}^{2}\) and \(\operatorname{Tr}\left( S_{AA'} \Phi^{\otimes 2}_{\mathrm{EB}} \left( S_{AA'} \right) \right) = \sum\limits_{i,j=1}^{d_{A}} \left\Vert \operatorname{Tr}_{B}\left( \Phi_{\mathrm{EB}} \left( | i \rangle \langle  j | \otimes I_{B} \right) \right) \right\Vert_{2}^{2}\).

\begin{align}
&(i)\quad \Phi_{\mathrm{EB}} \left( | i \rangle \langle  j | \otimes I_{B} \right) = \sum\limits_{k}^{} M_{k} \operatorname{Tr}\left[ \delta_{k} | i \rangle \langle  j | \otimes I_{B} \right] = \sum\limits_{k}^{} M_{k} \langle j | \delta_{k}^{A} |  i \rangle,
\end{align}
where \(\delta_{k}^{A} \equiv \operatorname{Tr}_{B}\left[ \delta_{k} \right]\).

Therefore,
\begin{align}
&\sum\limits_{i,j=1}^{d_{A}} \left\Vert \Phi_{\mathrm{EB}} \left( | i \rangle \langle  j | \otimes I_{B} \right) \right\Vert_{2}^{2} = \sum\limits_{i,j=1}^{d_{A}} \operatorname{Tr}\left[ \sum\limits_{k,k'}^{} M_{k} M_{k'} \langle i | \delta_{k}^{A} |  j \rangle \langle j | \delta_{k'}^{A} |  i \rangle \right]\\
&= \sum\limits_{k,k'}^{} \left\langle \delta_{k}^{A}, \delta_{k'}^{A} \right\rangle \left\langle M_{k},M_{k'} \right\rangle.
\end{align}

Similarly,
\begin{align}
&(ii)\quad \operatorname{Tr}_{B}\left[ \Phi_{\mathrm{EB}} \left( | i \rangle \langle  j | \otimes I_{B} \right) \right] = \operatorname{Tr}_{B}\left[ \sum\limits_{k}^{} M_{k} \operatorname{Tr}\left[ \delta_{k} | i \rangle \langle  j | \otimes I_{B} \right] \right] = \sum\limits_{k}^{} M^{A}_{k} \langle j | \delta_{k}^{A} |  i \rangle,
\end{align}
where \(M_{k}^{A} \equiv \operatorname{Tr}_{B}\left[ M_{k} \right]\).

Therefore,
\begin{align}
&\sum\limits_{i,j=1}^{d_{A}} \left\Vert \operatorname{Tr}_{B}\left[  \Phi_{\mathrm{EB}} \left( | i \rangle \langle  j | \otimes I_{B} \right)  \right] \right\Vert_{2}^{2} = \sum\limits_{i,j=1}^{d_{A}} \operatorname{Tr}\left[ \sum\limits_{k,k'}^{} M_{k}^{A} M_{k'}^{A} \langle i | \delta_{k}^{A} |  j \rangle \langle j | \delta_{k'}^{A} |  i \rangle \right]\\
&= \sum\limits_{k,k'}^{} \left\langle \delta_{k}^{A}, \delta_{k'}^{A} \right\rangle \left\langle M_{k}^{A},M_{k'}^{A} \right\rangle.
\end{align}

Putting everything together, we have,
\begin{align}
  G(\Phi_{\mathrm{EB}}) =  \frac{1}{d^{2}} \sum\limits_{k,k'}^{}  \left\langle \delta_{k}^{A}, \delta_{k'}^{A} \right\rangle \left[ d_{B} \left\langle M_{k}, M_{k'} \right\rangle -  \left\langle M_{k}^{A}, M_{k'}^{A} \right\rangle  \right],
\end{align}

\end{proof}
\subsection*{Proof of \autoref{th:gen-dephasing-channel}}
\generaldephasingchannel*
\begin{proof}
To prove (i), we need to show that given, \(\mathbb{B}=\{ | \alpha \rangle \}\) a basis of \(\mathcal{H}_{AB}\), \(d=\mathrm{dim}(\mathcal{H}_{AB})\) with \(\hat{\Phi} = \left[ \phi_{\alpha,\alpha'} \right]_{\alpha,\alpha'=1}^{d}\) such that \(\hat{\Phi} \geq 0\) and \(\phi_{\alpha,\alpha}=1 ~~\forall \alpha\), the map \(\mathcal{E}_{\hat{\Phi}}(X) = \sum\limits_{\alpha,\alpha'}^{d} \phi_{\alpha,\alpha'} X_{\alpha,\alpha'} | \alpha \rangle \langle  \alpha' | \) is a quantum channel.

First, notice that \(\mathcal{E}_{\hat{\Phi}}\) defines a linear map on \(\mathcal{L}(\mathcal{H}_{AB})\) such that
\begin{align}
\operatorname{Tr}\left[ \mathcal{E}_{\hat{\Phi}}(X)\right] = \sum\limits_{\alpha,\alpha'}^{} \phi_{\alpha,\alpha'} X_{\alpha,\alpha'} \delta_{\alpha,\alpha'} = \sum\limits_{\alpha}^{} \phi_{\alpha,\alpha} X_{\alpha,\alpha} = \sum\limits_{\alpha}^{} X_{\alpha} = \operatorname{Tr}\left[ X \right].
\end{align}
Hence, \(\mathcal{E}_{\hat{\Phi}}\) is a trace-preserving map.

Then, since \(\hat{\Phi} \geq 0\), one can write, \(\hat{\Phi} = S \hat{\Phi}_{D} S^{\dagger}\) where \(\hat{\Phi}_{D} \equiv \mathrm{diag}(\phi_{\mu}), \phi_{\mu} \geq0\) and \(S\) is a unitary. Then, \(\mathcal{E}_{\hat{\Phi}}\) can be expressed as,
\begin{align}
\mathcal{E}_{\hat{\Phi}}(X) = \sum\limits_{\mu,\alpha,\alpha'}^{} \lambda_{\mu} S_{\alpha, \mu} \overline{S_{\alpha', \mu}} X_{\alpha,\alpha'} | \alpha \rangle \langle  \alpha' | .
\end{align}
We now define \(A_{\mu} | \alpha \rangle:= \sqrt{\lambda_{\mu}} S_{\alpha,\mu} | \alpha \rangle ~~\forall \alpha,\alpha'\). Therefore,
\begin{align}
\mathcal{E}_{\hat{\Phi}}(X) = \sum\limits_{\mu,\alpha,\alpha'}^{} X_{\alpha,\alpha'} A_{\mu} | \alpha \rangle \langle  \alpha' | A_{\mu}^{\dagger} = \sum\limits_{\mu}^{} A_{\mu} X A^{\dagger}_{\mu}.
\end{align}
Moreover, 
\begin{align}
\langle \alpha | \sum\limits_{\mu}^{} A^{\dagger}_{\mu} A_{\mu} |\alpha' \rangle = \sum\limits_{\mu}^{} \lambda_{\mu} \overline{S_{\alpha,\mu}} S_{\alpha',\mu} \langle \alpha|\alpha' \rangle = \delta_{\alpha,\alpha'} \sum\limits_{\mu}^{} \lambda_{\mu} \left| S_{\alpha,\mu} \right|^{2} = \phi_{\alpha,\alpha} \delta_{\alpha,\alpha} = \delta_{\alpha,\alpha} \left( \forall \alpha,\alpha' \right).	
\end{align}
Therefore, \(\sum\limits_{\mu}^{}A^{\dagger}_{\mu} A_{\mu} = \mathbb{I}\) and since \(\mathcal{E}_{\hat{\Phi}}\) can be expressed in a Kraus form, it is CP.

\underline{Remark:} Let \(\mathcal{F} = \{ \hat{\Phi} \in \mathcal{M}_{d}^{\mathbb{C}} ~|~ \hat{\Phi} \geq 0 \text{ and } \phi_{\alpha,\alpha}=1 (\forall \alpha)  \}\). Then, \(\mathcal{F}\) is a convex subset of \(\mathcal{M}_{d}^{\mathbb{C}}\), the set of \(d \times d\) matrices over \(\mathbb{C}\). Maps of the form \(\mathcal{E}_{\hat{\Phi}}\) are parametrized by elements in \(\mathcal{F}\) and bases \(\mathbb{B}\). For a fixed \(\mathbb{B}\), the  map, \(\hat{\Phi} \in \mathcal{F} \rightarrow \mathcal{E}_{\hat{\Phi}}\) is an affine map of convex bodies.

\begin{center}
\noindent\rule[0.5ex]{0.5\linewidth}{0.8pt}
\end{center}

To prove (ii), the proof strategy is similar to Proposition 4. The key observation is that the action of the map, \(\mathcal{E}_{\hat{\Phi}}\) can be expressed as,
\begin{align}
\mathcal{E}_{\hat{\Phi}}(X) = \sum\limits_{\alpha,\beta=1}^{d} \phi_{\alpha,\beta} | \alpha \rangle \langle  \alpha | X | \beta \rangle \langle  \beta | = \sum\limits_{\alpha,\beta}^{d} \phi_{\alpha,\beta} x_{\alpha,\beta} | \alpha \rangle \langle  \beta | ,
\end{align}
where \(x_{\alpha,\beta} \equiv \langle \alpha | X |  \beta \rangle\). This follows from the action \(\mathcal{E}_{\hat{\Phi}}(| \alpha \rangle \langle  \alpha' | ) = \phi_{\alpha,\alpha'} | \alpha \rangle \langle  \alpha' | \).

We need to evaluate \(\operatorname{Tr}\left( S \mathcal{E}^{\otimes 2} (S_{AA'}) \right) =\sum\limits_{i,j=1}^{d_{A}} \left\Vert \mathcal{E} \left( | i \rangle \langle  j | \otimes I_{B} \right) \right\Vert_{2}^{2}\) and \(\operatorname{Tr}\left( S_{AA'} \mathcal{E}^{\otimes 2} \left( S_{AA'} \right) \right) = \sum\limits_{i,j=1}^{d_{A}} \left\Vert \operatorname{Tr}_{B}\left( \mathcal{E} \left( | i \rangle \langle  j | \otimes I_{B} \right) \right) \right\Vert_{2}^{2}\).

Now, 
\begin{align}
\mathcal{E} \left( | i \rangle \langle  j | \otimes I_{B} \right) = \sum\limits_{\alpha,\beta}^{} \phi_{\alpha,\beta} | \alpha \rangle \langle  \alpha | \left( | i \rangle \langle  j | \otimes I_{B} \right) | \beta \rangle \langle  \beta | = \sum\limits_{\alpha,\beta}^{} \phi_{\alpha,\beta} \Pi_{\alpha,\beta} \operatorname{Tr}\left[ \rho_{\alpha,\beta} | i \rangle \langle  j |  \right], 
\end{align}
where \(\Pi_{\alpha,\beta} \equiv | \alpha \rangle \langle  \beta |\) and \(\rho_{\alpha,\beta} \equiv \operatorname{Tr}_{B}\left[ | \alpha \rangle \langle  \beta |  \right]\).

And,
\begin{align}
&\sum\limits_{i,j=1}^{d_{A}} \left\Vert \mathcal{E} \left( | i \rangle \langle  j | \otimes I_{B} \right) \right\Vert_{2}^{2} = \sum\limits_{i,j}^{d_{A}} \operatorname{Tr}\left[ \sum\limits_{\alpha,\beta,\gamma,\delta}^{} \phi_{\alpha,\beta}^{*} \phi_{\gamma,\delta} \overline{\langle j | \rho_{\alpha,\beta} |  i \rangle} \Pi_{\beta,\alpha} \langle j | \rho_{\gamma,\delta} |  i \rangle \Pi_{\gamma,\delta} \right]\\
&=\sum\limits_{i,j}^{d_{A}} \sum\limits_{\alpha,\beta,\gamma,\delta}^{} \phi_{\alpha,\beta}^{*} \phi_{\gamma,\delta} \overline{\langle j | \rho_{\alpha,\beta} |  i \rangle} \underbrace{\operatorname{Tr}\left[\Pi_{\beta,\alpha} \langle j | \rho_{\gamma,\delta} |  i \rangle \Pi_{\gamma,\delta} \right]}_{=\delta_{\alpha,\gamma} \delta_{\beta,\delta}}\\
& = \sum\limits_{i,j}^{d_{A}} \sum\limits_{\alpha,\beta}^{} \left| \phi_{\alpha,\beta} \right|^{2} \left| \langle j | \rho_{\alpha,\beta} |  i \rangle \right|^{2} = \sum\limits_{\alpha,\beta}^{} \left| \phi_{\alpha,\beta} \right|^{2} \left\Vert \rho_{\alpha,\beta} \right\Vert_{2}^{2}.
\end{align}

Similarly, we have,
\begin{align}
\operatorname{Tr}_{B}\left[ \mathcal{E} \left( | i \rangle \langle  j | \otimes I_{B} \right) \right] = \sum\limits_{\alpha,\beta}^{} \phi_{\alpha,\beta} \langle j | \rho_{\alpha,\beta} |  i \rangle \underbrace{\operatorname{Tr}_{B}\left[ \Pi_{\alpha,\beta} \right]}_{=\rho_{\alpha,\beta}} = \sum\limits_{\alpha,\beta}^{} \phi_{\alpha,\beta} \langle j | \rho_{\alpha,\beta} |  i \rangle \rho_{\alpha,\beta}.
\end{align}

Then,
\begin{align}
\left\Vert \operatorname{Tr}_{B}\left[ \mathcal{E} \left( | i \rangle \langle  j | \otimes I_{B} \right) \right] \right\Vert_{2}^{2} = \sum\limits_{\alpha,\beta,\gamma,\delta}^{} \phi_{\alpha,\beta}^{*} \phi_{\gamma,\delta} \langle i | \rho_{\beta,\alpha} |  j \rangle \langle j | \rho_{\gamma,\delta} |  i \rangle \operatorname{Tr}\left[ \rho_{\beta,\alpha} \rho_{\gamma,\delta} \right].
\end{align}
And,
\begin{align}
&\sum\limits_{i,j=1}^{d_{A}}\left\Vert \operatorname{Tr}_{B}\left[ \mathcal{E} \left( | i \rangle \langle  j | \otimes I_{B} \right) \right] \right\Vert_{2}^{2} =  \sum\limits_{i,j=1}^{d_{A}} \sum\limits_{\alpha,\beta,\gamma,\delta}^{} \phi_{\alpha,\beta}^{*} \phi_{\gamma,\delta} \langle i | \rho_{\beta,\alpha} |  j \rangle \langle j | \rho_{\gamma,\delta} |  i \rangle \operatorname{Tr}\left[ \rho_{\beta,\alpha} \rho_{\gamma,\delta} \right]\\
& = \sum\limits_{\alpha,\beta,\gamma,\delta}^{} \phi_{\alpha,\beta}^{*} \phi_{\gamma,\delta} \left| \left\langle \rho_{\alpha,\beta}, \rho_{\gamma,\delta} \right\rangle \right|^{2}.
\end{align}

Putting everything together, we have the desired proof.
\end{proof}

\end{document}